\newcommand\blfootnote[1]{%
  \begingroup
  \renewcommand\thefootnote{}\footnote{#1}%
  \addtocounter{footnote}{-1}%
  \endgroup
}
\newtheorem{theorem}{Theorem}
\newtheorem{proposition}{Proposition}
\newtheorem{lemma}{Lemma}
\newtheorem{definition}{Definition}
\newtheorem{assumption}{Assumption}
\newtheorem{example}{Example}
\theoremstyle{remark}
\newtheorem*{remark}{Remark}
\newcommand{\ind}[1]{\ensuremath{\mathbf{1}_{#1}}}
\newcommand{\RR}{\mathbb{R}}
\DeclareDocumentCommand\Pr{ m g g }{\ensuremath{
    { \IfNoValueTF {#3}
        {   \IfNoValueTF {#2}
          {\mathbb{P}\left[{#1}\right]}
          {\mathbb{P}\left[{#1}\middle\vert{#2}\right]}
        }
        {\mathbb{P}_{#1}\left[{#2}\middle\vert{#3}\right]}
    }
}}
\DeclareDocumentCommand\E{ m g g }{\ensuremath{
    { \IfNoValueTF {#3}
        { \IfNoValueTF {#2}
          {\mathbb{E}\left[{#1}\right]}
          {\mathbb{E}\left[{#1}\middle\vert{#2}\right]}%
        }
        {\mathbb{E}_{#1}\left[{#2}\middle\vert{#3}\right]}
    }
}}
\DeclareMathOperator*{\argmax}{argmax}
\DeclareMathOperator\cx{conv}
\renewenvironment{proof}[1][\proofname] {\par\pushQED{\qed}\normalfont\topsep6\p@\@plus6\p@\relax\trivlist\item[\hskip\labelsep\bfseries#1\@addpunct{.}]\ignorespaces}{\popQED\endtrivlist\@endpefalse}
\title{Optimal Disclosure of Information to Privately Informed Agents}
\author{Ozan Candogan \and Philipp Strack}
\date{}
\begin{document}

\begin{comment}
Todo:
\begin{enumerate}
    \item Add acknowledgements: To Dirk?
    \item redefine SGRM -- no randomization needed
\end{enumerate}
\newpage
\end{comment}

\maketitle
\blfootnote{We want to thank Dan Adelman, Dirk Bergemann,
John Birge,
Alex Frankel,
Nicole Immorlica, 
Emir Kamenica,
Jacob Leshno,
Brendan Lucier,
Stephen Morris, 
Rad Niazadeh,
Lars Stole,
Alex Wolitzky, Kai Hao Yang,  and participants of the EC conference 
for helpful discussions and suggestions.
Ozan Candogan thanks the University of Chicago Booth School of Business for financial support. 
Philipp Strack gratefully acknowledges financial support by the Sloan Foundation.}

\begin{abstract}
We study information design when there are  multiple agents interacting in a game who are privately informed about their types.
Each agent's utility depends on \emph{all} agents' types and  actions, as well as  (linearly) on the state. 
The optimal mechanism asks agents to report their types and then sends a private action recommendation to each agent which depends on all reported types and the state.
We show that there always exists an optimal mechanism which is laminar partitional.
Such a mechanism partitions the state space for each type profile and 
recommends the same action profile for states that belong to the same partition element.
Furthermore, the convex hulls of any two partition elements are such that either one contains the other or they have an empty intersection.
In the single-agent case, each state is either perfectly revealed or lies in an interval in which 
the number of different signal realizations is at most the number of different types of the agent plus two.
A similar result is established for the multi-agent case.

We also highlight the value of screening: without screening the best achievable payoff could be as low as one over the number of types fraction of the optimal payoff. 
Along the way, we shed light on the solutions of optimization problems over distributions subject to a mean-preserving contraction constraint and additional side constraints, which might be of independent interest.
\end{abstract}

\thispagestyle{empty}

\newpage
\setcounter{page}{1}

\section{Introduction}

We study how a designer can use information about a real-valued state to influence the belief and actions of a group of agents who possess private information.
For example, the agents could be competing firms who each decide on the quantity they produce.
The production costs could be each firm's private information and the state could measure the total demand for the product. 

The designer can without loss restrict attention to direct recommendation mechanisms where each agent truthfully reports his type  and then privately observes an action recommendation.
We prove that there always exists an optimal such mechanism with a particularly simple structure:
For each type profile there is a partition of the state space such that
the mechanism  recommends the same action profile for states that belong to the same partition element.
Thus, there exists a (deterministic) function mapping the state and vector of types to 
action recommendations.
Furthermore, the partition is laminar. 
This  implies that the convex hulls of any two partition elements are either nested or they do not overlap.
As a result of the laminar structure, an optimal partition can be completely described by the collection of (smallest) intervals containing the states that induce each action profile recommendation.
%The laminar structure further reduces the space of mechanisms the designer needs to consider.
%\footnote{This condition allows one to rule out common signal structures. For example, a Gaussian signal which equals the state plus an independent normal shock is not partitional as its realization is random conditional on the state.}
This structure is valuable for tractability as it reduces the designer's optimization problem from an uncountably infinite one to an optimization problem over the end points of the aforementioned intervals. %$ enables deriving simple transparent necessary conditions for the optimality of the signal by taking the first-order conditions with respect to changes of the end points of the aforementioned intervals.
%Finally, we show that around each state there exists an interval, such that at most $n+2$ different signals are sent in this interval, where $n$ is the number of possible different realizations of the private information of the receiver.

Finally, we provide a bound on the ``depth'' of  optimal laminar partitions.
In the single agent case the laminar partition structure has depth of at most $|\Theta|+2$,  where $\Theta$ is the set of types of the agent.
That is, the interval associated with an action recommendation overlaps with at most $|\Theta|+1$ other intervals (associated with different action recommendations).
This implies that either (i) a state is perfectly revealed, or (ii) it lies in an interval in which the distribution of the posterior means  admits at most a finite number mass points. 
In the multi-agent case a similar bound on the depth of the laminar partitions can be obtained if the number of possible actions is finite for each agent.
In contrast to the single-agent case, where the bound is independent of the number of actions, this bound depends quadratically on it.
This difference is driven by the fact that while in the single-agent case the action recommendation reveals the partition element in which the state lies;
this is not the case when there are multiple agents.% due to their private information.
%Thereby leaving room for further inferences by the agents and necessitating more complicated signal structures.

Given that the state space is a continuum, it is not a priori clear how to obtain the optimal mechanism in a tractable way.
To address this question, we focus on the finite action case.
We identify a transformation in the single agent case that leads to a finite-dimensional convex program (despite the states and the space of signals being uncountably infinite).
Similarly, in the multi-agent case we derive a finite-dimensional (though not necessarily convex) program.
%In this case, we show that the designer's problem can be formulated as a tractable finite-dimensional convex program (despite the space of signals being uncountably infinite).

Furthermore, we discuss some properties of the optimal mechanism:
Focusing on the single agent case, we prove that  restricting attention to mechanisms that do not screen the agent (and reveal the same information to all types) can be strictly suboptimal and in general achieve only a $1/|\Theta|$ share of the optimal value for the designer.
%, where $|\Theta|$ is the number of types.
This is in contrast to \cite{kolotilin2017persuasion} and \cite{guo2019interval} who show that in the binary action single agent case there is no benefit to screening the agent.

Through an example we illustrate that unlike in classical mechanism design, ``non-local'' incentive compatibility constraints might bind in the optimal mechanism (even if the agent's utility is supermodular in his actions and type).
Finally, under the optimal mechanism the actions of different types need not be ordered for all states. For instance, there are states where the low and the high types take a higher action than the intermediate types.\footnote{This can be leveraged to show that ``nested'' information structures that are optimal in related information design settings with 2 actions are suboptimal \citep[see e.g.,][]{guo2019interval}.}

As a crucial step in obtaining our results, we study optimization problems over distributions, where the objective is linear in the chosen distribution, and a distribution is feasible if it satisfies (i) a majorization constraint as well as (ii) some linear side constraints.
We characterize properties of optimal solutions to such problems. In particular, we show that one can find optimal distributions that redistribute the mass in each interval where the majorization constraint does not bind, to at most $n+2$ mass points, where $n$ is the number of side constraints.
Moreover, there exists a laminar partition of the underlying state space such that the signal based on this laminar partition ``generates'' the optimal distribution.
Our main result is proven by decoupling the information design problem over type profiles into optimization problems under majorization and linear side constraints.  
 Given the generality of such optimization formulations, we suspect that our results may have applications beyond the information design problem studied in the paper. We discuss some immediate applications in Section~\ref{sec:discussion}.

%
% Literature
%
\paragraph{Literature Review}
Following the seminal work by \cite{Kam11}, the literature on Bayesian persuasion studies how a designer can use information to influence the action taken by an agent.
This framework has proven useful to analyze a variety of economic applications, such as the design of grading systems\footnote{\cite{ostrovsky2010information,boleslavsky2015grading,ray2020}.}, medical testing\footnote{\cite{schweizer2018optimal}.}, stress tests and banking regulation\footnote{\cite{inostroza2018persuasion, goldstein2018stress, orlov2018design}.}, voter mobilization and gerrymandering\footnote{\cite{alonso2016persuading,kolotilin2020economics}.},  as well as various applications in social networks\footnote{\cite{candogan2017optimal,candogan2019persuasionEC}.}. %
 For an excellent survey of the literature see \cite{kamenica2019bayesian} and \cite{bergemann2017information}.

Initial papers focused on either the case of a single agent who possesses no private information or the case where the designer uses public signals \citep{Broc07,rayo2010optimal,Kam11,gentzkow2016rothschild}.
\cite{kolotilin2017persuasion} and \cite{guo2019interval} extend this baseline model by considering the single-agent case where the agent possesses private information about his preferences and chooses between \emph{two} actions.
%the designer maximizes the probability with which the receiver takes one of \emph{two} actions.
Assuming that the agent's payoff is linear and additive in the state, \cite{kolotilin2017persuasion} show that it is without loss to restrict attention to ``public'' signals,
which do not screen the agent and induce the same signal realization regardless of the type of the agent.
%in the sense that every outcome that can be implemented with private signals can also be implemented with public signals.
\cite{guo2019interval} consider 
%the case where the agent possesses private information about the state. They consider 
a general monotone utility of the designer and the agent, but maintain the assumption of binary actions.
They show that even though not every outcome that can be implemented with private signals can also be implemented with public signals, it is nevertheless true that the \emph{designer-optimal} outcome can always be implemented with public signals. % -- as long as there are only two actions. 
We complement this line of the literature by studying the case where the agent can potentially choose among \emph{more than two actions} and find in contrast with the binary action case that public signals could yield a payoff that is as low as one over the number of types fraction of the optimal one.\footnote{\cite{kolotilin2017persuasion} also provide an example showing that with more than $2$ actions restricting attention to public signals may result in a payoff loss (see online Appendix A of their paper). We strengthen this insight and in Section~\ref{subse:privatePublic} 
we establish that the maximal payoff loss due to focusing on public signals is one over the number of types fraction of the optimal one. Moreover, we show that this bound is tight.}
%provide a tight characterization of the maximal payoff loss due to focusing on public signals.}
%Section \ref{subse:privatePublic} provides the first results in this direction and illustrates that  the ``worst-case'' outcomes, where the designer who uses public signals effectively ignores all types but one, can materialize.}

%We maintain the assumption that the agent's payoff is linear in the state . In this setting, we characterize the structure of the mechanism that optimally screens the agent for his private information.

\cite{bergemann2013robust,bergemann2016bayes} consider information revelation to multiple agents and introduce the notion of ``Bayes correlated equilibria". % \citep{bergemann2013robust,bergemann2016bayes}. 
Bayes correlated equilibria characterize the set of all outcomes that can be induced in a given game by revealing a private signal to each agent.
Thus, Bayesian persuasion problems can be solved by maximizing over the set of Bayes correlated equilibria.
While the basic concept does not allow for private information one can extend to the case with screening and private information \citep[see Definition 2 in][]{bergemann2017information}.
%it can be extended to account for this case. % \citep[c.f. Section 6.1 in][]{bergemann2017information}.
%In this paper they focus on the case where the private information of an agent is about the state and does not enter his utility directly.
%\cite{bergemann2016bayes} 
%also consider the case where the agents have private information.
%about the state. 
In this case, the private information is about the state, and hence an agent's payoff depends on his private information only through the state. 
As the designer learns the state once it is realized, she will be better informed about the agents' utilities than themselves.
While the formulation is present in the literature, as far as we know the structural properties of the optimal mechanisms are not well understood in the multiple agent case with private information.
%This comment also applies to the more general case, where the utilities depend on the types of the agents.
%\cite{bergemann2016information}
%While this provides a general characterization of incentive compatible mechanisms.
%In our setting we use a different result due to 
In the present paper, we contribute to this literature in two ways:
First, we allow the utility of an agent to directly depend on his private information,
thereby relaxing the assumption that the designer is better informed than the agents  --
which might be economically restrictive in some settings.
Second, we consider a continuum of states and focus on quasi-linear utilities which allows us to   describe optimal mechanisms more explicitly in terms of laminar partitional signals.\footnote{\label{fot:linearity}Quasi-linearity assumption is commonly made in the literature. See for instance \cite{ostrovsky2010information,ivanov2015optimal,gentzkow2016rothschild,kolotilin2017persuasion,kolotilin2018optimal}. For a more detailed discussion of this setting and its economic applications see Section 3.2 in \cite{kamenica2019bayesian}.}

%the state  belongs to a continuum and the designer's payoff depends on the induced posterior mean.
Without private information, the approaches in \cite{bergemann2016bayes,kolotilin2018optimal,dworczak2019simple},
can be used to characterize the optimal information structure.
These approaches lead to infinite-dimensional optimization problems even if there is a single agent with finitely many actions.
When there is a single agent, an alternative approach due to \cite{gentzkow2016rothschild} is to associate a convex function with each information structure, and cast the information design problem as an optimization problem over
all convex functions that are sandwiched in between two convex functions (associated with the full disclosure and no-disclosure information structures).
This also yields an infinite-dimensional optimization problem.
In contrast,  we provide a finite-dimensional optimization formulation that is applicable with multiple privately-informed agents and finitely many actions.
This formulation is also convex when there is a single agent, thereby providing a tractable framework for obtaining optimal mechanisms.

The aforementioned ``sandwiching'' constraint is equivalent to a majorization constraint restricting the set of feasible posterior distributions.
\cite{arieli2019optimal} and \cite{kleiner2020extreme} characterize the extreme points of this set.
As also observed in \cite{candogan2019optimality,candogan2019persuasionEC}, this characterization implies that in the single agent case without private information one can restrict attention to signals where each state lies in an interval such that for all states in that interval at most $2$ messages are sent. 
There are two additional critical challenges in our setting.
First, unlike earlier work, one needs to deal with additional constraints that stem  from the screening problem.
Second, since there are multiple agents, the information revealed to one agent can influence the actions taken by others,
which intricately couples the information design problems for different agents.
These challenges require a novel approach and render the information structures identified in the earlier literature suboptimal.
%are no longer optimal.
%We overcome both challenges and 
%Furthermore, we 
%show that the optimal  information structures can be implemented through laminar partitional signals.

%We show that each state lies in an interval in which at most $n+2$ messages are used, where $n$ is the number of types of the agent.
%Furthermore, we show that in the case where the agent has finitely many actions, even if the state space is continuous and the agent has %private information, 
%the optimal menu can be obtained by solving a simple and tractable \emph{finite-dimensional} convex optimization problem.

\section{Model}
We consider an information design setting in which a designer (she) tries to influence the action taken by privately informed agents (he/they), indexed by $i \in \{1,\ldots,|N|\} = N$.
%a set  $N$ of agents (he/they).
%$n$ privately informed agents  (he/they), indexed by $i \in \{1,\ldots,n\} = N$.

\paragraph{States and Types}
We call the information controlled by the designer the state $\omega \in \Omega$ and the private information of agent $i$ his type $\theta_i \in \Theta_i$.
The state $\omega$ lies in an interval $\Omega = [0,1]$ and is distributed according to the (cumulative) distribution $F: \Omega \rightarrow [0,1]$, with density $f \geq 0$.\footnote{The assumption that the state lies in $[0,1]$ is a normalization that is without loss of generality
for distributions with bounded support as we can rescale the state (without affecting the linearity of the utility function imposed subsequently). 
%Our arguments can be easily extended to unbounded distributions with finite mean.
Furthermore, while it is important that $F$ has no mass-points, all our result go through for any continuous distribution (which might not admit a density).}
Each agent's type $\theta_i$ lies in a finite set $\Theta_i$ and we denote by $\phi(\theta) > 0$ the probability that the type vector equals $\theta = (\theta_1,\ldots,\theta_{|N|}) \in \Theta \subseteq \prod_{i \in N} \Theta_i$.
We assume that the state $\omega$ and the types $\theta$ are independently distributed, but allow for arbitrary correlation between the types of different agents.

\paragraph{Signals and Mechanisms}

A \emph{direct mechanism} $\mu: \Theta \times \Omega \to \Delta(S)$ maps a type profile $\theta$ and a state $\omega$ to a conditional distribution $\mu^\theta( \cdot | \omega)$ over the set of signal realizations $S$.\footnote{Restricting attention to direct mechanisms is without loss of generality by the revelation principle.}
We denote by $\mu^\theta$ the signal\footnote{We follow the convention of the Bayesian persuasion literature and call a Blackwell experiment a signal.} associated with the type vector $\theta$, i.e.,
\[
    \mu^\theta(\cdot | \omega) = \Pr{s \in \cdot}{\omega,\theta}\,.
\]
Each signal realization $s \in S = \prod_{i \in N} S_i$ is $|N|$-dimensional.
The $i$-th coordinate $s_i$ is privately observed by agent $i$, but we allow for the signals observed by different agents to be correlated.
We restrict attention to signals for which Bayes rule is well defined,\footnote{Formally, this requires that $\Pr{\mu}{\cdot}{s}$ is a regular conditional probability.} and denote by $\Pr{\mu}{\cdot}{s}\in \Delta(\Omega)$ the posterior distribution induced over states by observing the signal realization $s$ in the mechanism $\mu$, and by $\E{\mu}{\cdot}{s}$ the corresponding expectation.
When there are finitely many signal realizations
%Given signal realization $s$
\begin{equation}\tag{Bayes Rule}
    \Pr{\mu}{   \omega \leq x }{s} =  \frac{\sum_\theta \phi(\theta) \int_0^x \mu^\theta(\{s\}|\omega) dF(\omega) }{ \sum_\theta  \phi(\theta) \int_0^1 \mu^\theta(\{s\}|\omega) dF(\omega)} \,.
\end{equation}

% We restrict attention to signals for which Bayes rule is well defined,\footnote{Formally, this requires that $\Pr{\mu}{\cdot}{s}$ is a regular conditional probability.} and denote by $\Pr{\mu}{\cdot}{s}\in \Delta(\Omega)$ the posterior distribution induced over states by observing the signal realization $s$ in the mechanism $\mu$, and by $\E{\mu}{\cdot}{s}$ the corresponding expectation.
% %For finitely many signal realizations
% Given signal realization $s$
% \begin{equation}\tag{Bayes Rule}
%     \Pr{\mu}{   \omega \leq x }{s} = \frac{\int_0^x \mu(\{s\}|\omega) dF(\omega) }{\int_0^1 \mu(\{s\}|\omega) dF(\omega)} \,.
% \end{equation}

\paragraph{The Agents' Actions and Utilities}
After observing his type $\theta_i$, each agent $i\in N$ reports it to the mechanism.
Given the reported  type profile $\theta$ and the state realization, the mechanism draws a signal from the corresponding distribution, and agent $i$  observes the $i$-th coordinate
 $s_i$
of the signal realization.
Then each agent $i$ chooses an action $a_i$ in a compact set $A_i$ to maximize his expected utility
\[
    \max_{a_i \in A_i}\E{u_i(a_i,a_{-i},\omega,\theta)}{ s_i, \theta_i } \,.
\]
We note that the expectation in the above expression is over the state $\omega$, the action taken by other agents $a_{-i}$, and their types $\theta_{-i}$.
If we impose  additional assumptions on the set of action profiles $A=\times_{i\in N} A_i$ we will explicitly mention them, and otherwise we allow it to be finite or infinite.

\paragraph{Recommendation Mechanisms} 

A \emph{direct recommendation mechanism} is a direct mechanism where the signal realization for each agent is an action recommendation, i.e., $S_i = A_i$.
A direct recommendation mechanism is incentive compatible if it is optimal for each agent to report his true type $\theta_i$ and follow the action recommendation, instead of (mis)reporting his type as $\theta_i' \in \Theta_i$ and choosing an optimal action afterwards.
Throughout  without loss we focus on incentive compatible direct recommendation mechanisms.
Formally, denoting by  $\sigma_i:A_i\rightarrow A_i$ the  action policy that maps an action recommendation to an action taken by  agent $i$, the incentive compatibility requirement can be stated as follows:\footnote{When $\theta_i=\theta_i'$ this constraint reduces to the \emph{obedience} constraint, which ensures that it is optimal for agent $i$ to follow the action recommendation.}
\begin{equation}\label{eq:opt-signal}
\begin{aligned}
    \sum_{\theta_{-i}} \phi( \theta ) &\int_\Omega \int_{A_{-i}} u_i(a_i,a_{-i},\omega,\theta ) d \mu^\theta(a | \omega ) d F(\omega) \\
    &\geq \max_{\sigma_i} \sum_{\theta_{-i}} \phi( \theta ) \int_\Omega \int_{A_{-i}} u_i(\sigma_i(a_i),a_{-i},\omega,\theta ) d \mu^{(\theta_i',\theta_{-i})}(a | \omega ) d F(\omega)
\end{aligned}
\end{equation}
for all $i$, $\theta_i,\theta_i'\in \Theta_i $.
%{\color{red}We note that \eqref{eq:obedience} is implied by \eqref{eq:opt-signal} as we allow the agent to report his type truthfully $\theta_i' = \theta_i$.}
One challenge in this environment is that each agent can deviate by simultaneously misreporting his type and taking an action different from the one 
%that would be optimal had he told the truth.
that is recommended by the mechanism.

\paragraph{The Designer's Utility} 

We denote by $v: A \times \Omega \times \Theta \to \RR$ the designer's utility. 
For a given direct recommendation mechanism the designer's expected utility equals
\begin{equation}\label{eq:sender-payoff}
     \sum_{\theta} \phi( \theta ) \int_\Omega \int_{A} v(a,\omega,\theta ) d \mu^\theta(a | \omega ) d F(\omega) \,.
\end{equation}
The designer's information design problem is to pick a direct recommendation mechanism that satisfies \eqref{eq:opt-signal} to maximize \eqref{eq:sender-payoff}.

To make this setting with infinitely many states tractable we further focus on  preferences which are quasi-linear in the state:
\begin{assumption}[Quasi-Linearity]\label{ass:mean-belief}
    The agents' utilities 
    $\{u_i\}$
    and the designer's utility $v$ are quasi-linear in the state, i.e., for $i\in N$ there exist functions $u_{i1},u_{i2},v_1,v_2: A \times \Theta \to \RR$ 
    continuous in $a\in A$  such that
    \begin{align*}
        u_i(a,\omega,\theta) &= u_{i1}(a,\theta) \omega + u_{i2}(a,\theta)\\
        v(a,\omega,\theta) &= v_1(a,\theta) \omega + v_2(a,\theta) \,.
    \end{align*}
\end{assumption}
Assumption~\ref{ass:mean-belief} is natural in many economic situations and is commonly made in the literature (c.f. Footnote~\ref{fot:linearity}).\footnote{We also note that the continuity of the payoffs in $a$, and the compactness of $A$ together with Assumption \ref{ass:mean-belief} ensure that the payoffs are bounded, i.e., $|u_i(a,\omega,\theta)|, | v(a,\omega,\theta)| \leq B $ for some $B<\infty$. In what follows, this mild technical condition is used  to change the order of integrals that appear in the designer's and agents' problems.}
For example \cite{kolotilin2017persuasion} assume that there is a single agent who has  two actions $\{0,1\}$,
and that the agent's utility for one action is zero, and for the other action it is the sum of the type and state, which implies that $u_i(a_i,\omega,\theta) = a_i \times ( \omega + \theta )$.

\begin{remark}
Our results generalize to the case where the preferences of all agents and the designer depend linearly\footnote{In the single agent case we could allow $u,v$ to depend non-linearly on the agent's posterior expectation of $h(\omega)$.} on some (potentially) non-linear transformation of the state $h(\omega)$ %, for more details see our working paper \cite{candoganStrackSSRN}
as long as the distribution of $h(\omega)$ admits a density.\footnote{To see this note that for every function $h:\Omega \to \RR$ we can redefine the state to be $\tilde{\omega} = h(\omega)$.} What is crucial for our results is that the agents' belief about the state influences the preference of the designer and the agent only through the same real valued statistic.
%\footnote{
%For persuasion problems with a \emph{multidimensional} state see \cite{dworczak2019simple} and \cite{malamud2021optimal}.}
\end{remark}

\subsection{A Motivating Example} \label{se:motivatingExample}

We  next provide an economic example to illustrate the model.
Two firms $1,2$ producing a good choose production quantities in $A_1=A_2=\{0,1,2\}$. 
The price of the product depends on the total production $a_1+a_2$ by the firms, and is given by $d - (a_1+a_2)$, where $d = (4+8 \omega)$ is the demand for the good and $\omega\sim U([0,1])$ is the state.
The unit production cost of each firm is its private type and equals $4$ or $6$ with equal probability independently of each other and the state (i.e., $\Theta = \{(4,4),(4,6),(6,4),(6,6)\}, \phi \equiv \nicefrac{1}{4}$).
The consumer surplus ($CS$) and total firm profits ($FP$) are respectively given by $CS=(a_1+a_2)^2/2$, and \[FP=
\big((4+8 \omega) - (a_1+a_2)\big)(a_1+a_2) -  a_1\theta_1-a_2\theta_2.\]

We are interested in characterizing the combinations of consumer surplus and firm profits that can be induced by a mediator that facilitates information exchange between the firms. %
To do so,
we numerically derive the information structures maximizing different weighted combinations of $CS$ and $FP$.\footnote{See Appendix \ref{app:optFormMultiAgent} for details on the numerical computations for this example.}
The results are illustrated in Figure \ref{fig:pareto}.

\begin{figure}[h!]
\begin{subfigure}[t]{.7\textwidth}
  \centering
  % include first image
  \caption{}
  \includegraphics[width=.9\linewidth]{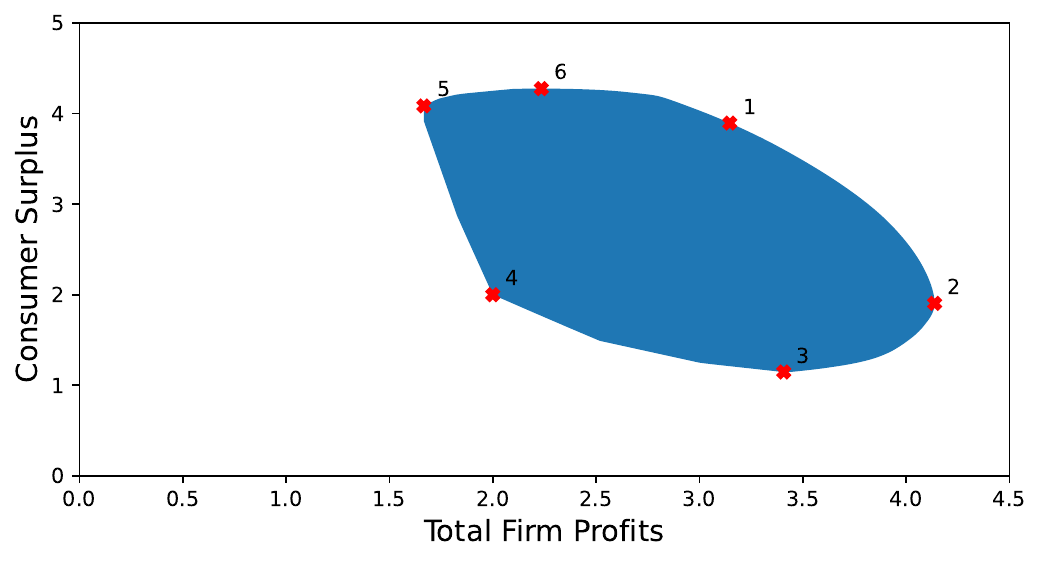}
  \label{fig:sub-first}
\end{subfigure}
\begin{subfigure}[t]{.29\textwidth}
  \centering
  % include second image
  \caption{}
 \raisebox{.8cm}{\includegraphics[width=\linewidth]{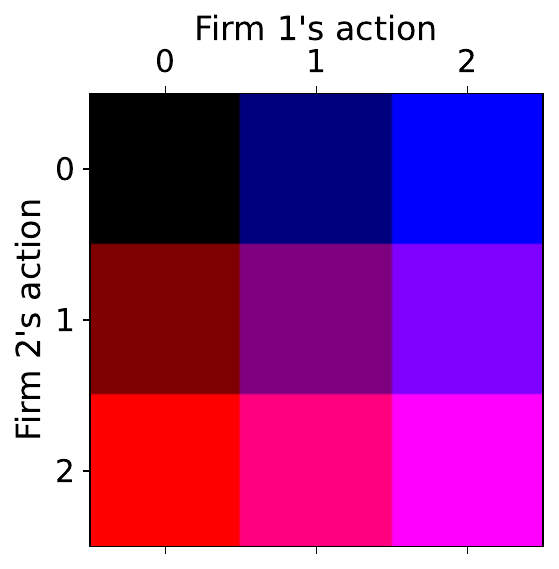}}
  
  \label{fig:sub-second}
\end{subfigure}
\newline
\vspace{.3cm}

\caption*{(c)}
\hspace{-.5cm}
  \begin{tabular}{c c}
        \includegraphics[width=.5\linewidth]{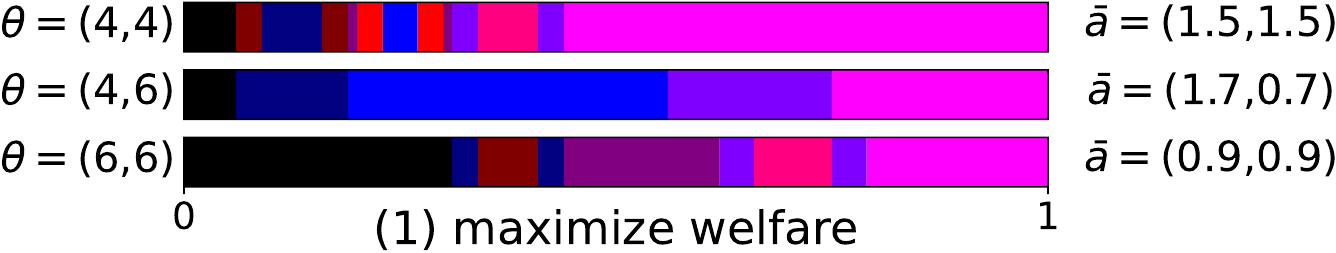}  & \includegraphics[width=.5\linewidth]{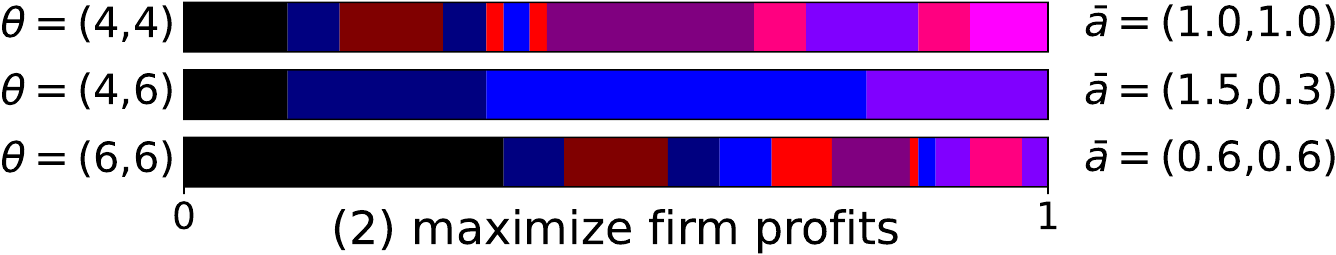}  \\
         \includegraphics[width=.5\linewidth]{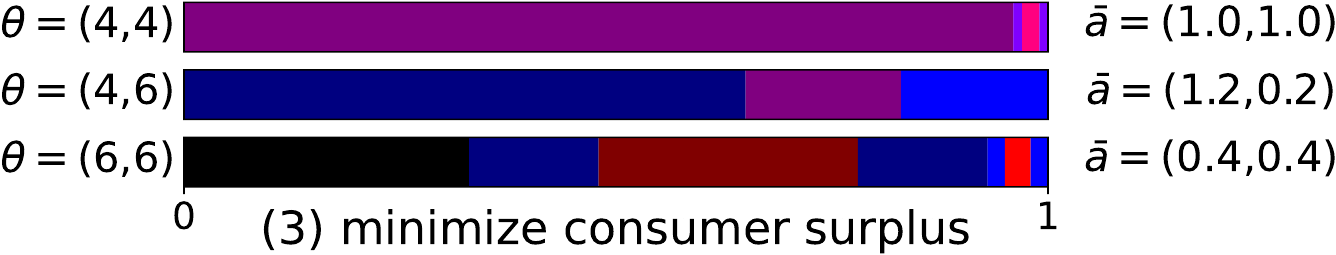}  & \includegraphics[width=.5\linewidth]{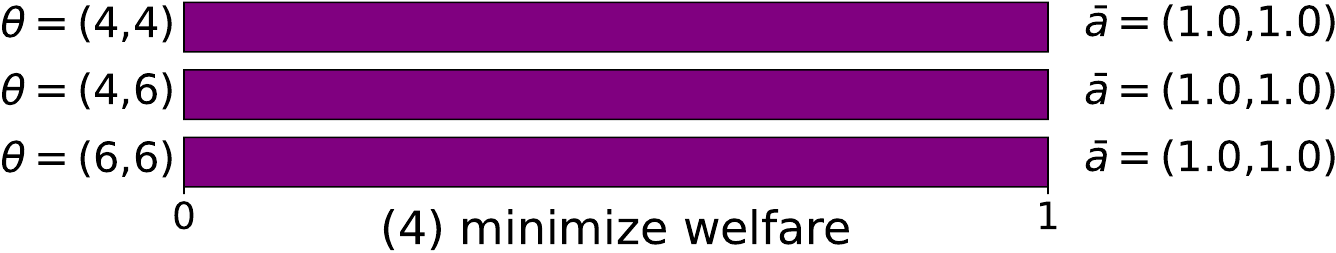}\\
         \includegraphics[width=.5\linewidth]{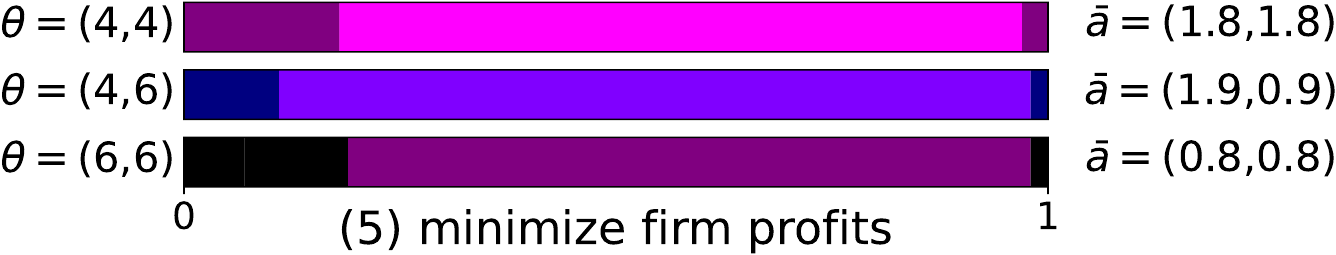}  & \includegraphics[width=.5\linewidth]{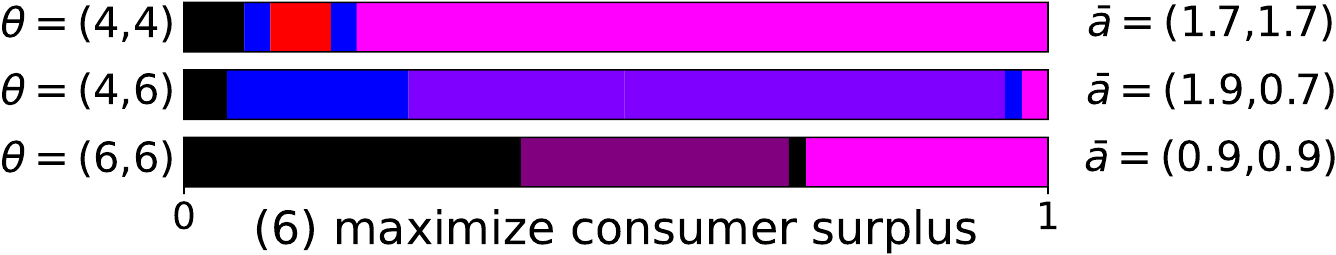}
    \end{tabular}  
    \label{fig:sub-third}
   \caption{(a) $CS$ and $FP$ achievable under different information structures. We highlight 6 points on this region that 
    achieve:
    (1) maximum welfare $CS+FP$,
    (2) maximum $FP$,
    (3) minimum $CS$,
    (4) minimum welfare $CS+FP$,
    (5) minimum $FP$,
    (6) maximum $CS$.
    (b) The colors assigned to different strategy profiles.
    (c) Optimal information structures. Here, 
    for any type profile $\theta$ we denote by
    $\bar{a}$ the vector of expected production quantities for both firms.
    }
    \label{fig:pareto}
    \end{figure}

We highlight $6$ points that are extremal in terms of achievable $CS$, $FP$, or welfare ($CS+FP$) and display the corresponding optimal information structures.
As our main result establishes we can restrict attention to simple (laminar) signals where for each type profile the state space is partitioned such that in each partition element the same actions are taken by the agents.
In this example this means that for each cost vector of the firms, the interval of possible demands is partitioned such that in each partition element the output vector of the firms is constant.
%In Figure \ref{fig:pareto}~(c) we display these partitions for of the demand space.
In Figure~\ref{fig:pareto} (b) and (c), we associate with each strategy profile a color, and use them to present the information structures that achieve these extremal points.
%Note that each information structure contains a different signal for each type profile. 
Given the symmetry between firms, the 
strategy profiles
associated with type profiles $(4,6)$ and $(6,4)$ are  same up to a permutation of the agents' identities. To avoid redundancy we only display one of them.
%closely related: if at state $\omega$ the former induces action profile $(a_1,a_2)$, the latter induces action profile $(a_2,a_1)$. To avoid redundancy, in Figure \ref{fig:pareto} we only include the signal associated with type profile $(4,6)$.

A few economic observations are worth highlighting:
First, when firms' costs are lower the expected production quantities are higher. 
While this monotonicity holds when taking the expectation over demand levels, production quantities are not monotone in the production cost for a fixed demand. For example when maximizing the $CS$ in (6) for some of the high demand levels the total production is higher for the production costs $(6,6)$ than it is for the production cost $(4,6)$.
%Second, laminar families up to depth 3 appear in the optimal signals (e.g., consider (2) when the type profile is $\theta=(6,6)$).\footnote{It is straightforward to obtain examples with deeper laminar families, but the induced signals become harder to interpret.}
%Consequently, the production quantities (for a given type profile) do not vary monotonically in the state.
Interestingly, the worst information structure in terms of welfare (4) is when no information about the cost of their competitors and the demand is revealed to the firms. 
%That is, if the firms do not receive additional information about the state the sum of total profits and consumer surplus suffers. 
Conversely, maximizing welfare, $FP$ or $CS$ leads to nontrivial laminar partitions (1,2,6).
%More broadly, the Pareto dominated outcomes (3), (4), (5) in general seem to rely on fewer distinct signal realizations than others.
To maximize firms' profits (2) the information structure induces more extreme asymmetric outcomes 
(e.g., $(0,2)$ or $(2,0)$ where one firm produces two units and the other produces zero)
relative to consumer surplus maximizing information structures (where balanced outcomes such as $(1,1)$   become more common). This leads to a larger number of distinct 
signal realizations in case of profit maximization.
% Intuitively, firm profits are concave in the production quantities\footnote{Here we have a discrete action space, and in this discussion we consider the continuum extension of consumer surplus and firm profits.}, whereas consumer surplus is convex. Thus, ignoring private types and discrete action space, inducing a distribution over strategy profiles that is more dispersed yields higher total profits but lower consumer surplus.
% The structure of the optimal signals in our example is consistent with this intuition.
Finally, based on the information structure, the consumer surplus and firms' profits vary significantly and there is more than a factor of two between the smallest and largest values of the aforementioned quantities.% It is easy to construct examples where the gap is larger (effectively by increasing the variance of the intercept of the price curve relative to its mean).

\section{Analysis} \label{se:analysis}

Our analysis proceeds in several steps. 
First, we show that given a direct recommendation mechanism the designer can achieve the same payoff by using what we refer to as state garbling recommendation (SGR) mechanism.
This reduction is consequence of our restriction to quasi-linear utilities and an auxiliary step in proving our main result.
Second, we show that  optimal SGR mechanisms can be characterized through problems which are decoupled across type profiles (but not across agents\footnote{Note that there is no similar decoupling across agents, due to the strategic interactions among them.}).
Each of the decoupled problems involves  optimization over posterior mean distributions under linear side constraints.
Third, we establish that  solutions to such problems can always be induced by 
constructing a laminar partition and pooling states according to that partition. 
Finally, this implies our main result that there exists an optimal mechanism which 
for each type profile constructs a laminar partition of the state space and 
recommends the same action profile for states that belong to the same partition element.

\subsection{State Garbling Recommendation Mechanisms}\label{se:sgrm}

An SGR mechanism is an 
incentive compatible direct recommendation mechanism that
for each type profile $\theta$
%For each type profile $\theta$ an SGR mechanism 
has the following structure:
\begin{itemize}
    \item[(i)] The designer chooses an auxiliary signal $\nu^\theta$ whose realization $m \in [0,1]$ equals the induced posterior mean, i.e. $\E{\nu^\theta}{\omega}{m}=m$.    
    \item[(ii)] For each realized posterior mean  she chooses a distribution over recommended action profiles such that no action profile is recommended with positive probability at two different posterior means.
\end{itemize}
%As before, we restrict attention to incentive compatible direct recommendation mechanisms.

We next argue that due to our assumption of quasi-linear utilities the restriction to SGR mechanisms is without loss.\footnote{Note that without the  restriction in (ii) the set of mechanisms described above would equal the set of direct recommendation mechanisms as the designer could always chose a fully revealing signal in  (i). Due to restriction (ii), SGR mechanisms constitute a subset of the direct recommendation mechanisms.}
We start with an arbitrary direct recommendation mechanism $\mu$.
Let $m_{a,\theta} = \E{\mu^\theta}{\omega}{a}$ denote the mean of an outside observer's posterior belief about the state after observing the action profile $a \in A$ being recommended given the type profile $\theta \in \Theta$.
%\[
%    m_{a,\theta} 
    %= \frac{\int_0^1 x \, \mu^\theta(\{a\}|\omega) dF(\omega)}{\int_0^1 \mu^\theta(\{a\}|\omega) dF(\omega)} \,.
%\]
Note that this posterior belief never becomes known to the agents as they neither observe the complete type profile nor the recommended action profile.
Define $G^\theta:[0,1] \to [0,1]$ to be the cumulative distribution of posterior means given the type profile $\theta$
\[
    G^\theta ( x ) = \mathbb{P}_{\mu^\theta}\left[{m_{a,\theta} \leq x} \right]\,.
\]
Define $q^\theta \in \Delta(A)$ to be the distribution over action profiles conditional on   type profile $\theta$, i.e.,
\[
    q^\theta( B ) = \int_0^1 \mu^\theta(B|\omega) dF(\omega),
\]
 for  $B\subseteq A$.
Let $q^\theta(\cdot|x) \in \Delta(A)$   be the distribution over action profiles conditional on the posterior mean $m_{a,\theta}$ associated with the action profile being equal to $x \in [0,1]$
\[
    q^\theta( B | x ) = \frac{\int_B \ind{m_{a,\theta} = x} \, dq^\theta(a)}{ \int_A \ind{m_{a,\theta} = x} \, dq^\theta(a) } \,.
\]

Consider the  mechanism defined by the above tuple $(G,q)$ where $\nu^\theta([0,x])=G^\theta(x)$ and $q=(q^\theta)_\theta$ is the distribution over actions conditional on the posterior mean.
In this mechanism, given the type profile $\theta$, the designer first draws a signal realization $m$ according to $G^\theta$, and then recommends an action profile according to $q^\theta(\cdot|m)$. 
We claim that this is a valid SGR mechanism.
Note that it is possibly different from the direct recommendation mechanism we started with.

In this  mechanism -- assuming agents follow action recommendations --
the expected payoff of the designer given the type profile $\theta$ 
satisfies
%can be rewritten as
\begin{equation} \label{eq:designerPayoff}
\begin{aligned}
    \int_\Omega & \int_A v(a,\omega,\theta) d\mu^\theta(a|\omega) d F(\omega) = \int_{A} \E{\mu^\theta}{ v(a,\omega,\theta) }{a} \, dq^\theta(a)\\
    &= \int_{A}  v(a,\E{\mu^\theta}{\omega}{a},\theta)  \, dq^\theta(a) 
    =  \int_{A}  v(a,m_{a,\theta},\theta)  \, dq^\theta(a) \\
    &= \int_\Omega \int_A v(a,m,\theta) \, dq^\theta(a|m) \, d G^\theta(m) \,.
\end{aligned}
\end{equation}
The first equality leverages the boundedness of the payoffs and changes the order of integration.
The second one follows from the quasi-linearity of $v$, the third from the definition of $m_{a,\theta}$, and the forth from the definition of $q^\theta$.

Using the same argument it can be readily seen that for any reported and true type profiles $\theta',\theta$ such that $\theta'_{-i}=\theta_{-i}$ when agents other than $i$ follow their action recommendations we have
\begin{equation} \label{eq:SGRMpayoffs}
    \int_\Omega  \int_{A_{-i}} u_i(a_i',a_{-i},\omega,\theta) d\mu^{\theta'}(a|\omega) d F(\omega) =   \int_\Omega \int_{A_{-i}} u_i(a_i',a_{-i},m,\theta) \, dq^{\theta'}(a|m) \, d G^{\theta'}(m) \,,
        % \int_\Omega  \int_{A_{-i}} u_i(a_i',a_{-i},\omega,\theta) d\mu^{\theta'}(a|\omega) d F(\omega) =   \int_\Omega \int_{A_{-i}} u_i(a_i',a_{-i},m,\theta) \, dq^{\theta'}(a|m) \, d G^{\theta'}(m) \,,
\end{equation}
where the left (right) hand side is the expected payoff of agent $i$ from observing action recommendation $a_i$ and taking action $a_i'$ in the initial (new) mechanism.
Since for any action recommendation the payoffs of agents coincide under the two mechanisms, it follows that the mechanism defined by the $(G,q)$ tuple satisfies incentive compatibility, and hence is a valid  SGR mechanism. Together with \eqref{eq:designerPayoff}, this observation implies that the two mechanisms
also yield the same payoff to the designer, and it is without loss to restrict attention to SGR mechanisms.

% \footnote{Note that the agent's payoff is actually $e_{i, \theta_i,\theta_i', \sigma_i}
% %d_{i,\hat\theta}
% /\sum_{\theta_{-i}} \phi(\theta_i,\theta_{-i})$. For the subsequent discussion, the normalization in the denominator does not play a role. Thus, with some abuse of terminology, we refer to $e$ (and later to $\gamma$) as the agent's payoff.} 

%we can also restate the agents' payoffs. 
The expected payoff expression in the right hand side of \eqref{eq:SGRMpayoffs}
 can be used to obtain a characterization of incentive compatibility of SGR mechanisms. 
 Specifically,
% , the incentive compatibility condition
%  can be leve
% a similar expression to the right hand side of \eqref{eq:SGRMpayoffs}, the
% The payoff structure  argument presented can also be leveraged to identify SGRMs that are incentive compatible,
% Specifically, 
 the SGR mechanism defined by $(G,q)$ is incentive compatible
%\eqref{eq:opt-signal} 
if and only if for all $i\in N, \theta_i \in \Theta_i$
\begin{equation} \tag{IC}
\label{eq:ICNew}
    \begin{aligned}
     \sum_{\theta_{-i} \in \Theta_{-i}}& \phi(\theta)  \int_\Omega \int_A u_i(a,m,\theta) \, dq^\theta(a|m) \, d G^\theta(m) \\
    &   \geq \max_{\sigma_i,\theta_i'} \sum_{\theta_{-i} \in \Theta_{-i}} \phi(\theta)
    \int_\Omega
    \int_A u_i(\sigma_i(a_i),a_{-i},m,\theta) \, dq^{(\theta_i',\theta_{-i})}(a|m) \, d G^{(\theta_i',\theta_{-i})}(m) \,.
    \end{aligned}
\end{equation}

\paragraph{Feasible Posterior Mean Distributions}
Given that the designer's payoff and the incentive compatibility constraint can be expressed in terms of the distributions over posterior means $G=(G^\theta)$ and distributions over action profiles $q=(q^\theta)$ conditional on posterior means,
it may be possible to reformulate the designer's problem in terms of these quantities.
A natural question is thus which distributions over posterior means the designer can induce using a signal. 
An important notion to address this question is  \textsl{mean preserving contractions} (MPC).
A distribution over states $H:\Omega \to [0,1]$ is an MPC of a distribution $\tilde{H}: \Omega \to [0,1]$, expressed as $\tilde{H} \preceq H$, if and only if for all $\omega$ 
\begin{equation}\tag{MPC}\label{eq:MPC}
        \int_\omega^1 H(z) dz \geq \int_\omega^1 \tilde{H}(z) dz,
\end{equation}
and the inequality holds with equality for $\omega=0$. 

To see that $F \preceq G^\theta$ is necessary for $G^\theta$ to be the distribution of the posterior mean induced by some signal note that for every convex function $h:[0,1] \to \RR$ we have that 
\[
    \int_0^1 h(z) dF(z) = \E{ h( \omega)  } = \E{ \E{\mu}{ h(\omega) }{s} } \geq \E{ h(\E{\mu}{ \omega }{s}) } = \int_0^1 h(z) dG^\theta(z) \,.
\]
Here, the second equality is implied by the law of iterated expectations and the inequality follows from Jensen's inequality.
Taking $h(z) = \max\{0, z - \omega\}$   then yields that $F \preceq G^\theta$.
This condition is not only necessary, but also sufficient, see, e.g.,
 \cite{blackwell1950comparison,blackwell1954girshick,rothschild1970increasing} and \cite{gentzkow2016rothschild} for an application to persuasion problems. 

\begin{lemma}\label{lem:feasibility}
There exists a signal that induces the distribution $G^\theta$ over posterior means if and only if $F \preceq G^\theta$.
\end{lemma}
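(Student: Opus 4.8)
The plan is to establish the two implications separately. Necessity of $F \preceq G$ is already contained in the discussion preceding the lemma — Jensen's inequality applied to the convex functions $z \mapsto \max\{0, z - \omega\}$ shows that the induced posterior-mean distribution must stand in the stated order relative to $F$ — so the real work is the converse, \emph{sufficiency}: assuming $F \preceq G$, I would exhibit a signal $\mu$ whose induced posterior-mean CDF is exactly $G$.

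The main device is the martingale characterization of the convex order (Strassen's theorem): $F \preceq G$ holds precisely when one can build, on a common probability space, random variables $\omega \sim F$ and $m \sim G$ with $\E{\omega}{m} = m$ almost surely. Granting such a coupling, I would take $\mu(\cdot \mid \omega)$ to be a regular version of the conditional law of $m$ given $\omega$. Because $\omega$ and $m$ live on the standard Borel space $[0,1]$, this is a genuine Markov kernel, i.e.\ a signal for which Bayes' rule is well defined. By construction the message is distributed as $m$, hence as $G$; and for almost every realization $s$ the posterior $\Pr{\mu}{\cdot}{s}$ is the conditional law of $\omega$ given $m = s$, whose mean is $\E{\omega}{m = s} = s$. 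Consequently $\E{\mu}{\omega}{s} = s$ almost surely, so the CDF of $\E{\mu}{\omega}{s}$ is $\Pr{m \le x} = G(x)$, as needed.

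To make the argument self-contained rather than quoting Strassen, I would obtain the coupling by approximation. First pass to finitely supported targets: choose distributions $G_k$ with finite support, still satisfying $F \preceq G_k$, with $G_k \Rightarrow G$ — e.g.\ partition $[0,1]$ into successively finer grids and replace the mass $G$ assigns to each cell by a point mass at its conditional mean, which is a mean-preserving operation and so keeps $G_k$ in the convex order with $F$. For a finitely supported $G_k$ with atoms $m_1 < \dots < m_\ell$ and weights $\lambda_j = G_k(\{m_j\})$, the relation $F \preceq G_k$ is equivalent to a decomposition $F = \sum_j \lambda_j F_j$ into sub-probability measures with $\int z\, dF_j(z) = m_j$; this decomposition is exactly the coupling (draw $m = m_j$ with probability $\lambda_j$, then $\omega$ from $F_j$). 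Finally I would let $k \to \infty$: since all messages lie in the compact interval $[0,1]$ the joint laws are tight, a weak subsequential limit exists, and uniform integrability (automatic on $[0,1]$) lets one carry the identity $\E{\omega}{m} = m$ to the limit, producing the coupling for $G$ itself.

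I expect the coupling/martingale step to be the only real obstacle: turning the family of integral inequalities that define $F \preceq G$ into an actual joint law with the conditional-mean property is the substantive content, whether one invokes Strassen or runs the discretization-and-limit argument above. Everything downstream — verifying that the kernel is a legitimate signal for which Bayes' rule holds, and reading off that its induced posterior-mean law is $G$ — is routine, as is the necessity direction. The one technical point I would be careful about in the self-contained version is the approximation step: choosing the grids so that each $G_k$ genuinely remains in the convex order with $F$, and checking that no mass escapes in the weak limit; both are straightforward on the bounded state space $[0,1]$ but deserve an explicit line.
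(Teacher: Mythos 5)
The paper does not give its own proof of this lemma: necessity is sketched in the text just before the statement via Jensen's inequality (exactly as you do), and sufficiency is delegated to the cited literature (Blackwell 1950, 1954; Rothschild--Stiglitz 1970; Gentzkow--Kamenica 2016). Your argument is correct and is precisely the standard route those references rely on: Strassen's martingale coupling produces $(\omega,m)$ with $\omega\sim F$, $m\sim G$, $\E{\omega}{m}=m$ a.s., and the conditional law of $m$ given $\omega$ is the required kernel, with posterior mean $\E{\omega}{m=s}=s$ so the induced posterior-mean CDF is $G$. Your discretization-and-limit alternative is also sound on $[0,1]$: collapsing $G$ to cell-conditional means gives a finite $G_k$ with $F\preceq G\preceq G_k$, tightness supplies a weak subsequential limit, and bounded convergence transports the martingale identity. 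The one thing worth flagging is that the finite-support step --- the equivalence of $F\preceq G_k$ with a decomposition $F=\sum_j\lambda_j F_j$, $\int z\,dF_j(z)=m_j$ --- is essentially the finite case of Strassen's theorem (equivalently, the Hardy--Littlewood--P\'olya/Blackwell dilation result), so the ``self-contained'' route relocates rather than removes the substantive content, which you already acknowledge. Since the paper proves nothing here and simply cites, your proposal fills a gap rather than diverging from it, and it does so with the canonical argument.
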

This result readily implies that a vector of type profile dependent posterior mean distributions $(G^\theta)_{\theta \in \Theta}$ is feasible if and only if $F \preceq G^\theta$ for all $\theta \in \Theta$.

\paragraph{Optimal SGR Mechanisms}

Combining the characterization of incentive compatibility from \eqref{eq:ICNew} and feasibility from Lemma~\ref{lem:feasibility} we next provide a characterization of optimal SGR mechanisms.
\begin{proposition}\label{prop:optimal-mechanism}
An SGR mechanism defined by $(G,q)$ is incentive compatible  and maximizes the designer's payoff if and only if $(G,q)$ solve
\begin{equation} 
\label{eq:optNewForm}
\tag{OPT}
    \begin{aligned}
         \max_{G,q}& \quad \sum_{\theta \in \Theta} \phi(\theta)\int_\Omega \int_A v(a,m,\theta) \, dq^\theta(a|m) \, d G^\theta(m) \\
         s.t. & \quad\eqref{eq:ICNew} \quad \& \quad F \preceq  G^\theta \quad \forall \theta.
    \end{aligned}
\end{equation}
\end{proposition}

%The above problem is a simplification of the original information design problem.
%Instead of optimizing over signals, which specify the distribution over signal realizations \textsl{conditional on each state} the above formulation involves only the \textsl{unconditional} distributions over posterior means.
%Still, it  also requires optimization over distributions over recommended action profiles for each posterior mean level. However, as we discuss subsequently (see Section XYZ and Appendix XYZ), in important cases this task trivializes, enabling one to obtain the optimal information structure via optimization only over posterior mean distributions.
One of the main challenges in this optimization problem is that even for a fixed $q$ the incentive compatibility constraint induces a strong interdependence among the components of $G$, which makes it impossible to optimize over them separately.
%it involves maximization over a vector of distributions $(G^\theta)_\theta$ where the set of feasible components is strongly interdependent due to the incentive constraint \eqref{eq:ICNew}.
This interdependence is a natural economic feature of the multi-agent problem with private information as the designer cannot pick the action recommendation she provides to one agent and type without taking into account the fact that this might give other agents and types incentives to deviate.

\subsection{Decoupling the Problem Across Type Profiles} \label{se:decoupling}

Despite these challenges, we are able to characterize the structure of the optimal SGR mechanisms.
Our approach involves decoupling the designer's problem into $|\Theta|$ sub-problems (one for each type profile $\theta$) each involving optimization over only a single MPC constraint and linear side constraints.
As the argument for doing so and the precise decomposition differ significantly in the single- and multi-agent cases we explain them separately.

\subsubsection{The Single Agent Case} \label{subsubse:singleAgent}

In this section we consider the single agent case $|N|=1$ and thus drop the subindex indicating the agent's identity.
%Our next result decouples the above optimization problem into $n$ independent problems, one for each type $\theta$ of the agent.
We define $\bar{u},\bar{v}:\Omega \times \Theta \to \RR$ to be the agent's and designer's \textsl{indirect utility functions}, i.e. their utility at a given mean belief $m$ if the agent takes an optimal action\footnote{We note that the indirect utility $\bar{u}$ is convex in $m$.}
\begin{align}
    \bar{u} ( m, \theta ) &= \max_{a \in A} u(a, m , \theta) \\
    \bar{v} (m , \theta) &= \max_{a \in A(m,\theta)} v(a,m,\theta)  \label{eq:barvDef}\,,
\end{align}
where $A(m,\theta) = \argmax_{b \in A} u(b,m,\theta)$.
Since in an SGR mechansim no action is recommended at two different posterior means the agent can infer the posterior mean from the action recommendation.
As any action recommendation policy $q$ that satisfies \eqref{eq:ICNew} must always recommend an action that is optimal for the agent at that posterior belief\footnote{Formally, this means that $a \notin A(m,\theta) \Rightarrow q^\theta(a|m) = 0$.}, we can rewrite \eqref{eq:ICNew} as
\begin{equation} \label{eq:IC-n1}
    \begin{aligned}
      \int_\Omega   \bar{u}(m,\theta) d G^\theta(m) \geq \max_{\theta'} 
    \int_\Omega
    \bar{u}(m,\theta) d G^{\theta'}(m) \,.
    \end{aligned}
\end{equation}
%We henceforth fix an optimal action recommendation distribution $q^{\ast,\theta}(\cdot|m)$ supported on the set of maximizers of \eqref{eq:barvDef}.

Let $(G^{\ast},q^{\ast})$ be an optimal solution to the problem given in Proposition~\ref{prop:optimal-mechanism}.
We define the value $e_\theta$   type $\theta$  could achieve when deviating optimally from reporting his type truthfully%
\begin{equation}\label{eq:ctheta}
    e_{\theta} =  \max_{\theta'\neq \theta} \int_\Omega \bar{u}( m, \theta) d G^{\ast,\theta'}(m)\,.
\end{equation}
We also define $d_\theta$ to be the value the agent gets when reporting his type truthfully
\begin{equation}\label{eq:dtheta}
    d_\theta = \int_\Omega \bar{u}( m, \theta) d G^{\ast,\theta}(m) \,.
\end{equation}
We note that $e_\theta,d_{-\theta}$ 
do not depend on
%are independent 
$G^{\ast,\theta}$.
We can thus characterize $G^{\ast,\theta}$ by optimizing over $G^\theta$ while taking $(G^{\ast,\theta'})_{\theta'\neq \theta}$ as given. This leads to our next lemma. 
\begin{lemma}\label{lem:decoupled-n1}
Consider the single agent case and let $e,d$ be the constants associated with an optimal SGR mechanism $(G^\ast,q^\ast)$.
 Then $(H^\theta,(G^{\ast,\theta'})_{\theta'\neq \theta},q^\ast)$ is an optimal SGR mechanism if and only if for any type $\theta \in \Theta$
the distribution $H^\theta$ solves%$the posterior mean distribution $G^{\ast,\theta}$ solves
\begin{align}
    \max_{H^\theta \succeq F} \quad &   \int_\Omega \bar{v}(s, \theta) dH^\theta(s) \label{eq:prog-obj-n1}\\
    \text{s.t.}  \quad &\int_\Omega \bar{u}( s, \theta ) d H^\theta(s) \geq e_{\theta}  \label{eq:prog-IC-1-n1}\\
     &\int_\Omega \bar{u}( s, \eta ) d H^\theta(s) \leq d_{\eta} \qquad \forall \, \eta \neq \theta \,.\label{eq:prog-IC-2-n1}
\end{align}
%Furthermore, for any solution $H$ to the above problem $((H,\{G^{\ast,\theta'}\}_{\theta'\neq \theta}),q^\ast)$ is an optimal SGR mechanism.
\end{lemma}
In this formulation we maximize the payoff the designer receives from type $\theta$    under  constraint \eqref{eq:prog-IC-1-n1}. This constraint ensures that   type $\theta$ does not want to deviate  and report to be another type.\footnote{By considering the optimal deviation  we reduced the number of  incentive constraints in \eqref{eq:ctheta} from $(|\Theta|-1)$ to~$1$.}
Similarly,   constraint \eqref{eq:prog-IC-2-n1} ensures that no other type wants to report his  type as $\theta$.
We note that \eqref{eq:prog-IC-1-n1} and \eqref{eq:prog-IC-2-n1} encode the incentive constraints given in \eqref{eq:IC-n1} in which $G^\theta$ appears.

\subsubsection{The Multi-Agent Case}

We next turn to the multi-agent case.
Without loss we normalize here the probability of each type profile to $\phi(\theta) = \nicefrac{1}{|\Theta|}$ to make the equations easier to read.%
\footnote{This is without loss of generality as 
given a problem instance with arbitrary $\phi(\theta)$
one can define a new utility $|\Theta|\,\phi(\theta)\, u_i(a,\omega,\theta)$ for each agent $i$ and the designer $|\Theta| \,\phi(\theta)\, v(a,\omega,\theta)$ which entails exactly the same incentives in the original and the new problem instances. This amounts to a change of measure from $\phi(\cdot)$ to the uniform measure.}
The main challenge relative to the single agent case is that in an SGR mechanism the agents are in general unable to infer the posterior mean $m_{a,\theta}$ from their action recommendation.
As a consequence, the action recommendations $q$ are not determined by the \eqref{eq:ICNew} constraint and we cannot omit them from the problem.

Let  $(G^\ast,q^\ast)$ be a solution to \eqref{eq:optNewForm} and consider the corresponding SGR mechanism.
Define  $e_{i, \theta_i,\theta_i', \sigma_i}$ to be the  payoff\footnote{
More precisely these quantities correspond to payoffs multiplied with the probability $\sum_{\eta_{-i}} \phi(\theta_i,\eta_{-i})$ of the event that agent $i$'s private type equals $\theta_i$. For the subsequent discussion, this normalization does not play a role. Thus, with some abuse of terminology, we refer to such quantities as payoffs.}
agent $i$ of type $\theta_i$ gets   by reporting his type as $\theta_i' $
(where possibly $\theta_i' = \theta_i$)
and then deviating according to action policy $\sigma_i$:
\[
e_{i, \theta_i,\theta_i', \sigma_i} =
\sum_{\theta_{-i} \in \Theta_{-i} }
\int_\Omega \int_A u_i(\sigma_i(a_i),a_{-i},m,\theta_i,\theta_{-i}) \, dq^{\ast,(\theta_i',\theta_{-i})}(a|m) \, d G^{\ast,(\theta_i',\theta_{-i})}(m).
\]
Letting $I(\cdot)$ denote the identity, the payoff of agent $i$ from truthfully reporting his type and following the action recommendation equals $e_{i, \theta_i,\theta_i, I}$.
Similarly, the best payoff he can achieve 
 after misreporting his type
 and possibly taking an action different from the recommended one
 equals
 \begin{equation}\label{eq:bestDev}
  e_{i, \theta_i} = \max_{\sigma_i, \theta_i' \neq \ \theta_i} e_{i, \theta_i,\theta_i', \sigma_i} 
 .  
 \end{equation}
 It will be convenient to decompose the payoff
 $e_{i, \theta_i,\theta_i', \sigma_i}$ into payoff when the reported type profile equals some $ \theta' \in \Theta$ and the sum of payoffs $\gamma_{i, \theta_i,\theta_i', \sigma_i} (\theta')$ from other type profiles:\footnote{The quantity $\gamma_{i, \theta_i,\theta_i', \sigma_i} (\theta')$ is  equivalently given by the summation defining  $e_{i, \theta_i,\theta_i', \sigma_i}$ after excluding the summand with $\theta_{-i} = {\theta}_{-i}'$.}
 \[
    e_{i, \theta_i,\theta_i', \sigma_i}=    \gamma_{i, \theta_i,\theta_i', \sigma_i} (\theta')
    +\int_\Omega \int_A u_i(\sigma_i(a_i),a_{-i},m,\theta_i, \theta_{-i}') \, dq^{\ast,\theta'}(a|m) \, d G^{\ast,\theta'}(m)\,.
\]
%Here $\gamma_{i, \theta_i,\theta_i', \sigma_i} (\theta')$ captures the payoff  when the reported type profile is different from $\theta'$. 
By definition for $\theta_i' \neq  \theta_i$ the quantities $e_{i, \theta_i}$ and $\gamma_{i,\theta_i, \theta_i',\sigma_i}(\theta')$ do not depend on the posterior mean distribution $G^{\ast,\theta}$.
% $e_{i,\theta_i},d_{i,\theta_i'}$ for $\theta_i' \neq \theta_i$ are independent of the posterior mean distribution $G^{\ast,\theta}$ at any type profile where agent $i$'s type equals $\theta_i$. %in the optimal mechanism.
We next  show that  we can  characterize $G^{\ast,\theta}$ by optimizing 
the posterior mean distribution chosen for type profile $ \theta$ while taking those for other type profiles  $(G^{\ast,\eta})_{\eta\neq \theta}$ as given.
% \footnote{While this result focuses on optimal SGR mechanisms, it continues to hold for an arbitrary action profile distribution $q$ defining the SGR mechanism.
% More precisely, fix $q$ and consider the  SGR mechanism $(G^\ast,q)$ that maximizes the designer's payoff for this $q$. It can be readily seen that $G^\ast$ admits the  characterization in the lemma after replacing $q^\ast$ by $q$. Thus, our approach and structural results are also useful for characterizing best   mechanisms for  any given $q$.}
% This leads to the  characterization given in our next lemma. In this lemma, $I(\cdot)$ denotes the identity function, i.e., $I(a_i)=a_i$.
\begin{lemma}\label{lem:decoupled}
Let $e,\gamma$ be the constants associated with an optimal SGR mechanism
$(G^\ast,q^\ast)$. 
Then $(H^\theta,(G^{\ast,\theta'})_{\theta'\neq \theta},q^\ast)$ is an optimal SGR mechanism if and only if for any type $\theta \in \Theta$ the distribution $H^\theta$ solves
%Then for any type profile $\hat \theta \in \Theta$ the posterior mean distribution $G^{\ast,\hat \theta}$ solves
\begin{align*}
     & \quad \quad \max_{H^\theta \succeq F} ~  \int_\Omega \int_A v(a,m,\theta) \, dq^{\ast,\theta}(a|m) \, d H^\theta(m) \label{eq:prog-obj}\\
    &\text{such that} \\
    ~ & 
    \gamma_{i,\theta_i,\theta_i,I}( \theta) + \int_\Omega \int_A u_i(a,m,\theta) \, dq^{\ast,\theta}(a|m) \, 
        d H^\theta(m)   \\%\label{eq:prog-IC-0}
        &\qquad\geq 
         \gamma_{i,\theta_i,\theta_i,\sigma_i} (\theta)
        + \int_\Omega \int_A u_i(\sigma_i(a_i),a_{-i},m,\theta
) \, dq^{\ast,\theta}(a|m) \, 
        d H^\theta(m) 
  \nonumber && \forall i, \sigma_i\neq I  
   \\
\quad & \gamma_{i,\theta_i,\theta_i,I} (\theta) + \int_\Omega \int_A u_i(a,m,\theta) \,dq^{\ast,\theta}(a|m) \, d H^\theta(m)
        \geq e_{i,\theta_i}
   &&   \forall i,
        \\ % \label{eq:prog-IC-1}\\
&\gamma_{i, \eta_i, \theta_{i},\sigma_i}(\theta)+\int_\Omega \int_A u_i(\sigma_i(a_i),a_{-i},m,\eta_i,\theta_{-i}) \, dq^{\ast,\theta}(a|m) \, d H^\theta(m) \leq e_{i,\eta_i,\eta_i,I} && \forall i, \sigma_i, \eta_i\neq \theta_i
\,. %\label{eq:prog-IC-2}        
\end{align*}

\end{lemma}
This lemma can be best explained through a simple thought experiment. Suppose that we modify the initial SGR mechanism
$(G^\ast,q^\ast)$ by replacing distribution $G^{\ast,\theta}$ with $H^\theta$.
The left hand side of 
the first (and second) constraint 
%\eqref{eq:prog-IC-0} (and \eqref{eq:prog-IC-1}) 
is the payoff of agent $i$ in the new mechanism after reporting his type truthfully and following the recommendation. The right hand side of the first constraint is the payoff achieved via truthful type report followed by a deviation to action policy $\sigma_i$.
Similarly, the right hand side of the second constraint is the maximal payoff agent $i$ can achieve in the new mechanism by misreporting his type.
Note that when he misreports his type as $\theta'_i\neq  \theta_i$, the signal is drawn from a distribution other than $H^\theta$.
%, since $H^\theta$
%(replaces $G^{\ast,\theta}$ and) 
%is associated with type profile $ \theta$. 
Thus, the right hand side  of
the second constraint   is a constant in this problem.
Finally, the left hand side of the third constraint is the payoff agent $i$ can guarantee by misreporting his type as $\theta_i$, when his type is actually $\eta_i$. The right hand side is the payoff from truthful reporting.
When $H^\theta$ satisfies these constraints, it follows that the resulting  mechanism $(H^\theta,(G^{\ast,\theta'})_{\theta'\neq \theta},q^\ast)$  still satisfies incentive compatibility, and is a valid SGR mechanism. 
As the objective in this optimization problem is the designer's payoff for the type profile $\theta$, this implies the lemma.

\subsection{Laminar Partitional Signals}

We next describe a small class of signals, \textsl{laminar partitional signals}.
%, and show that
%for every type profile $\theta$ the optimal distribution over posterior means $G^{\ast,\theta}$ can be induced by such a signal.
We first define partitional signals:
\begin{definition}[Partitional Signal]
A signal 
$\mu$
is partitional if for each signal realization $s \in S$ there exists a set $P_s \subseteq \Omega$ such that $\mu(\{s\}|\omega)=\mathbf{1}_{\omega \in P_s}$.
\end{definition}
A partitional signal partitions the state space into sets $(P_s)_s$ and reveals to the agent the set  in which  the state $\omega$ lies. 
Partitional signals are thus noiseless in the sense that the mapping from the state to the signal is deterministic. {A simple example  of signals which are not partitional are normal signals where the signal equals the state $\omega$ plus normal noise and thus is random conditional on the state.}
Denote by $\cx(\cdot)$ the convex hull.
The next definition further restricts   the partition structure.% of a partitional signals.

\begin{definition}[Laminar Partitional Signal] 
A partition $(P_s)_s$ is laminar if there is a partial order 
$\rhd$ on $S$ such that
$P_s= \cx P_s \setminus \cup_{s'|s \rhd s'} \cx P_{s'}$ for any $s$.
A partitional signal is laminar if its associated partition is laminar.
%A partitional signal is laminar if the set of states for which two different messages are sent $P_s, P_{s'}$ satisfy $cx P_s \cap cx P_{s'} \in \{ \emptyset , cx P_s, cx P_{s'}\}$.
\end{definition}
This definition readily implies that $\cx P_s \cap \cx P_{s'} \in \{ \emptyset , \cx P_s, \cx P_{s'}\}$ for any $s, s'$.
The restrictions imposed by laminar partitional signals are illustrated in Figure~\ref{fig:laminar}.
Note that the elements of the laminar partition may belong to disjoint intervals
(see Figure \ref{fig:pareto} for various examples of laminar partitions).
We define the depth of a laminar partition as the smallest number $k$ such that the state space can be partitioned into intervals each of which contains at most $k$ elements of the laminar partition and every partition element is contained in one interval.
%we refer to the largest  number of partition elements contained in an interval,  i.e.,    $\max_s |\{s \rhd s'| \cx P_{s'}\subset \cx P_s \}|$, as its \emph{depth}. 
Intuitively, this captures how complicated the laminar partition is:
If convex hulls of partition elements are disjoint (nested) the depth is equal to $1$ (the number of signal realizations~$|S|$).
%The depth can 
%, i.e.,    $\max_s |\{s\succeq s'| \cx P_{s'}\subset \cx P_s \}$, as its \emph{depth}.

\begin{figure}[t]
%\begin{floatrow}
\centering
\begin{tabular}{c  c}
  \begin{tikzpicture}
\begin{axis}[
    axis y line=none,
    y=0.5cm/3,
    restrict y to domain=0:4,
    %axis x line=none,
    axis lines=left,
    enlarge x limits=upper,
    scatter/classes={
        o={mark=*,fill=white}
    },
    scatter,
    scatter src=explicit symbolic,
    %no markers,
    every axis plot post/.style={mark=none,line width=4pt},
    legend style={
        draw=none,
        at={(1,2)},
        anchor=south east
    },
    legend image post style={mark=none}
]
\addplot table [y expr=2,meta index=1, header=false] {
0.0 c
0.2 c

0.5 c
0.8 c
};\addlegendentry{$P_1$}
\addplot table [y expr=3,meta index=1, header=false] {
0.2 c
0.5 c

0.8 c
1.0 c
};\addlegendentry{$P_2$}
\end{axis}
\end{tikzpicture}   
\qquad &  \qquad
\begin{tikzpicture}
\begin{axis}[
    axis y line=none,
    y=0.5cm/3,
    restrict y to domain=0:4,
    %axis x line=none,
    axis lines=left,
    scatter/classes={
        o={mark=*,fill=white}
    },
    scatter,
    scatter src=explicit symbolic,
    every axis plot post/.style={mark=none,line width=4pt},
    legend style={
        draw=none,
        at={(1,2)},
        anchor=south east
    },
    legend image post style={mark=none}
]
\addplot table [y expr=2,meta index=1, header=false] {
0.0 c
0.1 c

0.4 c
0.6 c

0.9 c
1 c
};\addlegendentry{$P_1$}
\addplot table [y expr=3,meta index=1, header=false] {
0.1 c
0.4 c
};\addlegendentry{$P_2$}
\addplot table [y expr=3,meta index=1, header=false] {
0.6 c
0.9 c
};\addlegendentry{$P_3$}
\end{axis}
\end{tikzpicture}
\end{tabular}
%\end{floatrow}
\caption{\label{fig:laminar}The partition of the state space $\Omega=[0,1]$ on the left is not laminar while the partition on the right is laminar as the convex hull of all pairs of sets $P_1,P_2,P_3$ are either nested or have an empty intersection.}
\end{figure}
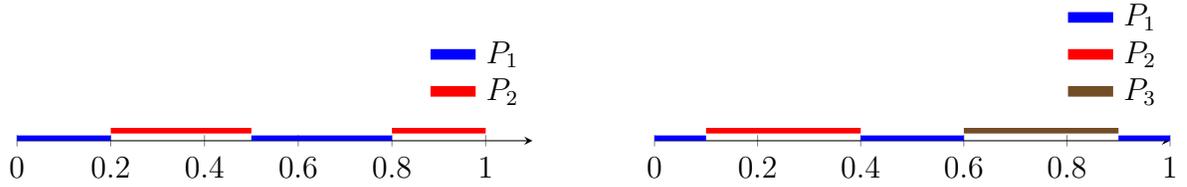

\begin{definition}[Laminar Partitional Mechanism]
A direct recommendation mechanism is laminar partitional if it consists of laminar partitional signals, i.e.,  for each type profile $\theta$ there exists a laminar partition $P^\theta$ of the state space $\Omega$ such that the same action profile is recommended in each partition element.
\end{definition}

%We next  establish that there always exists an optimal SGR mechanism such that the signal for each type profile  is laminar partitional, and shed light on the depth of the laminar families as well as the number of action profiles induced by them.
We next establish that there always exists a laminar partitional mechanism that is optimal.
To simplify notation we denote by $P^\theta(\omega) = \{ P_s^\theta \colon \omega \in P_s^\theta \}$ the set of states where the same signal is realized   as in  state $\omega$, for a partitional signal with partition $P^\theta=(P^\theta_s)_s$.
\begin{theorem}\label{thm:optimal-signals}
Let $|A|$ be finite or $|N|=1$.
%There exists an optimal SGR mechanism such that the distribution over posterior means $G^\theta$ for each type profile $\theta$ is generated by a laminar partitional signal with partition $P^\theta$.
There exists an optimal laminar partitional mechanism.
Furthermore, given the partitions partition $P^\theta$ for each $\theta$ there exists  intervals $I^\theta_1,I^\theta_2,\ldots$ such that %each state $\omega$ 
\begin{compactenum}[(i)]
    \item $\omega \notin \cup_k I^\theta_k$ implies $P^\theta(\omega) = \{\omega\}$;
    \item $\omega \in I^\theta_k$ implies   $P^\theta(\omega) \subseteq I^\theta_k$.
\end{compactenum}  
\end{theorem}

The proof is based on a result that characterizes the solutions to optimization problems over mean preserving contractions under linear side constraints, such as those in Section~\ref{se:decoupling}. As this result might be of independent  interest  we explain it in Section~\ref{sec:mpc-optimization}.

Theorem~\ref{thm:optimal-signals}  simplifies the search for optimal mechanisms. First, it implies that 
for each type profile $\theta$
the designer needs to consider only partitional signals, which (deterministically)
recommends the same action profile for all states in an element of the partition $P^\theta$.
Theorem~\ref{thm:optimal-signals} thus implies that the designer does not need to rely on random signals whose distribution conditional on the state could be arbitrarily complex. % This is a natural consequence of the continuity of the state distribution.

In general, the partitions that define a deterministic signal can be quite complex, and, for instance, each partition element can be a disjoint union of countably many subsets of states.
%A partitional signal could in general also
%have a structure that is  
%exhibit a intractable structure.
The fact that the partition can be chosen to be laminar is thus a further important simplification. 
To see why, consider the case with finitely many action profiles. 
Since the optimal signal is partitional the signal realizations correspond to at most $|A|$ subsets of the state space.
Due to the laminar structure each subset can be identified with its convex hull, which is an interval.
As each interval is completely described by its endpoints it follows that each laminar partitional signal can be identified with a point in $\RR^{2 |A|}$.
Thus, the problem of finding the optimal mechanism can be written as an optimization problem over $\RR^{2 |A| \times |\Theta|}$.
This contrasts with the space of mechanisms which can not be embedded in a finite dimensional Euclidean space even if one restricts to finitely many signal realizations.

The second part of the theorem implies that for each fixed type profile $\theta$ there are two types of partition elements: There are ``pooling intervals'' $(I^\theta_k)$ where multiple state realizations induce the same action profile recommendation.
In their complement, each state is mapped to a unique action profile recommendation.\footnote{Signals that induce such outcomes are  relevant  when there is a continuum of action profiles.}
Moreover, each  pooling interval 
can equivalently be expressed as  the union of partition elements it intersects, which also constitute a laminar partition of this pooling interval.
This implies that the task of constructing laminar partitions that induce  optimal posterior mean distributions also decouples over pooling intervals.
% The second part of the theorem implies that there are two types of partition elements (corresponding to action profiles).
% First, those which reveal the state
We next provide bounds on the depth of the laminar structure:

\begin{proposition} \label{prop:Thm2ndPart}
Consider the setting of Theorem \ref{thm:optimal-signals}.
\begin{compactenum}[(i)]
    \item If $|N|=1$, then 
    in each 
     $I^\theta_k$
     at most $|\Theta|+2$ action profiles are realized.
    \item If $|A|$ is finite, then
    $w\in \cup_k I^\theta_k$ almost surely, and
    in each 
     $I^\theta_k$ at most
     $\sum_{i \in N} |A_i|^2 |\Theta_i|+2$
     action profiles are realized.
\end{compactenum}
\end{proposition}
Suppose we restrict attention to
SGR mechanisms 
$(G,q)$, where $q^\theta(\cdot|m)$ 
is degenerate
and deterministically recommends an
  action profile for each posterior mean $m$ and type profile $\theta$
(which is without loss for the single agent case). Then Proposition~\ref{prop:Thm2ndPart} follows immediately by counting the number of side constraints in Lemma~\ref{lem:decoupled-n1} and \ref{lem:decoupled} which by Proposition~\ref{prop:general-result} given in the next section limit the depth of the laminar structure. To cover the case of
nondegenerate action profile distributions, an additional compactness argument is necessary. We provide this argument in the appendix.

Note that in Proposition \ref{prop:Thm2ndPart} part $(i)$ the number of action profiles that are realized in each interval is independent of the number of actions available to the agent, whereas this is not the case in part $(ii)$.
In fact, it is possible to construct numerical examples where the number of available actions impacts 
this quantity and the depth of the laminar partitional signals (see Appendix \ref{app:ZSexample}).

This dichotomy emerges since in the single-agent case, any action recommendation perfectly reveals to the agent the partition element containing the state and the corresponding posterior mean.
As explained in Section~\ref{subsubse:singleAgent} this implies that one can completely express the problem in terms of indirect utilities which makes this equivalent to a problem without action choices.
In contrast, in the multi-agent case due to the uncertainty about other agents' types and action recommendations the partition element does not become common knowledge among the agents. 
As a consequence, it is not possible to express the designer's problem in terms of indirect utilities dropping the actions.

\subsection{Maximizing under MPCs and Side Constraints}\label{sec:mpc-optimization}

The next section derives an abstract mathematical result about optimization under MPC constraints
and side constraints that implies Theorem~\ref{thm:optimal-signals} and Proposition \ref{prop:Thm2ndPart}.
We discuss this result separately as similar mathematical problems emerge in economic applications other than Bayesian persuasion.\footnote{For example \cite{kleiner2020extreme} discuss how optimization problems under mean preserving contraction constraints naturally arise in delegation problems. We leave the exploration of other applications of this mathematical result for future work to keep the exposition focused on the persuasion problem.}

Consider the problem of maximizing the expectation of an arbitrary upper semicontinuous function $v:[0,1] \to \RR$ over all distributions $G$ that are mean-preserving contractions of a given distribution $F:[0,1] \to [0,1]$ subject to $n \geq 0$ additional linear constraints
\begin{equation}\label{eq:linear-problem}
    \begin{aligned}
        \max_{G \succeq F}& \int_\Omega v(s) d G(s)  \\
        s.t. & \int_\Omega u_i(s) d G(s) \geq 0 \text{ for } i \in \{1,\ldots,n\}\,.
    \end{aligned}
\end{equation}
Throughout, we assume that the functions $u_i:[0,1] \to \RR$ are continuous. 
The next result establishes conditions that need to be satisfied by any solution of  problem \eqref{eq:linear-problem}.
%Theorem~\ref{thm:optimal-signals} and Proposition~\ref{prop:Thm2ndPart} 
Our results extend the insights of \cite{candogan2019persuasionEC,arieli2019optimal,kleiner2020extreme} who analyzed the problem of maximizing over mean preserving contractions \emph{without} side-constraints. We allow for side constraints as they naturally appear as incentive constraints and in settings with multiple agents. While without side-constraints each interval is optimally contracted into a distribution with just two points in its support, we find that in general the cardinality of the support equals the number of side constraints plus two.

%This result generalizes earlier results due to \cite{candogan2019persuasionEC,arieli2019optimal,kleiner2020extreme}, 
%by allowing for side-constraints, which enables handling settings with private information and multiple agents.
%
% in turn allowing
% %that were derived for single-agent persuasion problem without private information.
% thereby allowing

\begin{proposition}\label{prop:general-result}
%Problem \eqref{eq:linear-problem} admits a solution.
There exists a solution $G$ to problem \eqref{eq:linear-problem} and a countable collection of disjoint intervals $I_1,I_2,\ldots$ such that $G$ equals  distribution $F$ outside the intervals, i.e., 
\begin{equation}\label{eq:GFCond1}
    G(x) = F(x) \text{ for } x \notin \cup_j I_j 
\end{equation}
and each interval $I_j=(a_j,b_j)$ redistributes the mass of $F$ among at most $n+2$ mass points $m_{1,j}, m_{2,j},\ldots, m_{n+2,j} \in I_j$
\begin{equation}\label{eq:GFCond2}
    G(x) = G(a_j) + \sum_{r = 1}^{n+2} p_{r,j} \mathbf{1}_{m_{r,j} \leq x} \quad \text{ for } x \in I_j
\end{equation}
with $\sum_{r=1}^{n+2} p_{r,j} = F(b_j) - F(a_j)$ and the same expectation $\int_{I_j} x dG(x) = \int_{I_j} x dF(x)$.
\end{proposition}

%We prove Proposition~\ref{prop:general-result} in the Appendix.
The existence of an optimal solution follows from standard arguments exploiting the compactness of the feasible set of  \eqref{eq:linear-problem}.
To establish the remaining claims  of Proposition~\ref{prop:general-result}, we first fix an optimal solution, and  consider an interval where the 
 %\eqref{eq:MPC} 
 MPC
 constraint does not bind at this solution. As both the constraints as well as the objective function in \eqref{eq:linear-problem} are linear functionals in the CDF we can optimize over (any subinterval of) this interval fixing the solution on the complement of this interval, to obtain another optimal solution.
In this auxiliary optimization problem the 
MPC
%mean-preserving contraction
constraint is relaxed by a constraint fixing the conditional mean of the distribution on this interval.
%and imposing that there is no mass at the endpoints of the interval. 
This problem is now a maximization problem over distributions subject to the $n$ original constraints and an additional   identical mean constraint. It was shown in \cite{winkler1988extreme} that each extreme point of the set of distributions, which are subject to a given number $k$ of linear constraints, is the sum of at most $k+1$ mass points. 
For our auxiliary optimization problem, this ensures the existence of an optimal solution with $n+2$ mass points.
A challenge is  to establish that the solution to the auxiliary problem is feasible and satisfies the 
MPC constraint.
The main idea behind this step is to show that if it is not feasible, then one can construct an optimal solution where the MPC constraint binds on a larger set.
However, this can never be the case if we start with an optimal solution where the set on which the MPC constraint binds is maximal (which exists by Zorn's lemma).
Combining such an initial optimal solution with the optimal solution 
for the auxiliary optimization problem, we obtain a new solution that satisfies the conditions of the proposition over this interval.
By repeating this argument for all intervals where the MPC constraint does not bind it follows that the claim holds for the entire support.

\paragraph{Laminar Structure} Let $\omega$ be a random variable distributed according to $F$.
Our next result shows that each interval $I_j$ in Proposition \ref{prop:general-result} admits a laminar partition such that
when the realization of $\omega$ belongs to some $I_j$, revealing the  partition element that contains it
and simply revealing $\omega$ when it does not belong to any $I_j$ induces a posterior mean distribution, given by $G$.
Proposition \ref{prop:general-result}
together with this result 
yields the optimality
of partitional signals as stated in Theorem \ref{thm:optimal-signals},
 as well as the depth of the corresponding laminar families presented in Proposition \ref{prop:Thm2ndPart}.

\begin{proposition}\label{prop:laminar}
Consider the setting of Proposition \ref{prop:general-result} and let $\omega$ be distributed according to $F$.
For each interval $I_j$ there exists a laminar partition $\Pi_j=(\Pi_{r,j})_r$ such that for all $r\in \{1, \ldots,n+2\}$
\begin{equation}\label{eq:laminar-consistency}
    \Pr{ \omega \in \Pi_{r,j} } = p_{r,j} \,\,\,\,\,\,\text{ and }\,\,\,\,\,\, \E{ \omega }{ \omega \in \Pi_{r,j} } =  m_{r,j} \,.
\end{equation}
\end{proposition}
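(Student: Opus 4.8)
The plan is to fix an interval $I_j=(a_j,b_j)$ and the associated mass points $m_{j,1}<m_{j,2}<\cdots<m_{j,n+2}$ with weights $p_{j,1},\ldots,p_{j,n+2}$ (dropping zero-weight points) summing to $F(b_j)-F(a_j)$ and satisfying $\sum_k p_{j,k} m_{j,k}=\int_{I_j} x\,dF(x)$. The task is to carve $I_j$ into a laminar family of sets $\Pi_{j,1},\ldots,\Pi_{j,n+2}$ so that the $F$-mass of $\Pi_{j,k}$ is $p_{j,k}$ and its $F$-conditional mean is $m_{j,k}$. The natural way to do this is to build the partition greedily by a ``peeling'' argument on the mass points, ordered by how extreme they are. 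The key structural fact I would use is that within $I_j$ the distribution $G$ restricted to $I_j$ is a mean-preserving contraction of $F$ restricted to $I_j$ (both renormalized), which is exactly the content of Proposition~\ref{prop:general-result} on that interval together with the matching-mean condition; this is what makes the construction possible at all.

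First I would set up the recursion. Among the mass points in $I_j$, pick one whose value is an endpoint of the range, say the smallest $m_{j,1}$ (if $m_{j,1}=a_j$ in the limiting sense the argument degenerates gracefully). Because $m_{j,1}\le \E{\omega}{\omega\in I_j}$, and because $F$ has no atoms, I can choose a threshold $c_1$ such that the set $\Pi_{j,1}$ consisting of the $p_{j,1}$ units of $F$-mass at the \emph{bottom} of $I_j$, i.e.\ $\Pi_{j,1}=(a_j,c_1]$ with $F(c_1)-F(a_j)=p_{j,1}$, has conditional mean \emph{at most} $m_{j,1}$; and if I instead take the $p_{j,1}$ units at the \emph{top}, i.e.\ a set of the form $(c_1',b_j)$, the conditional mean is at least $m_{j,1}$. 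By taking a set that slides continuously from the bottom configuration to the top configuration while keeping its $F$-mass fixed at $p_{j,1}$ — e.g.\ an interval $(c,d)$ with $F(d)-F(c)=p_{j,1}$ whose left endpoint runs from $a_j$ to the appropriate point — the conditional mean varies continuously by nonatomicity of $F$, so by the intermediate value theorem there is a choice with conditional mean exactly $m_{j,1}$. I would take this interval as $\Pi_{j,1}$. The point of choosing $m_{j,1}$ to be the minimal mass point is to guarantee that the ``bottom'' configuration under-shoots and the ``top'' configuration over-shoots, so the IVT applies; one must check this inequality, which follows because $m_{j,1}$ is the smallest value in the support of $G|_{I_j}$ and $G|_{I_j}\preceq F|_{I_j}$ forces the left tail of $F$ to have mean below $m_{j,1}$.

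Next, remove $\Pi_{j,1}$ from $I_j$ and from the list of mass points, and repeat on the remainder. The crucial invariant to maintain is that after removing $\Pi_{j,1}$, the residual measure $F$ restricted to $I_j\setminus\Pi_{j,1}$ still mean-preservingly dominates the residual discrete distribution $\sum_{k\ge 2}p_{j,k}\delta_{m_{j,k}}$ (both renormalized); this is what lets the induction go through with $n+1$ remaining points, and so on down to a single point where the mean condition is automatic. Establishing this invariant is the heart of the argument: I would show that deleting from both sides a common ``interval-shaped'' chunk of mass $p_{j,1}$ with the correct mean $m_{j,1}$, where on the discrete side that chunk is a single atom at an extreme point of the support, preserves the convex (second-order stochastic dominance) ordering. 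Laminarity is then automatic from the construction: each $\Pi_{j,k}$ is chosen as an interval (a convex set) inside the current remaining region, which is itself $I_j$ with finitely many previously-chosen intervals excised; hence the convex hull of any newly chosen piece is contained in the convex hull of the region it was carved from, and two pieces carved at different stages are nested-or-disjoint in convex hull. I would make this precise by observing that $cx\,\Pi_{j,k'} \subseteq I_j \setminus (\Pi_{j,1}\cup\cdots\cup\Pi_{j,k'-1})$ need not hold as stated, so instead I would track, for each stage, the \emph{convex hull} of the remaining region and argue the carved interval sits inside it, giving the laminar (in fact, nested-interval) comparison directly.

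The main obstacle I anticipate is the dominance-preservation step — proving that peeling off a correctly-weighted, correctly-centered extreme chunk from $F|_{I_j}$ keeps it above the truncated atomic distribution in the convex order. The subtlety is that the chunk I peel is an \emph{interval} on the $F$ side but a \emph{point mass} on the $G$ side, so I cannot literally subtract equal measures; I must compare second-order-stochastic-dominance integrals $\int_x^{b_j}(\cdot)$ before and after. I would handle this by writing the SOSD inequality in its integrated-survival-function form and checking that the decrement on the $F$ side dominates the decrement on the $G$ side pointwise in $x$, using that $m_{j,1}$ is the minimal atom and that the peeled interval is as far left as the mean constraint permits. A secondary technical point is the case where $F$ has flat stretches or the intervals degenerate (some $p_{j,k}=0$), which I would dispose of by simply discarding zero-weight points at the outset and noting $F$ is atomless so all the continuity/IVT steps are valid; if one wants to allow atoms in $F$ one splits atoms with auxiliary randomization exactly as the paper's footnote in Section~\ref{sec:model} suggests, but under the maintained assumption that $F$ has a density this does not arise.
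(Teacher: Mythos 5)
Your route mirrors the paper's Partition Lemma (Lemma~\ref{lem:construct}) under the reflection $\omega \mapsto 1-\omega$: the paper always peels the \emph{largest} atom as an interior interval and recurses on the residual; you peel the \emph{smallest}. The IVT step for the first peel is sound -- the quantile form of the MPC constraint does give $p_{j,1} m_{j,1} \ge \int_{a_j}^{c_1} x\,dF$ for the left-anchored interval of $F$-mass $p_{j,1}$, so sliding the interval hits $m_{j,1}$. But the two steps you flag as needing work are exactly where the proof lives, and the sketch closes neither.

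The first gap is the shape of the later pieces. The claim that each $\Pi_{j,k}$ is an interval inside the current remaining region fails already for $k=2$ once $K\ge 3$. With $F$ uniform on $[0,1]$ and $G$ placing mass $1/3$ at each of $0.2,0.5,0.8$, your first step forces $\Pi_{j,1}=(1/30,11/30)$, and the remaining region $(0,1/30]\cup[11/30,1)$ contains no interval of $F$-mass $1/3$ with conditional mean $0.5$ (the only candidate, $[11/30,21/30)$, has mean $16/30$); so $\Pi_{j,2}$ must sit on \emph{both} sides of $\Pi_{j,1}$ and is non-convex. Your backtrack -- tracking the convex hull of the remaining region -- does not fix this, because a set disjoint from $\Pi_{j,1}$ whose convex hull must contain $cx\,\Pi_{j,1}$ cannot itself be convex. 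The paper avoids the issue by carrying a \emph{nested} family of auxiliary intervals $J_k$ and defining the cells as $P_k=J_k\setminus\bigcup_{\ell>k}J_\ell$, so only the innermost cell is an interval; the recursion must produce the $J_k$'s, not the $\Pi_{j,k}$'s directly. The second gap is the MPC-preservation invariant. You rightly call it the heart of the argument but leave it as a plan. In the paper this is a self-contained sub-lemma inside Lemma~\ref{lem:construct}: after excising the peeled interval and renormalizing $F$ to the CDF $F'$ of \eqref{eq:newCdf} (with $G'$ of \eqref{eq:newCdfMu}), one must verify $\int_x^b G'(t)\,dt \ge \int_x^b F'(t)\,dt$ pointwise on $I$, which the paper does by splitting $[a,b]$ into four regions around the cut $[x_0,x_1]$ and the peeled atom $\hat m_2$ and using concavity in $x$ to reduce two of the cases to endpoint evaluations. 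This is not a routine corollary of the original MPC -- as you note, the $F$-side cut is an interval while the $G$-side cut is a point mass -- and nothing in the sketch substitutes for it. So the proposal describes a legitimate mirror of the paper's construction, but both central ingredients remain missing.
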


The proof of this claim relies on a partition lemma (stated in the appendix), which strengthens this result by shedding light on how the  partition $\Pi_j$ can be constructed. The proof of the latter lemma is inductive over the number of mass points. 
When $G$ given in Proposition \ref{prop:general-result} has two mass points in $I_j$, the partition element that corresponds to one of these mass points is an interval and the other one is the complement of this interval relative to $I_j$. Moreover, it can be obtained by solving a system of equations, expressed in terms of the end points of this interval, that satisfy   condition \eqref{eq:laminar-consistency}.
%of Proposition~\ref{prop:laminar}.
As this partition is laminar this yields the result for the case where there are only 2 mass points in $I_j$.

When $G$ consists of $k >2$ mass points in $I_j$ one can find a  subinterval, such that:
(i) the expected value of $\omega\sim F$ conditional on $\omega$ being inside this subinterval equals the value of the largest mass point, and
(ii) the probability assigned to the interval equals the probability $G$ assigns to the largest mass point. 
Conditional on $\omega$ being outside this interval, the distribution thus only admits $k-1$ mass points and is a mean preserving contraction of the distribution $F$.
This allows us to invoke the induction hypothesis to generate a laminar partition such that revealing in which partition element $\omega$ lies generates the desired conditional distribution of the posterior mean.
Finally, as this laminar partition combined with the subinterval associated with the largest mass point of $G$ in $I_j$ is again a laminar partition, we obtain the result for distributions consisting of $k>2$ mass points.

The proof of  Proposition \ref{prop:laminar} (and Lemma \ref{lem:construct} of the Appendix) details these arguments, and also offers an algorithm for constructing a laminar partition satisfying \eqref{eq:laminar-consistency}.
While the result is stated by focusing on the setting of Proposition \ref{prop:general-result}, as can be seen from the proof, the optimality of $G$ does not play any role. Hence, the claim continues to hold for any distribution $G$ that satisfies only conditions \eqref{eq:GFCond1} and \eqref{eq:GFCond2}.

 \section{Single Agent Case: Screening vs. No-Screening}\label{se:simulations}
 
 In this section, we focus on the single agent case $|N|=1$. Throughout we also assume that the set of actions is finite $|A|=\{1,\dots,|A|\}$ and the designer's payoff $v(a,\theta)$ depends only on the action and the agent's type. 
 Our setting thus reduces to the problem of persuading a privately informed agent, which is of independent interest. 
 The  case of binary actions was analyzed in \cite{kolotilin2017persuasion} and \cite{guo2019interval} (who analyze this problem under slightly different assumptions).
 We first show that in the single agent setting described above -- without restricting attention to binary actions -- the optimal (SGR) mechanism can be obtained by solving a  \emph{finite-dimensional convex program}  (Section \ref{subse:finiteAction}). Then, we exemplify the optimal mechanism and contrast it with the optimal mechanisms derived in the literature by restricting attention further to binary action settings  (Sections \ref{se:buyerExample} and \ref{subse:privatePublic}).
 
 \subsection{A Convex Program for the Single-Agent Case}\label{subse:finiteAction}

%Next, we return to the persuasion problem and focus on the special case where there is only a single agent $|N|=1$ and set of the agent's actions is finite $A = \{ 1, \ldots, |A| \}$, and the designer's value $v(a,\theta)$ depends only on the action and the agent's type. We will show that in this case there exists a simple finite-dimensional \emph{convex} program which yields the optimal mechanism.

As a consequence of Assumption~\ref{ass:mean-belief} there exist a partition of $\Omega$ into intervals $(B_{a,\theta})_{a \in A}$ such that action $a$ is optimal for the agent of type $\theta$ if and only if his mean belief is in the interval $B_{a,\theta}$.
By relabeling the actions for each type we can without loss assume that the intervals $B_{a,\theta} = [b_{a-1,\theta},b_{a,\theta}]$ are ordered with respect to the actions,\footnote{Formally, $0 = b_{0,\theta} \leq b_{1,\theta} \leq \ldots \leq b_{|A|,\theta} = 1$. If an action $a$ is never optimal   for a type $\theta$ set $b_{a-1,\theta}=b_{a,\theta}=b_{|A|,\theta} = 1$. This is without loss as no signal induces a posterior belief of 1 with strictly positive probability and the action thus plays no role in the resulting optimization problem.}
and hence for all $m\in B_{a,\theta}$:
% To abbreviate notation we define coefficients $c_{a,\theta} = u_1(a,\theta)$ and $h_{a,\theta} = u_2(a,\theta) + u_1(a,\theta) b_{a,\theta}$
% and get that the indirect utility of the agent equals
\begin{align*}
    \bar{u}(m,\theta) &= {u}_1(a,\theta)m+{u}_2(a,\theta)   \,. 
\end{align*}
%As it is never optimal to reveal unnecessary information to the agent, a ``recommendation mechanism'' in which the message equals the action the agent is supposed to take is optimal:
%, we can characterize   optimal mechanisms as follows:
%
%\begin{lemma}\label{lem:oneAtomPerInterval}
%There exists an optimal mechanism where the signal realization equals the action taken by the agent, i.e., $S_\theta = A$ and %$\E{\mu^\theta}{ \omega }{ s = a} \in B_{a,\theta}$ for all $a \in A, \theta \in \Theta$.
%\end{lemma}
%
%In what follows we restrict attention to such recommendation mechanisms and 
%Consider an SGR mechanism $(\mu^\theta)$ and the associated distributions $(G^\theta)$.
Consider an SGR mechanism with posterior mean distributions $(G^\theta)$.
Denote by $p_{a,\theta}$ the probability that action $a$ is recommended to type $\theta$ and by $m_{a,\theta}  \in B_{a,\theta}$ the posterior mean induced by this recommendation. %\E{\mu^\theta}{ \omega }{ s = a}
The expected payoff of  type $\theta$ from reporting his type as $\theta'$ equals
\begin{equation}\label{eq:IC1}
\begin{aligned} %\int_\Omega \bar{u}(m,\theta) d G^{\theta'}(m) &=
     \sum_{a' \in A } p_{a',\theta'} \, \bar{u}(m_{a',\theta'},\theta).
    % =
    % \sum_{a' \in A } p_{a',\theta'} (u_1(a',\theta)m_{a',\theta'}+  u_2(a',\theta)) \,
    % . %=
    %\sum_{a' \in A } p_{\theta',k}
    %\left(\max_k h_{\theta,k} +c_{\theta,k} (m_{\theta',k'} - b_{\theta,k}) \right)\,.
\end{aligned}
\end{equation}
Defining
\[
    z_{a,\theta} = m_{a,\theta} p_{a,\theta}
\]
to be the product of the posterior mean $m_{a,\theta}$ induced by the action recommendation  $a$ and the probability $p_{a,\theta}$ of that recommendation, the  incentive compatibility constraint \eqref{eq:IC-n1}  for type $\theta$
can be   expressed as:
\begin{equation}
    \label{eq:IC}
    \sum_{a \in A} 
    {u}_1(a,\theta)z_{a,\theta}+{u}_2(a,\theta)p_{a,\theta} 
\geq 
    \sum_{a' \in A } 
   \left[ \max_{a \in A} 
    {u}_1(a,\theta)z_{a',\theta'}+{u}_2(a,\theta)p_{a',\theta'}  \right] \quad \forall \theta'.
\end{equation}
Here, the left hand side is the payoff of this type from reporting his type truthfully and  subsequently following the recommendation of the mechanism, whereas the right hand side is the payoff from reporting type as $\theta'$ and taking the best possible action (possibly different than the recommendation of the mechanism) given the signal realization.
Recall that the distribution $G^\theta$  is an MPC of  $F$ for all $\theta$ (Lemma \ref{lem:feasibility}).
Our next lemma establishes that the MPC constraints also admit an equivalent restatement in terms of $( p , z )$.\footnote{This reformulation was first used in 
\cite{candogan2019persuasionEC}  and for completeness we include a proof in the online appendix.}
\begin{lemma} \label{lem:majorization}
 $G^\theta \succeq F$ if and only if $\sum_{a\geq \ell} z_{a,\theta } \leq \int_{1-\sum_{a\geq \ell} p_{a,\theta}}^1 F^{-1}(x)dx $, where the inequality holds with equality for $\ell=1$.
\end{lemma}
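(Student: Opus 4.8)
The plan is to translate the integral condition \eqref{eq:MPC} characterizing $G_\theta \succeq F$, written in the form $\int_\omega^1 G_\theta(z)\,dz \le \int_\omega^1 F(z)\,dz$ with equality at $\omega=0$, into a statement about the finitely many atoms of $G_\theta$. By Lemma~\ref{lem:oneAtomPerInterval} the distribution $G_\theta$ is supported on the $|A|$ points $m_{1,\theta}\le \dots \le m_{|A|,\theta}$ with weights $p_{1,\theta},\dots,p_{|A|,\theta}$, so it is a step function and $\int_\omega^1 G_\theta(z)\,dz$ is piecewise linear and convex in $\omega$. The standard fact about mean-preserving contractions between distributions supported on a common grid is that the integral inequality need only be checked at the atoms of the \emph{contracted} distribution $G_\theta$ (this is where the ``upper envelope'' is achieved); equivalently, since both sides are convex and the left side is piecewise linear with kinks exactly at the $m_{a,\theta}$, it suffices to verify $\int_{m_{\ell,\theta}}^1 G_\theta(z)\,dz \le \int_{m_{\ell,\theta}}^1 F(z)\,dz$ for each $\ell$, plus equality at $0$ (equality of means).

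Next I would compute both sides at these points. Using integration by parts (or directly, since $G_\theta$ is a step function), $\int_{m_{\ell,\theta}}^1 G_\theta(z)\,dz = \int_0^1 z\,dG_\theta(z) - \int_0^{m_{\ell,\theta}} z\,dG_\theta(z) - m_{\ell,\theta}\big(1 - G_\theta(m_{\ell,\theta})\big)$; the dependence on $\ell$ collapses to $\sum_{a \ge \ell} z_{a,\theta} - m_{\ell,\theta}\sum_{a\ge\ell}p_{a,\theta}$ after substituting $\int z\,dG_\theta = \sum_a z_{a,\theta}$ and $G_\theta(m_{\ell,\theta}) = \sum_{a<\ell}p_{a,\theta}$. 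For the right side, $\int_{m_{\ell,\theta}}^1 F(z)\,dz$ — the key move is to reparametrize by the quantile: for any threshold the quantity $\min_t\{\int_t^1 F(z)\,dz + t\cdot q\}$ (the Hardy--Littlewood–type lower envelope) equals $\int_{1-q}^1 F^{-1}(x)\,dx - m_{\ell,\theta} q$ when we plug in $q = \sum_{a\ge\ell}p_{a,\theta}$, because pushing the evaluation point to the point where $F$ has mass $1-q$ below it is exactly what makes the inequality tightest and coincides with the $F^{-1}$ integral. Since $m_{\ell,\theta}\le b_{\ell,\theta}$ and the relevant $F$-quantile lies in the right place, the inequality $\int_{m_{\ell,\theta}}^1 G_\theta(z)\,dz \le \int_{m_{\ell,\theta}}^1 F(z)\,dz$ becomes, after adding $m_{\ell,\theta}\sum_{a\ge\ell}p_{a,\theta}$ to both sides, exactly $\sum_{a\ge\ell}z_{a,\theta} \le \int_{1-\sum_{a\ge\ell}p_{a,\theta}}^1 F^{-1}(x)\,dx$. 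The $\ell=1$ case gives $\sum_a z_{a,\theta} = \int_0^1 F^{-1}(x)\,dx = \int_0^1 z\,dF(z)$, i.e.\ the equality-of-means condition, as claimed.

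For the converse I would run the same computation backwards: given the $(p,z)$ inequalities for all $\ell$ and equality at $\ell=1$, reconstruct $m_{a,\theta} = z_{a,\theta}/p_{a,\theta}$, check these lie in the correct order (this needs a small argument, or can be taken as part of the recommendation-mechanism structure), define $G_\theta$ as the corresponding atomic distribution, and verify $\int_\omega^1 G_\theta \le \int_\omega^1 F$ for \emph{all} $\omega$, not just the atoms — here convexity of both sides and the fact that the left side is linear between consecutive atoms means the inequality at the atoms propagates to all $\omega$ in between, while for $\omega$ below the smallest or above the largest atom it follows from the endpoint/mean conditions.

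The main obstacle I expect is the precise handling of the $F^{-1}$ reparametrization when $F$ is not strictly increasing or when the quantile $1-\sum_{a\ge\ell}p_{a,\theta}$ lands inside a flat of $F$ (an atom of the quantile function); there one must be careful that $\int_{1-q}^1 F^{-1}(x)\,dx$ is the correct expression for $\min_t\{\int_t^1 F(z)\,dz + tq\}$ and that the minimizer $t$ can indeed be taken to be $m_{\ell,\theta}$ (this uses $m_{\ell,\theta}\in B_{\ell,\theta}$ together with the fact that the MPC constraint binds precisely on the complement of the intervals from Proposition~\ref{prop:general-result}). A secondary, more bookkeeping-level obstacle is making sure the ``check only at the atoms of $G_\theta$'' reduction is stated cleanly — I would justify it by the convexity/piecewise-linearity argument above rather than citing it, to keep the lemma self-contained.
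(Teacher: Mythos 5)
Your proposal takes a different route from the paper, and unfortunately it has a genuine gap at its central step.

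The paper's proof works in the quantile domain: it rewrites the MPC condition as $\int_0^q G_\theta^{-1}(t)\,dt \ge \int_0^q F^{-1}(t)\,dt$ for all $q$ (citing the standard equivalence for second-order stochastic dominance), and then exploits the fact that $\int_0^q G_\theta^{-1}$ is piecewise linear with kinks at the mass cumulants $q_\ell = \sum_{a\ge\ell}p_{a,\theta}$. Evaluating the quantile inequality at those kink points directly produces the $F^{-1}$-form constraint. You instead stay in the CDF domain, observe (correctly) that by convexity of the slack between consecutive atoms the MPC constraint need only be checked at $\omega = m_{\ell,\theta}$, and then try to manipulate $\int_{m_{\ell,\theta}}^1 G_\theta \ge \int_{m_{\ell,\theta}}^1 F$ into the $F^{-1}$-form.

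This last step is where the argument breaks. Evaluating the MPC constraint at $\omega = m_{\ell,\theta}$ and expanding by parts gives
\[
  \sum_{a \ge \ell} z_{a,\theta} \;\le\; m_{\ell,\theta}\, q_\ell + \int_{m_{\ell,\theta}}^1 \bigl(1-F(z)\bigr)\,dz
  \qquad\text{where } q_\ell = \sum_{a \ge \ell} p_{a,\theta}.
\]
The quantity $\int_{1-q_\ell}^1 F^{-1}(x)\,dx$ equals $\min_t \bigl\{\, t q_\ell + \int_t^1 (1-F(z))\,dz \,\bigr\}$, whose minimizer is $t^* = F^{-1}(1-q_\ell)$, which in general is \emph{not} $m_{\ell,\theta}$. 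So the right-hand side you obtain from evaluating MPC at $m_{\ell,\theta}$ is weakly \emph{larger} than $\int_{1-q_\ell}^1 F^{-1}(x)\,dx$, and your per-constraint identification is false. A concrete check: take $F$ uniform on $[0,1]$ and $G_\theta$ with equal atoms at $1/4$ and $3/4$; then $q_2 = 1/2$, the lemma's constraint at $\ell=2$ is $z_{2,\theta} \le 3/8$ and binds, while the MPC constraint at $m_{2,\theta}=3/4$ yields $z_{2,\theta} \le 3/8 + 1/32$ and is slack — the MPC constraint actually binds at $\omega = 1/2 = F^{-1}(1-q_2)$, not at any atom. The two constraint systems are equivalent \emph{as systems} (both are equivalent to MPC), but showing this equivalence is exactly the content you are trying to prove, and your per-$\ell$ ``becomes exactly'' step silently assumes it. The appearance of $m_{\ell,\theta}$ in your claimed formula for the $\min_t$ (a quantity that cannot depend on $m_{\ell,\theta}$) is another symptom of the same confusion. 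The fix is to follow the paper and make the quantile reparametrization first, which relocates the ``kink points'' to the $q_\ell$'s, where the evaluation does reproduce the lemma's constraint exactly. (There are also minor sign issues — the MPC direction is $\int_\omega^1 G_\theta \ge \int_\omega^1 F$, not $\le$, and your integration-by-parts formula is for $\int_m^1(1-G_\theta)$ rather than $\int_m^1 G_\theta$ — but these look like typos and are not the substantive problem.)
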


Our observations so far establish that  the incentive compatibility and MPC constraints can both be expressed in terms of the $(p,z)$ tuple.
%Moreover, it can be readily seen that there is a one-to-one correspondence between such a tuple and posterior mean distributions of the type given in Lemma~\ref{lem:oneAtomPerInterval}.
As a consequence of these observations we can reformulate the problem of obtaining optimal  SGR mechanisms, given in Proposition~\ref{prop:optimal-mechanism}, in terms of $(p,z)$ as follows:
\begin{equation} \label{eq:optPrivate} \tag{OPT2}
	\begin{aligned}
		\max_{\substack{p  \in (\Delta^{|A|})^\Theta\\z \in \RR^{ |A| \times |\Theta| }\\ y \in \RR^{  |A| \times |\Theta|^2 } }} & \quad  \sum_{\theta \in \Theta} \phi(\theta) \, \sum_{a \in A} p_{a,\theta} v(a,\theta) \\
		s.t. \quad & \sum_{a\geq \ell } z_{a,\theta} \leq 
		\int_{1-  \sum_{a\geq \ell}p_{a,\theta} }^{1} F^{-1}(x) dx
		&&  \forall\,\theta\in \Theta,\ell > 1 , \\
&\sum_{{a\in A} } z_{a,\theta} = \int_{0}^{1} F^{-1}(x) dx &&  \forall\,\theta\in \Theta, \\
%        &\begin{aligned}
%        &\sum_{a' \in A}h_{\theta,a} p_{\theta',a'} + c_{\theta,a} \left({z_{\theta',a'}}-b_{\theta,a}{p_{\theta',a'}} \right)\\
%        &\hspace{2cm}\leq \sum_{a'\in A} h_{\theta,a'}{p_{\theta,a'}} + c_{\theta,a'} \left({z_{\theta,a'}}-b_{\theta,a'}{p_{\theta,a'}} \right)
%        \end{aligned}&&\forall\, \theta,\theta' \in \Theta, a,a' \in A\\
&  {u}_1(a,\theta)z_{a',\theta'}+{u}_2(a,\theta)p_{a',\theta'} \leq 
% 		&h_{a,\theta} p_{a',\theta'}  + c_{a,\theta} \left({z_{a',\theta'} }-b_{a,\theta}{p_{a',\theta'}} \right)
% 		\leq 
		y_{a',\theta,\theta'} &&\forall\, \theta,\theta' \in \Theta, a,a' \in A,\\
		&\sum_{a'\in A } y_{a',\theta,\theta'}    \leq 
		\sum_{a\in A} \left(
		 {u}_1(a,\theta)z_{a,\theta}+{u}_2(a,\theta)p_{a,\theta} \right)
% 		h_{a,\theta} {p_{a,\theta} } + c_{a,\theta}   \left({z_{a,\theta} }-b_{a,\theta}   {p_{a,\theta}  } \right)
		&&\forall\, \theta,\theta' \in \Theta,\\
	 &p_{a,\theta}    b_{a-1,\theta}   \leq z_{a,\theta}  \leq p_{a,\theta}   b_{a,\theta}    &&\forall\, \theta\in \Theta, a \in A\,. 
	\end{aligned}
\end{equation}
In this formulation, the first two constraints are the restatement of the MPC constraints (see Lemma~\ref{lem:majorization}).
The value $y_{a',\theta,\theta'}$ corresponds to the utility the agent of type $\theta$ gets from observing the signal associated with type $\theta'$ and taking the optimal action  when the recommended action is $a'$. It can be easily checked that $y_{a',\theta,\theta'} = \max_{a\in A} 
 {u}_1(a,\theta)z_{a',\theta'}+{u}_2(a,\theta)p_{a',\theta'}$ at an optimal solution.\footnote{This is because when  $y_{a',\theta,\theta' } $ is strictly larger than the right hand side, it can be decreased to construct another feasible solution with the same objective.}
Thus, it follows that the third and fourth constraints restate the incentive compatibility constraint \eqref{eq:IC}, by using $y_{a',\theta,\theta'}$ to capture the summands in the right hand side of the aforementioned constraint. 
Finally, the last constraint  captures that the posterior mean $z_{a,\theta}/p_{a,\theta}$ must lie in $B_{a,\theta}$ for the action $a$ to be optimal.

It is worth pointing out that \eqref{eq:optPrivate} is a finite-dimensional \emph{convex} optimization problem.
This is unlike the infinite dimensional optimization formulation of Proposition \ref{prop:optimal-mechanism}.
 \eqref{eq:optPrivate}  restates the designer's problem in terms of the $(p, z)$ tuple.
Two points about this reformulation are important to highlight. 
First, 
an alternative approach would involve optimizing directly over distributions
$G^\theta$ that satisfy the IC constraints \eqref{eq:IC-n1} and
 have a single mass point $m_{a,\theta}\in B_{a,\theta}$ for each $a\in A$ with weight $p_{a,\theta}$. 
%that satisfy the conditions of Lemma \ref{lem:oneAtomPerInterval}.
This could be formulated as a finite-dimensional problem as well (by searching over the location $m_{a,\theta}$ and weight $p_{a,\theta}$ of each mass point).
However, this approach does not yield a convex optimization formulation as the set of such $(p, m)$ tuples is \emph{not} convex.
% tuples that satisfy the conditions of Lemma~\ref{lem:oneAtomPerInterval} 
%does \emph{not} constitute a convex set.
The formulation in \eqref{eq:optPrivate} amounts to a change of variables that yields a convex program.

Second, given an optimal solution to \eqref{eq:optPrivate}, the  distributions $(G^\theta)_{\theta \in \Theta}$ 
of an optimal SGR mechanism
can be obtained straightforwardly by placing a mass point with weight $p_{a,\theta}$ at $z_{a,\theta}/p_{a,\theta}$ for each action $a$ with $p_{a,\theta}>0$.
Moreover, as discussed in Section \ref{sec:mpc-optimization}, an optimal mechanism that induces these distributions can be obtained by constructing a laminar partition of the state space (by following the approach in Proposition \ref{prop:laminar} and Lemma \ref{lem:construct} of the Appendix).
These observations imply our next proposition.
 \begin{proposition}
 For every optimal solution $(p,z,y)$ of \eqref{eq:optPrivate} the SGR mechanism  which recommends the action $a$ for type $\theta$ with probability $p_{a,\theta}$ and induces a posterior mean of $z_{a,\theta}/p_{a,\theta}$  (when $p_{a,\theta}>0$) is an optimal mechanism. Moreover, there exists a laminar partitional mechanism implementing these distributions.
% An optimal mechanism can be obtained by solving \eqref{eq:optPrivate}.
 \end{proposition}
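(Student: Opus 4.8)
The plan is to chain together the structural results that have already been established. The final proposition claims two things about an optimal solution $(p,z,y)$ of \eqref{eq:optPrivate}: first, that the associated recommendation mechanism is optimal for the original persuasion problem; second, that such an optimal mechanism can be realized with laminar partitional signals. I would prove these in sequence, treating the first as essentially a bookkeeping argument and the second as the place where the MPC-optimization machinery from Section~\ref{sec:mpc-optimization} does the real work.

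For the first claim, I would argue by a round trip between the two formulations. Starting from \eqref{eq:optPrivate}, set $G_\theta$ to be the distribution placing mass $p_{a,\theta}$ at the point $z_{a,\theta}/p_{a,\theta}$ for each $a \in A$ (with the usual convention when $p_{a,\theta}=0$). By Lemma~\ref{lem:majorization}, the first two constraints of \eqref{eq:optPrivate} are exactly $G_\theta \succeq F$, so by Lemma~\ref{lem:feasibility} each $G_\theta$ is induced by some signal. The last constraint guarantees $m_{a,\theta}=z_{a,\theta}/p_{a,\theta}\in B_{a,\theta}$, so action $a$ is optimal upon the recommendation $a$; hence this is a genuine recommendation mechanism in the sense of Lemma~\ref{lem:oneAtomPerInterval}, and the receiver is willing to follow recommendations. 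The third and fourth constraints, together with the observation (in the footnote after \eqref{eq:optPrivate}) that $y_{a',\theta,\theta'}$ equals the max over $a$ at optimum, reproduce the incentive-compatibility constraint \eqref{eq:IC}, which is the recommendation-mechanism form of \eqref{eq:prog0-IC}. Finally the objective of \eqref{eq:optPrivate} equals \eqref{eq:sender-payoff} for this mechanism because $v$ depends only on $(a,\theta)$. Conversely, any optimal direct mechanism gives rise, via Lemma~\ref{lem:oneAtomPerInterval}, to a recommendation mechanism whose $(p,z)$ (with $z_{a,\theta}=m_{a,\theta}p_{a,\theta}$) is feasible for \eqref{eq:optPrivate} with the same value. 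So the optimal values coincide and every optimal $(p,z,y)$ yields an optimal mechanism; this is Proposition~\ref{prop:optimal-mechanism} re-expressed in the $(p,z)$ coordinates.

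For the second claim — existence of an optimal mechanism with laminar partitional signals — I would invoke Lemma~\ref{lem:decoupled} to reduce, type by type, to the abstract problem \eqref{eq:linear-problem}: with $G_{-\theta}$ fixed at the optimum, $G_\theta^\ast$ solves a maximization over $H \succeq F$ of $\int \bar v(s,\theta)\,dH(s)$ subject to finitely many linear side constraints of the form $\int \bar u(s,\eta)\,dH(s) \gtrless \text{const}$. These are $n$ continuous side constraints, so Proposition~\ref{prop:general-result} applies: there is an optimal posterior-mean distribution $\tilde G_\theta$ that agrees with $F$ outside a countable union of intervals $I_j$ and, inside each $I_j$, redistributes mass to at most $n+2$ mass points with the same conditional mean. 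Then Proposition~\ref{prop:laminar} (and Lemma~\ref{lem:construct} of the Appendix) produces, for each $I_j$, a laminar partition $\Pi_j$ of $I_j$ whose blocks have exactly the probabilities and conditional means of those mass points; revealing the block of $\Pi_j$ containing $\omega$ when $\omega\in I_j$, and revealing $\omega$ exactly otherwise, is a laminar partitional signal inducing $\tilde G_\theta$. Doing this for every $\theta$ gives a laminar partitional mechanism that attains the optimal value (it induces distributions with the same objective and satisfies all IC/MPC constraints since those depend only on the induced distributions), which is the content of the second sentence of the proposition.

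The main obstacle is making the reduction to \eqref{eq:linear-problem} clean: one must check that the constraints in Lemma~\ref{lem:decoupled}, namely \eqref{eq:prog-IC-1} and \eqref{eq:prog-IC-2}, are genuinely of the form $\int u_i\,dH \geq 0$ with \emph{continuous} $u_i$ — which holds because $\bar u(\cdot,\theta)$ is continuous (noted right after its definition) and $e_\theta, d_\eta$ are constants independent of $H$ — and that replacing $G_\theta^\ast$ by the Proposition~\ref{prop:general-result} solution $\tilde G_\theta$ for one type does not disturb the constraints tied to the \emph{other} types. The latter is fine because $\tilde G_\theta$ has the same value of $\int \bar v(s,\theta)\,dH$ and satisfies \eqref{eq:prog-IC-1}–\eqref{eq:prog-IC-2}, hence re-assembling it with the unchanged $G_{-\theta}^\ast$ still satisfies \eqref{eq:prog0-IC} and \eqref{eq:prog0-majorization} in Proposition~\ref{prop:optimal-mechanism}; but this point deserves an explicit line. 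The rest is citation of the already-proved propositions. A secondary, purely notational nuisance is handling $p_{a,\theta}=0$ atoms and the relabeling convention for never-optimal actions, but these have already been dealt with in the setup and need only a pointer.
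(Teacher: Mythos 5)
Your first paragraph (the round-trip argument showing that an optimal $(p,z,y)$ yields an optimal recommendation mechanism and vice versa) matches the paper's reasoning exactly: the paper derives the proposition by stringing together Lemma~\ref{lem:oneAtomPerInterval}, the IC reformulation \eqref{eq:IC}, and Lemma~\ref{lem:majorization}, which is precisely what you do.

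For the second sentence of the proposition you take a detour that the paper does not, and the detour introduces a genuine mismatch with the claim. You pass through Lemma~\ref{lem:decoupled} and Proposition~\ref{prop:general-result}, which produce a \emph{new} optimal distribution $\tilde G_\theta$ for the decoupled problem, and then apply Proposition~\ref{prop:laminar} to $\tilde G_\theta$. But $\tilde G_\theta$ need not coincide with the $G_\theta$ defined by the given optimal $(p,z)$, so at the end of your argument you have a laminar partitional mechanism inducing \emph{some} optimal posterior-mean distribution — not necessarily the recommendation mechanism the proposition is about. The proposition asserts that the specific mechanism determined by $(p,z)$ (which recommends $a$ with probability $p_{a,\theta}$ and induces posterior mean $z_{a,\theta}/p_{a,\theta}$) admits a laminar partitional implementation. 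This also makes the one-type-at-a-time replacement argument you flag as "deserving an explicit line" an unnecessary burden: you only need it because you altered the distributions.

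The paper's route is shorter and avoids this: the distribution $G_\theta$ read off from $(p,z)$ is already discrete with finitely many atoms and, by the first two constraints of \eqref{eq:optPrivate} and Lemma~\ref{lem:majorization}, an MPC of $F$. Hence the set where the MPC constraint binds is closed, its complement is a countable union of open intervals, and on each such interval the hypotheses of Lemma~\ref{lem:construct} hold directly (with $G(a)=F(a)$ supplied by Lemma~\ref{lem:mass-preservation}). Applying Lemma~\ref{lem:construct} interval by interval, and revealing $\omega$ exactly where the MPC constraint binds, yields a laminar partitional signal that induces exactly this $G_\theta$. There is no need to invoke Proposition~\ref{prop:general-result} or Lemma~\ref{lem:decoupled} at all — and doing so is what opens the gap.
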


\begin{remark}
For the multi-agent case it is possible to obtain a similar finite-dimensional optimization problem.
However, in this case, there are two difficulties. 
First, while in the single-agent case the actions associated with different posterior mean levels are known, this is not the case for multiple agents. 
This issue can be circumvented by optimizing over the order of posterior mean levels associated with different action profiles.
Second, unlike the formulation of this section the resulting optimization problem is non-convex. In some instances, including the one in Section~\ref{se:motivatingExample}, one can get around these difficulties by leveraging further structure of the problem. 
More generally, numerical methods for non-convex optimization can be used.
See Appendix~\ref{app:optFormMultiAgent} for details.
\end{remark}

 \subsection{An Example}\label{se:buyerExample}
 
 Section \ref{se:motivatingExample} illustrates optimal laminar partitional mechanisms in a Cournot game.
We next illustrate our results through a simpler single agent example. This example
generalizes the buyer-seller setting from \cite{kolotilin2017persuasion}, who assume single unit demand, to the case where the buyer can demand more than one unit and has a decreasing marginal utility in the number of units.
As our example reduces to their setup for the case of a single unit, this example allows us to highlight the effects of the buyer having more than two actions.

In this example, the agent is a buyer who decides how many units of an indivisible good to purchase.
He is privately informed about his type which captures his taste for the good.
The designer is a seller who 
controls information  about the quality of the good, captured by the state.
We assume that prices are linear in consumption and set the price of one unit of the good to $\nicefrac{10}{3}$.
The utility the buyer derives from the $a$-th unit of the good is given by%\footnote{Assuming that the agent's preferences equals the sum of $\omega$ and $\theta$ is without loss of generality in the two action case if his preferences are linear in $\omega$ and the designer prefers the agent to take the action optimal for high states, c.f. the discussion Section~3.3 in \cite{kolotilin2017persuasion}.}
\[
    (\theta + \omega) \max\{ 5 - a , 0 \} \,.
\]
His marginal utility of consumption decreases linearly in the number of goods, increases in the good's quality $\omega$, and in his taste parameter $\theta$.
The quality of the good is distributed uniformly in $[0,1]$ and the buyer's taste parameter either takes a low $\theta = 0.3$, intermediate $\theta=0.45$, or high value $\theta = 0.6$ with equal probability. 
The seller commits to a
laminar partitional mechanism
%menu of signals, one for each type $\theta$ of the buyer,
to maximize the (expected) number of units sold.
It is straightforward to see that in this problem the agent considers finitely many actions: purchasing  $0,1$ and $2$ units (see Appendix \ref{app:detailsForBuyerExample}). Hence the designer's problem can be 
 formulated and solved using the finite-dimensional convex program of Section~\ref{subse:finiteAction}.
 We solve this program, and construct the optimal laminar partitional mechanism (displayed in Figure~\ref{fig:newEx})%
 %laminar partitional signal as discussed at the end of Section \ref{sec:mpc-optimization}.
 %The resulting optimal SGR mechanism is given 
 .\footnote{In the figure,  the cutoffs are reported after rounding, e.g., the cutoff for the high type is approximately at $0.06$. For sake of exposition, in our discussion  we stick to the rounded values.}

\begin{figure}[h]
        \centering
        \includegraphics[width=15cm]{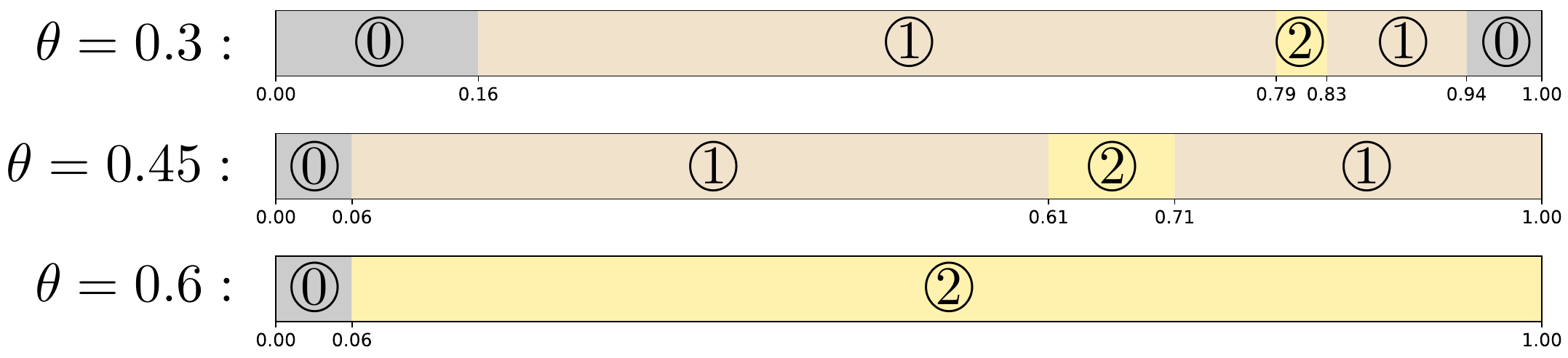}
        \caption{The optimal SGR mechanism.}
        \label{fig:newEx}
\end{figure}
In this figure,
each bar represents the state space and its differently colored regions the optimal partition (for the corresponding type).
For each type, the designer reveals whether the state belongs to the region(s) marked with $0,1,2$; and the buyer  finds it optimal to purchase the corresponding number of units.
Under the optimal mechanism, the \emph{expected} purchase quantity increases with the type.\footnote{This can be seen as the high type purchases two units in the states where the medium type purchases only one unit, which in turn leads to higher expected purchase. Similarly, when the low type purchases zero units, the medium type purchases zero or one units; and the size of the set of states where the medium type purchases two units is larger than that for the low type.}
While the expected quantities are ordered, the quantities purchased by different types for a given state  are \emph{not}. For instance, for   states between $0.79$ and $0.83$ the low and the high types purchase two units, and the medium type purchases one unit. 
Note that this implies that the purchase regions of buyers are not ``nested'' in the sense of \cite{guo2019interval}, who establish the optimality of such a nested structure for the case of two actions $|A| = 2$.
Moreover, low and medium types may end up purchasing lower quantities in some high states, than they do for lower states. In fact, under the optimal mechanism, for the best and the worst states, the low type purchases zero units.
Thus, in the optimal mechanism, the low and medium type of the buyer sometimes consume a \emph{smaller} quantity of the good if it is of higher quality.
This (maybe counterintuitive) feature of the optimal mechanism is a consequence of the incentive constraints:
By pooling some high states with low states, one makes it less appealing for the high type to deviate and observe the signal meant for a lower type.

\begin{remark}
In case of binary actions and under some assumptions on the payoff structure,%
\footnote{Both papers normalize the payoff of the action $0$ to zero. The assumption in \cite{kolotilin2017persuasion} is equivalent to the assumption that 
for $\theta' \leq \theta$
if $\E{u(1,\omega,\theta')} \geq 0$ then $\E{u(1,\omega,\theta)} \geq 0$ under any probability measure. \cite{guo2019interval} establish this result under assumptions that in our setting are equivalent to $\omega \mapsto \frac{u(1,\omega,\theta)}{v(1,\omega,\theta)}$  and $\omega \mapsto \frac{u(1,\omega,\theta)}{u(1,\omega,\theta')}$ are increasing for all $\theta' \leq \theta$.} 
\cite{kolotilin2017persuasion} and \cite{guo2019interval} establish that the optimal mechanism admits a ``public'' implementation. 
For each type the corresponding laminar partitional signal induces one action in a subinterval of the state space, and the other action in the complement of this interval. It can be shown that these intervals are nested which implies that the mechanism that reveals messages associated with different types to all agent types is still optimal. Thus, as opposed to first eliciting types and then sharing with each type the realization of the signal associated with this type, the designer can achieve the optimal outcome by sharing a  signal (which encodes the information of the signals of all types) publicly with all agent types. In other words, screening is not useful.
By contrast, it is straightforward to establish that the mechanism illustrated in Figure \ref{fig:newEx} does not admit a public implementation, and any public mechanism yields strictly lower payoffs  to the designer. See Appendix \ref{app:detailsForBuyerExample} for further details.
\end{remark}

\begin{remark}
Given the mechanism
of Figure~\ref{fig:newEx},
one can readily check  which incentive compatibility constraints are  binding.
It turns out that both the medium and the high types are indifferent among reporting their types as low, medium, or high.
Similarly, the low type is indifferent between reporting his type as low or medium, but achieves strictly lower payoff from reporting his type as high.
Interestingly, these observations imply that unlike in classical mechanism design settings ``non-local'' incentive constraints might bind in the optimal mechanism.\footnote{This in despite the fact that   the agent's utility is supermodular in his actions and type.}% This is also in contrast to Theorem 6.1 in \cite{guo2019interval} who establish in their binary action setting that the designer can ignore all upward incentive constraints when designing the optimal mechanism.}
\end{remark}

\subsection{The Value of Screening and Private Signals} \label{subse:privatePublic}
As discussed earlier,  the optimal laminar partitional mechanism reveals different signals to different types.
What if we restricted attention to public signals where all types observe the same signal? 
Suppose that the designer's payoff is non-negative.
For any  mechanism $(\mu^1, \ldots, \mu^{n})$ where different types observe different signals, the designer can always construct a public mechanism $(\mu^\theta, \ldots, \mu^\theta)$ where each type observes the signal $\mu^\theta$ associated with type $\theta$ in the original mechanism. 
Denoting by $G^\theta$ the posterior mean distribution under $\mu^\theta$, 
%As the designer's payoffs are non-negative, 
we conclude that
doing so and choosing $\theta$ optimally guarantees her at least a payoff of 
\[
    \max_{ \theta \in \Theta } \phi(\theta) \int_\Omega \bar{v} ( s , \theta  ) dG^\theta ( s ) \,.
\]
%Note that, as before, restriction to SGR mechanisms is without loss, and in particular the above payoff can be achieved by a public SGR mechanism.
Since the designer's payoff is nonnegative, 
this is at least a $1/|\Theta|$ fraction of the payoff achieved by the original mechanism:
\[
    \sum_{\theta \in \Theta}\phi(\theta) \int_\Omega \bar{v} ( s , \theta  ) dG^\theta ( s ) \,.
\]
Thus, a public mechanism  guarantees a $1/|\Theta|$ fraction of the payoff achieved by the optimal 
 mechanism to the designer.
We next  establish that this bound is tight.
\begin{proposition} \label{prop:public-bound}Assume that the designer's utility $v$ is non-negative.
    \begin{compactenum}[(i)]
    \item In any problem there exists a public persuasion mechanism which achieves a $1/|\Theta|$ fraction of the optimal value achievable by an optimal mechanism. 
    %private persuasion mechanism.
    
    \item In some problems no public persuasion mechanism yields more than a $1/|\Theta|$ fraction of the optimal value achievable by 
    an optimal mechanism.
%    a private persuasion mechanism.
    \end{compactenum}
\end{proposition}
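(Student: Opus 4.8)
The plan is to prove parts (i) and (ii) separately, with (i) being the short one. Fix an optimal private mechanism with induced posterior-mean distributions $(G_1,\dots,G_n)$ and designer value $V^\ast=\sum_{\theta\in\Theta}g(\theta)\int\bar v(s,\theta)\,dG_\theta(s)$ (so $F\preceq G_\theta$ for all $\theta$ by Lemma~\ref{lem:feasibility}). Pick $\theta^\ast\in\argmax_{\theta}g(\theta)\int\bar v(s,\theta)\,dG_\theta(s)$ and consider the public mechanism in which every type observes the signal $\mu^{\theta^\ast}$; since a public mechanism has no type report, it places no incentive constraints and is feasible. Under it every type faces the common posterior-mean distribution $G_{\theta^\ast}$ and takes an optimal action, so the designer's payoff is $\sum_{\theta}g(\theta)\int\bar v(s,\theta)\,dG_{\theta^\ast}(s)$. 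Since $v\ge 0$, each summand is non-negative, so this payoff is at least $g(\theta^\ast)\int\bar v(s,\theta^\ast)\,dG_{\theta^\ast}(s)=\max_{\theta}g(\theta)\int\bar v(s,\theta)\,dG_\theta(s)\ge\tfrac1n V^\ast$, the last inequality because a maximum dominates an average of $n$ terms. This gives (i).

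For (ii) I would work inside Example~\ref{ex:public-private}, proving an upper bound for all public mechanisms and a matching lower bound for a specific private mechanism. Any public signal induces the same message distribution, hence the same posterior-mean distribution $G$ (with $F\preceq G$), for every type, so its value is $\sum_\theta g(\theta)\int\bar v(m,\theta)\,dG(m)=\tfrac1n\int\bigl(\sum_\theta\bar v(m,\theta)\bigr)dG(m)$. The crucial step is the pointwise bound $\sum_\theta\bar v(m,\theta)\le 1$ for every $m$, which follows from \eqref{eq:constructV} once one checks that the ``good'' sets $B_{L,2\theta}\cup B_{L,-2\theta+1}\cup B_{R,2n+2-2\theta}\cup B_{R,2\theta-2n-1}$ are pairwise disjoint across $\theta$; together with $\bar v\ge 0$ this gives value at most $1/n$ for every public mechanism, regardless of $G$. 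For the lower bound I would take, for each type $\theta$, the four-point distribution $G_\theta$ putting mass $\tfrac14$ on each of $b_{L,2\theta},b_{L,-2\theta},b_{R,2n+2-2\theta},b_{R,-2n-2+2\theta}$, verify $F\preceq G_\theta$ via the explicit (laminar) partition in the accompanying footnote so that by Lemma~\ref{lem:feasibility} it is induced by a signal, and verify incentive compatibility: since $\bar u(m,\eta)=m^2$ at every support point of every $G_\theta$, the payoff a type $\eta$ obtains from the signal meant for $\theta$ is $\int m^2\,dG_\theta(m)=\operatorname{Var}(G_\theta)+\tfrac14$, and the identity $\operatorname{Var}(G_\theta)=\tfrac{9n+1}{128n}$ makes this the same for every $\theta$, so every type is indifferent among all signals. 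Finally each support point of $G_\theta$ lies in a set on which $\bar v(\cdot,\theta)=1$, so this private mechanism delivers value $\sum_\theta g(\theta)\cdot 1=1$; as $\bar v\le 1$ everywhere, the optimal private value equals $1$, and therefore no public mechanism exceeds $\tfrac1n=\tfrac1n\cdot(\text{optimal private value})$.

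The only real obstacle is the bookkeeping inside the example: checking that the cutoff sets are pairwise disjoint (which delivers $\sum_\theta\bar v(m,\theta)\le 1$) and, more delicately, that all $n$ distributions $G_\theta$ carry the same variance, so that every type is exactly indifferent and the private mechanism is incentive compatible. The cutoffs $b_{L,k}=\tfrac14+\tfrac18\operatorname{sgn}(k)\sqrt{|k|/(2n)}$ and $b_{R,k}=\tfrac34+\tfrac18\operatorname{sgn}(k)\sqrt{|k|/(2n)}$ are reverse-engineered precisely so that the conditional variances on the two halves of $[0,1]$ equal $\tfrac{\theta}{64n}$ and $\tfrac{n+1-\theta}{64n}$, hence sum (via the law of total variance, adding the between-halves term $\tfrac1{16}$) to the $\theta$-independent constant $\tfrac{9n+1}{128n}$; once these identities and the feasibility check $F\preceq G_\theta$ are established, both the $1/n$ counting bound for public mechanisms and part (i) are immediate.
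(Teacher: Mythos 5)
Your proof is correct and follows essentially the same route as the paper: part~(i) via the observation that the best single-type public mechanism dominates the largest summand in the private value (hence at least the $1/n$-average), and part~(ii) via Example~\ref{ex:public-private}, the pointwise bound $\sum_\theta \bar v(m,\theta)\le 1$ from disjointness of the ``good'' sets, the four-point distributions $G_\theta$, and the equal-variance identity $\operatorname{Var}(G_\theta)=\tfrac{9n+1}{128n}$ that delivers incentive compatibility. The only (welcome) addition is your explicit remark that $\bar v\le 1$ pins down the optimal private value at exactly $1$, a step the paper leaves implicit.
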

We prove (the second part of) this proposition by explicitly constructing an example where the $1/|\Theta|$ ratio is achieved.
%We next construct an example which shows that this bound is tight.
The idea behind the example is to give all types of the agent identical preferences and chose the payoff of the designer such that she wants different types of the agent to chose different actions.
In a public mechanism all agents have to choose the same action which leads to at most $1$ out of $|\Theta|$ types choosing the action preferred by the designer.
The example is constructed such that in a mechanism with private signals the designer can induce \emph{all types} to chose her most preferred action. If the payoff from inducing the correct action equals $1$ and the payoff from any other action to the designer equals $0$, this achieves the $1/|\Theta|$ bound. The main challenge in the construction, which is handled through a careful choice of payoffs, is to ensure that all types of the agent are indifferent between all signals to ensure that no type has incentives to misreport.

Two points are worthwhile highlighting about the example. 
First, it achieves the  worst case $1/|\Theta|$ bound  even when attention is restricted to a simple subclass of 
problem instances.
For instance,  the designer has a payoff of either $0$ or $1$ for different actions of the agent, and the agent has finitely many actions and type-independent utility functions. Second,  by relabeling the actions one can easily modify the example such that the designer's utility is independent of the agent's type and the agent's utility depends on his type. Proposition~\ref{prop:public-bound} thus holds unchanged even if one restricts attention to problems where the designer's utility depends only on the agent's action, but not on his type or belief.

\section{Discussion and Conclusion}\label{sec:discussion}

Our results can be extended in various dimensions. 
Persuasion problems where the designer's payoff depends on the induced posterior mean, but the admissible posterior mean distributions need to satisfy additional side-constraints are naturally subsumed.
Below we discuss some other economically-relevant extensions and  applications of our results.

\paragraph{Type-Dependent Participation Constraints}
In our analysis we can allow each type of an agent to face a participation constraint. That is, the mechanism must provide the relevant type  with at least some given expected utility. Our analysis and results carry over to this case unchanged as \eqref{eq:ICNew} already encodes such an endogenous constraint capturing the value of deviating by observing the signal meant for another type. To adjust the result for this case one just needs to in addition include the value of opting out of the mechanism in the incentive constraint.
%replace $e_\theta$ by the lower bound on the agent's utility whenever this lower bound is larger than $e_\theta$.

\paragraph{Competition among Multiple Designers}

Another application of our approach is to competition among multiple designers. Suppose that each designer offers a mechanism and the agents can  choose to observe the signal of \emph{one} of them.\footnote{Another plausible model of competition is one where the agents can observe   the signals of   all designers. For an analysis of this situation see \cite{gentzkow2016competition}.} 
Each designer receives a higher payoff if an agent chooses 
her mechanism and might have different preferences over the agents' actions.
Again the designer has to ensure that the signal she provides each type of an agent with, yields a sufficiently high utility such that this type does not prefer to observe either another signal of the same designer or a signal provided by a different designer.
This situation corresponds to an endogenous type-dependent participation constraint which is determined in equilibrium.
As our analysis works for any participation constraint it also carries over to this case.

\paragraph{Beyond Persuasion Problems}

An immediate extension is to allow the designer to influence the agents' utilities by also designing transfers.
For instance, in the context of the example of Section \ref{se:buyerExample},  the seller might not only control the information she provides to the buyer, but also might charge different buyers different prices.
Such settings are  considered, e.g., in \cite{wei2020reverse,guo2018optimal,yang2020selling,yamashita2018optimal}.
As our results apply for any utility function, it is still without loss to restrict attention to laminar partitional signals.
Consider the case of a single agent, who (i) has finitely many actions, and (ii) his preferences are quasi-linear in the transfers.
The designer's optimal mechanism (which now determines the information structure as well as the transfers)
can be formulated following an approach similar to the one in Section~\ref{subse:finiteAction}.
Additional variables which capture transfers need to be added to the optimization formulation of that section.
Due to (i) these transfers can be represented by finite-dimensional vectors, and due to (ii) the resulting problem remains convex.
Thus, similar to  Section~\ref{subse:finiteAction} an optimal mechanism can be obtained tractably by solving a finite-dimensional convex program.

Finally, while this paper focused on persuasion problems, the mathematical result we obtain on maximization problems over mean preserving contractions under side-constraints can be applied in other economic settings which lead to similar mathematical formulations.
For example as first observed in \cite{kolotilin2019persuasion} the persuasion problem is closely related to delegation problems where the agent privately observes the state and the designer commits to an action as a function of a message sent by the agent. \cite{kleiner2020extreme} show that this problem can be reformulated as a maximization problem under majorization constraints which is a special case of the problem we discuss in Section~\ref{sec:mpc-optimization}. Our results thus allow one to analyze delegation problems where there is a constraint on the actions taken by the designer.\footnote{While mathematically closely related, the delegation problem is economically fundamentally different from the persuasion problem. For example the majorization constraint in the delegation problem corresponds to an incentive compatibility constraint while it corresponds to a feasibility constraint in the persuasion problem. The side constraints correspond to a feasibility constraint in the delegation problem while they correspond to an incentive compatibility constraint in the persuasion problem.} For example if the agent is the manager of a subdivision of a firm and the designer is the CEO who allocates money to that subdivision depending on the manager's report, our results allow one to analyze the case where the CEO faces a budget constraint and on average cannot allocate more than a given amount to that subdivision.

\appendix
\section*{Appendix}

\begin{proof}[Proof of Lemma~\ref{lem:majorization}]
The condition $G^\theta \succeq F$ can  equivalently be stated as:
\begin{equation} \label{eq:majorization1}
    \int_{0}^x (1- G^\theta(t)) dt \geq \int_{0}^x (1- F(t)) dt,
\end{equation}
for all $x$, where the inequality holds with equality for $x=1$. 
This inequality can be expressed in the quantile space as  
\begin{equation} \label{eq:majorization2}
    \int_{0}^x (G^\theta)^{-1}(t) dt \geq \int_{0}^x F^{-1}(t) dt,
\end{equation}
for all $x\in[0,1]$, with equality at $x=1$.
Note that since $G^\theta$ is a discrete distribution, this condition holds if and only if it holds for $x=\sum_{a\leq \ell} p_{a,\theta}$ and  $\ell \in A$.
For such $x$, we have
\begin{equation}
    \int_{0}^x (G^\theta)^{-1}(t)  = \sum_{a\leq \ell } p_{a,\theta} m_{a,\theta}= \sum_{a\leq \ell} z_{a,\theta},
\end{equation}
and \eqref{eq:majorization2} becomes
\begin{equation} \label{eq:majorization3}
\sum_{a\leq \ell} z_{a,\theta} \geq \int_{0}^{\sum_{a\leq \ell} p_{a,\theta}} F^{-1}(t) dt.
\end{equation}
Since $\int_{0}^1 F^{-1}(t) dt = \int_{0}^1 (G^\theta)^{-1}(t) dt = \sum_{a \in A} z_{a,\theta}$, the claim follows from \eqref{eq:majorization3} after rearranging terms.
\end{proof}

\begin{lemma}\label{lem:existence}
Let $|A|<\infty$ or $|N|=1$ an optimal mechanism exists. 
\end{lemma}
\begin{proof}
We first argue that an optimal mechanism exists in the case of finitely many actions $|A|<\infty$.
First, we note that set of feasible mechanisms is non-empty as the designer can always choose to reveal no information and induce a Bayes Nash equilibrium of the resulting game (which exists as there are finitely many types and actions).
The action recommendations of the associated direct mechanism simply recommend to each agent the action she would take knowing only her type in a  Bayes Nash equilibrium.
As we have argued in Section~\ref{se:sgrm} for every IC mechanism there exists an SGR mechanism which is IC and achieves the same payoff for the designer.
We can thus restrict attention to SGR mechanisms.
As discussed in Section~\ref{se:sgrm} these mechanisms are parametrized by $q^\theta \in \Delta(A)$ and $m_{a,\theta} \in [0,1]$.\footnote{Note here that $q^\theta(a)$ is the probability of the action profile $a$ given the type profile $\theta$ \emph{not} conditioning on the state.} Thus each SGR mechanism $(q,m)$ can be identified with a vector in $[0,1]^{2 |\Theta| |A| }$.
Furthermore, the expected utility of the designer and   an agent $i$ can be expressed respectively as
\begin{align*}
    &\sum_{\theta \in \Theta} \phi(\theta) \sum_{a \in A} q^\theta(a) u_i(a, m_{a,\theta},\theta), \\
    &\sum_{\theta_{-i}} \phi(\theta) \sum_{a_{-i}} q^\theta(\sigma_i(a_i), a_{-i}) u_i(\sigma_i(a_i), a_{-i} , m_{a,\theta},\theta) \,.
\end{align*}
Furthermore, the MPC constraint can be rewritten in the $(q,m)$ parametrization as
\[
    \sum_{a \in A} q^\theta(a) \max \{m_{a,\theta} - r, 0\} \geq \int_r F(x) dx \,
\]
for all $r \in [0,1]$ and with equality at $0$.
As both of the objective and the constraint are  continuous in $(q,m)$ it follows that the principal maximizes a continuous function over a compact subset of $[0,1]^{2 |\Theta| |A| }$ and hence a maximizer exists.

%In this case any direct recommendation mechanism is characterized by the function $\omega \mapsto \mu^\theta(\cdot|\omega) %\in \Delta(A)$ that maps each state $\omega$ and type vector $\theta$ to a distribution over action profiles.
%As these distributions have finite support we will identify them with their representation in $\RR^{|A|}$.
%We define a distance on the space of direct recommendation mechanisms
%\[
%    d(\mu,\tilde{\mu}) = \sum_\theta \phi(\theta) \int_\Omega ||\mu^\theta(a|\omega) - \tilde{\mu}^\theta(a|\omega)||_\infty d F(\omega) \,.
%\]
%As $v,u_i$ are bounded there exists a constant $K$ such that for all $a,a' \in A, \omega \in \Omega, \theta \in \Theta$
%\[
%    | v(a,\omega,\theta) - v(a',\omega,\theta) | \leq K \,.
%\]

We next argue existence of a maximizer for the single agent case with an arbitrary action set.
As argued in Lemma~\ref{lem:compactness} the set of feasible distributions $G^\theta$ is sequentially compact.
As the product of finitely many sequentially compact spaces is also sequentially compact the set of vectors $(G^\theta)_{\theta \in \Theta}$ is also sequentially compact.
As $\bar{u}_i$ is continuous it follows that the IC constraint \eqref{eq:IC-n1} is continuous in $G$.
As $G \mapsto \sum_{\theta \in \Theta} \phi(\theta) \int_\Omega \bar{v}(s, \theta)  dG^\theta(s)$ is upper semicontinuous it follows that the designer maximizes an upper hemicontinuous linear function over a compact convex set.
By Bauer's maximum principle a maximizer exists. %Theorem 7.69 in Aliprantis and Border page 298

%The set of vectors of distributions $G = (G^\theta)_\theta$ satisfying \eqref{eq:prog0-IC} and \eqref{eq:prog0-majorization} is compact by the same argument given in Lemma~\ref{lem:compactness} as the product of sequentially compact spaces is also compact. As $G \mapsto \sum_{\theta \in \Theta} \phi(\theta) \int_\Omega \bar{v}(s, \theta)  dG^\theta(s)$ is upper-hemi continuous it follows that an optimal solution exists.
\end{proof}

\begin{lemma}\label{lem:compactness}
Suppose $u_i:[0,1]\rightarrow \mathbb{R}$ is a continuous function for $i\in  \{1,\ldots,n\}$.
The set of distributions $G:[0,1] \to [0,1]$ that satisfy $G \succeq F$ and 
\begin{align}\label{eq:compactnessConstraint}
        & \int_\Omega u_i(s) d G(s) \geq 0 \text{ for } i \in \{1,\ldots,n\}
\end{align}
is compact in the weak topology.
\end{lemma}

\begin{proof}[Proof]% of Lemma~\ref{lem:compactness}
%To establish that the set $\{ G \colon G^\infty \succeq F \}$ is sequentially compact in the weak topology we need to show that for each 
First, note that as $u_i$ is continuous it is bounded on $[0,1]$.
Consider a sequence of distributions $G^k$, $k\in \{1,2,\ldots\}$ that satisfy the  constraints in \eqref{eq:compactnessConstraint}.
By Helly's selection theorem there exists a subsequence that converges pointwise. From now on assume that $(G^k)$ is such a subsequence and denote by $G^\infty$ the right-continuous representation of its point-wise limit.
Thus,   any sequence of random variables $m^k$ such that $m^k \sim G^k$ converges in distribution to a random variable distributed according to $G^\infty$.

As $(u_i)$ are continuous and bounded this implies that for all $i$ we have
%and all $\theta \in \Theta$
 \[
     \lim_{k \to \infty} \int_\Omega u_i(s) d G^k(s) = \int_\Omega u_i(s) d G^\infty(s) \,.
 \]
% \[
%     \lim_{i \to \infty} \int_0^1 u_k^*(s,\theta) d G^i(s) = \int_0^1 u_k^*(s,\theta) d G^\infty(s) \,.
% \] 
Furthermore, for all $x \in [0,1]$
\[
    \lim_{k \to \infty} \int_x^1 G^k(s) ds = \int_x^1 G^\infty(s),
\]
and hence $G^\infty$ also satisfies $G^\infty \succeq F$.
We have hence established sequential compactness.
As the topology of weak convergence is metrizable by the Prokhorov metric and in a metrizable space, a subset is compact if and only if it is sequentially compact the set of distributions given in the statement of the lemma is  compact with respect to the weak topology.
\end{proof}
%As Lemma~\ref{lem:compactness} is a standard result we relegated its proof to the online appendix.
%
% Make this a lemma that for any feasible solution we have to points where the majorizations constraint binds it must be that the mass assigned to that interval is preserved from F to G
%
\begin{lemma}\label{lem:mass-preservation}
Let $F,G:[0,1] \to [0,1]$ be CDFs and let $F$ be continuous.
Suppose that $G$ is a mean-preserving contraction of $F$ and for some $x \in [0,1]$
\[
    \int_x^1 F(s) ds = \int_x^1 G(s) ds.
\]
Then $F(x) = G(x)$. Furthermore, $G$ is continuous at $x$.
\end{lemma}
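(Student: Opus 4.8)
The plan is to make $x$ a global minimizer of the nonnegative slack in the \eqref{eq:MPC} constraint and then bound that slack from above near $x$ using only monotonicity of $G$. Define
\[
    \Delta(\omega) \;=\; \int_\omega^1 G(s)\,ds \;-\; \int_\omega^1 F(s)\,ds .
\]
Since $G$ is a mean-preserving contraction of $F$, i.e.\ $F\preceq G$, we have $\Delta(\omega)\ge 0$ for all $\omega\in[0,1]$; moreover $\Delta(0)=0$ (part of the definition of \eqref{eq:MPC}), and the hypothesis $\int_x^1 F=\int_x^1 G$ is exactly the statement $\Delta(x)=0$. I would then compare $\Delta(\omega)$ with $\Delta(x)=0$ for $\omega$ slightly above and slightly below $x$.

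For $\omega>x$ one has $\Delta(\omega)=\Delta(x)-\int_x^\omega\bigl(G(s)-F(s)\bigr)\,ds=\int_x^\omega\bigl(F(s)-G(s)\bigr)\,ds$, and since $F$ and $G$ are nondecreasing, $F(s)-G(s)\le F(\omega)-G(x)$ for $s\in[x,\omega]$; hence $0\le\Delta(\omega)\le(\omega-x)\bigl(F(\omega)-G(x)\bigr)$, so $F(\omega)\ge G(x)$, and letting $\omega\downarrow x$ with $F$ continuous gives $F(x)\ge G(x)$. For $\omega<x$ one has $\Delta(\omega)=\int_\omega^x\bigl(G(s)-F(s)\bigr)\,ds$; for a.e.\ $s<x$ we have $G(s)\le G(x^-):=\lim_{t\uparrow x}G(t)$ (which exists by monotonicity) and $F(s)\ge F(\omega)$, so $0\le\Delta(\omega)\le(x-\omega)\bigl(G(x^-)-F(\omega)\bigr)$, whence $G(x^-)\ge F(\omega)$ and, letting $\omega\uparrow x$, $G(x^-)\ge F(x)$.

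Combining the two conclusions with the monotonicity bound $G(x^-)\le G(x)$ yields $G(x^-)\le G(x)\le F(x)\le G(x^-)$, so all four quantities coincide. In particular $F(x)=G(x)$, and $G(x^-)=G(x)$; since $G$ is right-continuous by convention, this is precisely continuity of $G$ at $x$. The endpoint cases use only the relevant one-sided half: for $x=1$ the second estimate gives $G(1^-)\ge F(1)=1=G(1)$, and for $x=0$ the first gives $F(0)\ge G(0)$, which together with $F(0)=0$ (continuity of $F$) forces $G(0)=0=F(0)$; in both cases $G$ is continuous at the endpoint.

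The argument is short, and the one point requiring care is that $G$ may a priori be discontinuous at $x$: one must carry $G(x^-)$ and $G(x)$ along as possibly distinct quantities and let the final squeeze identify them. The structural input about $G$ is minimal — beyond $F\preceq G$ the proof uses only that $G$ is nondecreasing and right-continuous — so once $\Delta$ is introduced it is essentially bookkeeping.
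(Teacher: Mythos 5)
Your proof is correct, and it takes a genuinely different route from the paper's. Both start from the same slack function (your $\Delta$ is the paper's $L$ with a sign flip) and both exploit that $x$ is a global extremizer, but the execution differs. The paper argues via one-sided \emph{derivatives} of $L$: it notes that $L$ is absolutely continuous with $L'(z)=G(z)-F(z)$ a.e., that $L'$ has only upward jumps (because $F$ is continuous and $G$ is nondecreasing), and then uses the first/second-order conditions at the interior maximum --- $\lim_{z\nearrow x}L'(z)\ge0$ and $\lim_{z\searrow x}L'(z)\le0$ --- to force $G(x^-)\ge F(x)\ge G(x)$, which combined with monotonicity gives the result. You instead bound the integral $\Delta(\omega)-\Delta(x)$ directly by constant upper envelopes on $[\min(\omega,x),\max(\omega,x)]$ supplied by monotonicity, take limits, and squeeze $G(x^-)\le G(x)\le F(x)\le G(x^-)$. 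The two arguments use the continuity of $F$ at the same points and yield the same intermediate chain; the payoff of your route is that it avoids any appeal to weak derivatives or the structure of $L'$ and works purely with integral estimates. Your treatment of the endpoint $x=0$ implicitly uses that $F$, being a continuous CDF of a $[0,1]$-valued random variable, satisfies $F(0)=F(0^-)=0$; that is a safe inference here (in the paper $F$ even has a density), but it is worth stating it rather than attributing it to ``continuity'' alone, since a generic continuous nondecreasing function on $[0,1]$ need not vanish at $0$.
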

\begin{proof}
Define the function $L:[0,1] \to \RR$ as $L(z) = \int_z^1 F(s) - G(s) ds \,.$
As $G$ is a mean-preserving contraction of $F$ we have that $L(z) \leq 0$ for all $z \in [0,1]$.
By the assumption of the lemma $L(x) = 0$.
By definition $L$ is absolutely continuous and 
has a weak derivative, which we denote
by $L'(z) = G(z)-F(z)$.
As $F$ is continuous $L'$ has only up-ward jumps and is right-continuous.
For $L$ to have a maximum at $x$ we need that $\lim_{z \nearrow x} L'(z) \geq 0$ and $\lim_{z \searrow x} L'(z) \leq 0$. This implies that 
\[
    \lim_{z \searrow x} G(z) - F(z) \leq 0 \leq \lim_{z \nearrow x} G(z) - F(z).
\]
In turn, this implies that $\lim_{z \searrow x} G(z) \leq \lim_{z \nearrow x} G(z)$. As  $G$ is a CDF it is non-decreasing and thus $G$ is continuous at $x$.
Consequently, $L$ is continuously differentiable at $x$ and as $L$ admits a maximum at $x$, we have that $0=L'(x)=G(x)-F(x)$.
\end{proof}

\begin{lemma}\label{lem:local-problem}
Fix an interval $[a,b] \subseteq [0,1]$, $c\in \mathbb{R}$, upper semicontinuous $v:[0,1] \to [0,1]$ and continuous $\tilde{u}_1,\ldots,\tilde{u}_n: [0,1] \to \RR$ and consider the problem 
\begin{align}
        \max_{\tilde{G}}& \int_\Omega v(s) d \tilde{G}(s) \label{eq:aux-objective} \\
        \text{ subject to }& \int_\Omega \tilde{u}_i(s) d \tilde{G}(s) \geq 0 \text{ for } i \in \{1,\ldots,n\}\label{eq:aux-side-constraint}\\
         & \int_{{a}}^{{b}} G(s) ds = c \label{eq:aux-mean-constraint}\\
         & \int_{[a,b]} d \tilde{G}(s) =1  \label{eq:aux-support-constraint}\,.
\end{align}
If the set of distributions that satisfy \eqref{eq:aux-side-constraint}-\eqref{eq:aux-support-constraint} is non-empty then there exists a solution to the above optimization problem that is supported on at most $n+2$ points.
\end{lemma}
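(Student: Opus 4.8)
The plan is to recast the problem as maximizing a weakly upper-semicontinuous affine functional over a weakly compact convex set of probability measures, and then combine Bauer's maximum principle with the extreme-point characterization of moment sets due to \cite{winkler1988extreme}.

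First I would put the constraints in a convenient form. A feasible $\tilde{G}$ is a CDF with $\tilde{G}(s)=0$ for $s<a$ and $\tilde{G}(b)=1$, i.e.\ it corresponds to a Borel probability measure supported on $[a,b]$, which I also denote $\tilde{G}$. By Fubini, $\int_a^b \tilde{G}(s)\,ds=\int_{[a,b]}(b-\omega)\,d\tilde{G}(\omega)=b-\int_{[a,b]} s\,d\tilde{G}(s)$, so the conditional-mean constraint \eqref{eq:aux-mean-constraint} is equivalent to the single linear moment condition $\int s\,d\tilde{G}(s)=b-c$. Hence the feasible set $\mathcal{F}$ consists of the probability measures on $[a,b]$ satisfying the $n$ inequalities \eqref{eq:aux-side-constraint} together with this one equality, i.e.\ $n+1$ linear (in)equality constraints beyond the normalization $\int d\tilde{G}=1$. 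By hypothesis $\mathcal{F}\neq\emptyset$, and $\mathcal{F}$ is convex; since $[a,b]$ is compact and $\tilde{u}_1,\dots,\tilde{u}_n$ and $s\mapsto s$ are bounded and continuous, an argument analogous to Lemma~\ref{lem:compactness} (via Helly's selection theorem) shows $\mathcal{F}$ is compact in the weak topology.

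Next, the objective $\tilde{G}\mapsto\int v\,d\tilde{G}$ is affine in $\tilde{G}$, and since $v$ is bounded and upper-semicontinuous, the Portmanteau theorem shows this functional is weakly upper-semicontinuous. A bounded affine upper-semicontinuous functional on a nonempty weakly compact convex set attains its maximum, and by Bauer's maximum principle it attains it at an extreme point $\tilde{G}^\ast$ of $\mathcal{F}$. Finally, by \cite{winkler1988extreme}, every extreme point of a set of probability measures on a compact metric space cut out by $k$ constraints of the form $\int f_j\,d\mu\in B_j$, with $f_j$ continuous and $B_j\subseteq\RR$ closed (a point or a half-line in our case), is supported on at most $k+1$ points. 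Taking $k=n+1$ yields that $\tilde{G}^\ast$ is supported on at most $n+2$ points, which is the desired optimal solution.

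The step I expect to require the most care is the constraint accounting around \cite{winkler1988extreme}: one must check that after the Fubini rewriting the conditional-mean constraint contributes exactly one extra moment constraint (so the total is $n+1$, giving $n+2$ atoms and not more), and that mere upper-semicontinuity of $v$ still forces the optimum to a vertex of $\mathcal{F}$ — this is precisely why Bauer's maximum principle is invoked rather than a plain compactness argument, which alone would only give that the supremum is attained, not that it is attained at an extreme point. The remaining ingredients — weak compactness of $\mathcal{F}$, affineness of the objective, and the Portmanteau step — are routine.
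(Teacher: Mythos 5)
Your proof is correct and takes essentially the same route as the paper's: both reduce to an optimization over a weakly compact convex set of probability measures on $[a,b]$, invoke Bauer's maximum principle to locate a maximizer at an extreme point, and then apply \cite{winkler1988extreme} (Theorem 2.1) to conclude that this extreme point is supported on at most $n+2$ points. You make explicit one step the paper glosses over, namely the Fubini/integration-by-parts rewrite of the CDF constraint \eqref{eq:aux-mean-constraint} into the moment constraint $\int s\,d\tilde{G}(s)=b-c$, which is exactly the form Winkler's hypotheses require.
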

\begin{proof}
%As is non-empty and closed and $v$ is upper-semi continuous there exist an optimal solution. 
Consider the set of distributions that assign probability $1$ to the set $[a,b]$.
The extreme points of this set are the Dirac measures in $[a,b]$.
Let $\mathcal{D}$ be the set of distributions which satisfy  \eqref{eq:aux-side-constraint}-\eqref{eq:aux-mean-constraint} and are supported on $[a,b]$.
By Theorem 2.1 in \cite{winkler1988extreme} each extreme points of the set $\mathcal{D}$ is the sum of at most $n+2$ mass points as \eqref{eq:aux-side-constraint} and \eqref{eq:aux-mean-constraint} specify $n+1$ constraints.
Note, that the set of the set of distributions satisfying \eqref{eq:aux-side-constraint}-\eqref{eq:aux-support-constraint} is compact.
As $v$ is upper semicontinuous the function $\tilde{G} \to \int_0^1 v(s) d \tilde{G}(s)$ is upper semi continuous and linear.
Thus, by Bauer's maximum principle (see for example Result~7.69 in \citealt{charalambos2013infinite}) there exist a maximizer at an extreme point of $\mathcal{D}$ which establishes the result.
%
%By the assumption of the lemma a solution $G$ to the optimization problem \eqref{eq:aux-objective}-\eqref{eq:aux-support-constraint} exists.
%Suppose that $G = \alpha H + (1-\alpha) H'$ for some $H,H'\in \mathcal{D}$ and $\alpha \in (0,1)$.
%The distributions $H$ and $H'$ must assign probability $0$ to $\{a,b\}$ as otherwise $G$ would assign positive probability to $(a,b)$ and thus violate \eqref{eq:aux-support-constraint}.
%Due to the linearity of the objective \eqref{eq:aux-objective} both $H$ and $H'$ must be also optimal for the optimization problem \eqref{eq:aux-objective}-\eqref{eq:aux-support-constraint} as otherwise $G$ could not be optimal (see for example Lemma~7.64 in \citealt{charalambos2013infinite}).
%Thus, the set of solutions to the optimization problem \eqref{eq:aux-objective}-\eqref{eq:aux-support-constraint} is an extreme set in $\mathcal{D}$.\footnote{Recall that a set is an \emph{extreme set} if it has no element that can be represented as the convex combination of two points, where at least one does not belong to the set (Definition 7.61 in \citealt{charalambos2013infinite}).}
%As $G \in \mathcal{D}$ Choquet's theorem implies that there exists a probability measure over distributions, each with $n+2$ mass points such that $G$ is the expectation under that measure.  
\end{proof}

\begin{lemma}\label{lem:construct-feasible-sol}
Suppose that $H,G$ are distribution that assign probability 1 to $[a,b]$.
%$H$ is the sum of $n+2$ Dirac measures.
Let $M$ be an absolutely continuous function such that $\int_x^b G(s) ds > M(x)$ for all $x \in [a,b]$, and $\int_{\hat{x}}^b H(y) dy < M(\hat{x})$ for some $\hat{x} \in [a,b]$.
Then, there exists $\lambda \in (0,1)$ such that for all $x \in [a,b]$ 
\[
    \int_x^b (1-\lambda) G(s) + \lambda H(s) ds \geq M(x)
\]
with equality for some $x \in [a,b]$.
\end{lemma}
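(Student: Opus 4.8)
The plan is to use $\lambda$ as a homotopy parameter between $G$ and $H$ and apply the intermediate value theorem to the worst‑case slack of the desired inequality. For $\lambda\in[0,1]$ define
\[
    \Phi_\lambda(x)\;=\;\int_x^b\big[(1-\lambda)G(s)+\lambda H(s)\big]\,ds\;-\;M(x),\qquad x\in[a,b],
\]
and set $m(\lambda)=\min_{x\in[a,b]}\Phi_\lambda(x)$. The conclusion of the lemma for a given $\lambda$ is exactly the statement that $\Phi_\lambda\ge 0$ on $[a,b]$ with $\Phi_\lambda(x_0)=0$ for some $x_0$, i.e.\ that $m(\lambda)=0$. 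So it suffices to produce $\lambda^\ast\in(0,1)$ with $m(\lambda^\ast)=0$, and the whole argument reduces to checking that $m$ is continuous and changes sign on $(0,1)$.

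First I would record the regularity of $m$. Writing $A(x)=\int_x^b G(s)\,ds$ and $B(x)=\int_x^b H(s)\,ds$, both are Lipschitz in $x$ because $G,H$ are bounded, and $M$ is continuous, so $x\mapsto\Phi_\lambda(x)=\big(A(x)-M(x)\big)+\lambda\big(B(x)-A(x)\big)$ is continuous on the compact interval $[a,b]$; hence the minimum defining $m(\lambda)$ is attained. Since for each fixed $x$ the dependence on $\lambda$ is affine with slope $B(x)-A(x)$ uniformly bounded in $x$, we get $\sup_{x\in[a,b]}|\Phi_\lambda(x)-\Phi_{\lambda'}(x)|\le C|\lambda-\lambda'|$, which forces $|m(\lambda)-m(\lambda')|\le C|\lambda-\lambda'|$; in particular $m$ is continuous on $[0,1]$.

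Next I would evaluate the endpoints. At $\lambda=0$ we have $\Phi_0(x)=\int_x^b G(s)\,ds-M(x)$, which by the first hypothesis is strictly positive for every $x\in[a,b]$; a continuous strictly positive function on the compact set $[a,b]$ has a strictly positive minimum, so $m(0)>0$. At $\lambda=1$ the second hypothesis gives $m(1)\le\Phi_1(\hat x)=\int_{\hat x}^b H(s)\,ds-M(\hat x)<0$. The intermediate value theorem applied to the continuous function $m$ then yields some $\lambda^\ast$ — necessarily in the open interval $(0,1)$, since $m(0)>0>m(1)$ — with $m(\lambda^\ast)=0$. Unwinding the definition of $m$, this says $\int_x^b\big[(1-\lambda^\ast)G(s)+\lambda^\ast H(s)\big]\,ds\ge M(x)$ for all $x\in[a,b]$, with equality at any minimizing $x$, which is exactly the claim.

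I do not anticipate a substantive obstacle; the argument is essentially a sign‑change/IVT computation. The two points that need a little care are (i) the strict inequality $m(0)>0$, where compactness of $[a,b]$ is genuinely used — a strictly positive continuous function need not have positive infimum on a non‑compact set — and (ii) the continuity of $\lambda\mapsto m(\lambda)$, which I would justify directly from the uniform affine dependence of $\Phi_\lambda(x)$ on $\lambda$ displayed above rather than by invoking a maximum theorem.
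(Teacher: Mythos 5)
Your argument is essentially identical to the paper's: define $\Phi_\lambda$ and $m(\lambda)=\min_x\Phi_\lambda(x)$, check that $m(0)>0$ (using compactness of $[a,b]$) and $m(1)<0$, establish continuity of $m$ in $\lambda$, and conclude by the intermediate value theorem. The only cosmetic difference is that you prove Lipschitz continuity of $m$ directly from the uniform bound $\sup_x|\Phi_\lambda(x)-\Phi_{\lambda'}(x)|\le C|\lambda-\lambda'|$, whereas the paper bounds $|\partial L_\lambda/\partial\lambda|$ and appeals to the envelope theorem; both routes are correct and lead to the same conclusion.
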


\begin{proof}
%As $H$ is the sum of $n+2$ Dirac measures there exist $a < p_1 < \ldots < p_{n+2} < b$ and $m_1, \ldots, m_{n+2}$ such that $H(x) = \sum_{k=1}^{n+2} m_k \mathbf{1}_{x \geq p_k}$.
Define 
\[
    L_\lambda(x) = \int_x^b (1-\lambda) G(y) + \lambda H(y) dy - M(x)\,,
\]
%We have that $H(x) = 0 < F(x)$ for all $x < p_1$ and $H(x)=1 > F(x)$ for all $x > p_{n+2}$.
%This implies that $L_\lambda(x) > 0$ for all $\lambda \in (0,1)$ for all $x \in [a,p_1) \cup (p_{n+2},b]$.
and $\phi(\lambda) = \min_{z \in [a,b]} L_\lambda(z)$.
As $M$ is continuous, by the assumptions of the lemma we have that 
\[
    \phi(0) = \min_{x \in [a,b]} L_0(x) =  \min_{x \in [a,b]}  \left[ \int_x^b  G(s) ds - M(x)  \right] > 0 
\]
and
\[
    \phi(1) =  \min_{x \in [a,b]} L_1(x) =\min_{x \in [a,b]} \left[  \int_x^b  H(s) ds - M(x)\right] \leq \int_{\hat{x}}^b  H(s) ds - M(\hat{x}) < 0 \,.
\]
Furthermore, 
\[
    \left| \frac{\partial L_\lambda(z)}{\partial \lambda} \right| = \left| \int_x^b H(s) - G(s) ds \right| \leq b-a \,.
\]
Hence, $\lambda \mapsto L_\lambda(z)$ is uniformly Lipschitz continuous and the envelope theorem thus implies that
$\phi$ is Lipschitz continuous. As $\phi(0) > 0$, and $\phi(1) < 0$ there exist some $\lambda^* \in (0,1)$ such that $\phi(\lambda^*)=0$.
%As $L_{\lambda^*}(x) > 0$ for all $x \in [a,p_1) \cup (p_{n+2},b]$ this implies 
This implies that for all $x \in [a,b]$
\[
    \int_x^b (1-\lambda^*) G(s) + \lambda^* H(s) ds \geq M(x)
\]
with equality for some $x \in [a,b]$. This completes the proof. 
\end{proof}

{
\begin{lemma}\label{lem:Zorn}
For a solution $G$
to problem \eqref{eq:linear-problem}
denote the set of points where the mean preserving contraction constraint is binding by
\begin{equation}\label{eq:mpc-proof-main-result}
    B_G = \left\{ z \in [0,1] \colon \int_z^1 F(s) ds = \int_z^1 G(s) ds \right\}.
\end{equation}
There exists a solution to \eqref{eq:linear-problem} where the set $B_G$ is maximal in set inclusion sense.
\end{lemma}
\begin{proof}%[Proof of Lemma \ref{lem:Zorn}]
As the set of feasible distributions is compact with respect to the weak topology by Lemma~\ref{lem:compactness} and the function $G \mapsto \int_0^1 v(s) dG(s)$ is upper semicontinuous  in the weak topology the optimization problem \eqref{eq:linear-problem} admits a solution. Observe that for any solution $G$, \eqref{eq:mpc-proof-main-result} implies that $B_G$ is a closed set.

Let ${\cal G}'$ denote the set of all solutions to \eqref{eq:linear-problem}. 
Denote by ${\cal G}$ the subset of ${\cal G}'$ such that (i)
$\{B_G | G\in {\cal G}\} = \{B_G | G\in {\cal G}'\}$ and
$B_{G_1} \neq B_{G_2}
$ for any $G_1,G_2\in {\cal G}'$ such that $G_1\neq G_2$ (the existence of such ${\cal G}$ follows from the axiom of choice).
Define a partial order $\succeq$ on $\cal G$:
$G_1 \succeq G_2$ if $B_{G_1} \supseteq B_{G_2}$.
Consider a totally ordered subset ${\cal G}_c$ of ${\cal G}$. % Suppose ${\cal G}_c$ is infinite.

Let $B = \cup_{G\in {\cal G}_c} B_G$ and denote by $\bar{B}$ the closure of this set. 
 Since $\bar{B}$ is closed and bounded, it is compact.
 Similarly $B_G$ is compact for each $G\in {\cal G}$.
 For $r\in \mathbb{N}_{+}$, consider a solution
 $G_r \in {\cal G}_c$ such that
 \[\max_{y\in \bar B} \min_{x\in B_{G_r}} |x-y|<1/r,
 \]
 where the optima are achieved due to compactness, the continuity of the argument being optimized (and the theorem of maximum).
 Existence of such a solution follows from the definition of $B$ and the fact that ${\cal G}_c$ is  totally ordered.
  By Helly's selection 
theorem, the sequence 
$\{G_r\}$ has a convergent subsequence. Let $G_\infty$ denote its limit. Since $G \mapsto \int_0^1 v(s) dG(s)$ 
is
 upper semicontinuous  in the weak topology,
 it follows that $G_\infty$  also solves \eqref{eq:linear-problem}.
 Furthermore, by construction, 
 $B_{G_\infty}$ is dense in $\bar B$.
 Since $B_{G_\infty}$ is also closed, it follows that
 it follows that $B_{G_\infty}=\bar{B}$. This implies that ${G_\infty} \succeq G$ for every $G\in {\cal G}_c$. 
 Zorn's lemma
 (see for example Section 1.12 in \citealt{charalambos2013infinite})
   implies that  ${\cal G}$ has a maximal element, $G^\star$. By the definition of our partial order this implies that  $B_{G^\star}\supseteq B_G$  for every $G\in {\cal G}$ and the claim follows.
\end{proof}
}

\begin{proof}[Proof of Proposition~\ref{prop:general-result}]
The first part of the claim follows from Lemma \ref{lem:Zorn}. The lemma also implies that there exists a solution $G$ for which the set $B_G$ defined in \eqref{eq:mpc-proof-main-result} is maximal in set inclusion sense. Consider such a solution.

% As the set of feasible distributions is compact with respect to the weak topology by Lemma~\ref{lem:compactness} and the function $G \mapsto \int_0^1 v^*(s) dG(s)$ is upper semicontinuous  in the weak topology the optimization problem \eqref{eq:linear-problem} admits a solution.

% Let $G$ be a solution to the optimization problem \eqref{eq:linear-problem} 
% and denote by  $B_G$  the set of points where the mean preserving contraction (MPC) constraint is binding, i.e., 
% \begin{equation}\label{eq:mpc-proof-main-result}
%     B_G = \left\{ z \in [0,1] \colon \int_z^1 F(s) ds = \int_z^1 G(s) ds \right\}.
% \end{equation}
% Suppose that this solution is maximal  in the sense that there does not exist another solution $G'$ for which the set of points where the MPC constraint binds is larger, i.e., $B_G \subset B_{G'}$
% (where $B_{G'}$ defined as in \eqref{eq:mpc-proof-main-result} after replacing $G$ with $G'$).
% The existence of such a maximal optimal solution follows from Zorn's Lemma (see for example Section 1.12 in \citealt{charalambos2013infinite}).

Fix a point $x \notin B_G$.
We define $(a,b)$ to be the largest interval such that the mean-preserving contraction constraint does not bind on that interval for the solution $G$, i.e.
\begin{align*}
    a = \max \Big\{ z \leq x \colon z \in B_G \Big\} \qquad\qquad b = \min \Big\{ z \geq x \colon z \in B_G \Big\}.
\end{align*}
If $G$ assigns probability zero to the interval $[a,b]$ there are $0$ mass-points in the interval and we have thus established that there are less than $n+2$ mass-points in that interval.
Thus, assume for the rest of the proof that $G$ assigns strictly positive mass to $[a,b]$.
By Lemma~\ref{lem:mass-preservation} $G$ assigns no mass to $a$ or $b$ and hence $G$ also assigns strictly positive mass to the interior of $[a,b]$.
%By definition we have that the feasibility constraint binds at $a_x$ and $b_x$
%\begin{equation}\label{eq:binding-mpc}
%    \int_{a}^1 F(y) dy = \int_{a}^1 G(y) dy \,\,\,\text{ and }\,\,\, \int_{b}^1 F(y) dy = \int_{b}^1 G(y) dy.
%\end{equation}
%Suppose that $(a,b) \neq \emptyset$. 
%By Lemma~\ref{lem:mass-preservation} we have that $F(a)=G(a)$ and $F(b)=G(b)$ and $G$ has no mass point at $a$ or $b$.
Consider now an interval $[\hat{a},\hat{b}] \subset (a,b)$ such that $G$ assigns strictly positive mass to $[\hat{a},\hat{b}]$.
%Fix an  interval $[\hat{a},\hat{b}] \subset [a,b]$ where $a< \hat a$ and $\hat b < b$.
We define $G_{[\hat{a},\hat{b}]}:[0,1] \to [0,1]$ to be the CDF of a random variable that is distributed according to $G$ conditional on the realization being in the interval $[\hat{a},\hat{b}]$
\[
    G_{[\hat{a},\hat{b}]}(z) = \frac{G(z)-G(\hat{a}_-)}{G(\hat{b})-G(\hat{a}_-)} \,,
\]
where $G(\hat{a}_-) = \lim_{s \nearrow \hat{a}} G(s)$. 
We note that $G_{[\hat{a},\hat{b}]}$ is non-decreasing, right-continuous, and satisfies $G_{[\hat{a},\hat{b}]}(\hat{b})=1$. Thus, it is a well defined CDF supported on $[\hat{a},\hat{b}]$.
%Furthermore, by definition of $[a,b]$, and as $F(a)=G(a)$ and $F(b)=G(b)$ we have that
%\begin{equation}
%    \int_{a}^{b} G_{[a,b]}(y) dy = \int_{a}^{b} \frac{F(y)-F(a)}{F(b)-F(a)} dy = \int_{a}^{b}F_{[a,b]}(y) dy\,,
%\end{equation}
%where $F_{[a,b]}$ is the CDF of a random variable that is distributed according to $F$ conditional on the realization being in the interval $[a,b]$.
%The above constraint is equivalent to fixing the expected value of $H$.
As $G$ is feasible we get that
\begin{equation}\label{eq:local-constraint}
    \int_{\hat{a}}^{\hat{b}} u_k(s) d G_{[\hat{a},\hat{b}]}(s)  + \frac{1}{G(\hat{b})-G(\hat{a}_-)} \int_{\Omega\setminus [\hat{a},\hat{b}]} u_k(s) d G(s) \geq 0 \qquad \text{ for } k \in \{ 1,\ldots,n \} \,.
\end{equation}
To simplify notation we define the functions $\tilde{u}_1,\ldots,\tilde{u}_n$, where for all $k$ 
\begin{equation} \label{eq:uTilde}
    \tilde{u}_k(z) = u_k(z) + \frac{1}{G(\hat{b})-G(\hat{a}_-)} \int_{\Omega\setminus [\hat{a},\hat{b}]} u_k(y) d G(y) \,.
\end{equation}
Note that using this notation  \eqref{eq:local-constraint} can be restated as:
\begin{equation} \label{eq:uTilde2}
 \int_\Omega \tilde{u}_k(s) d G_{[\hat{a},\hat{b}]}(s) \geq 0 \qquad \text{ for } k \in \{1,\ldots,n\}.
\end{equation}
As $G$ satisfies the mean-preserving contraction constraint relative to $F$, using the fact that $a<\hat a$ and $\hat b<b$,
for $z\in [\hat a, \hat b]$
we obtain:
\begin{equation} \label{eq:GabStrict}
    \int_z^{\hat{b}} G_{[\hat{a},\hat{b}]}(s)  ds > \frac{1}{G(\hat{b}) - G(\hat{a}_-)} \left[\int_z^1 F(s) ds - \int_{\hat{b}}^1 G(s) ds - (\hat{b}-z) G(\hat{a}_-) \right] = M(z) \,.
\end{equation}
%As by definition the feasibility constraint is non-binding on $(a,b)$ the conditional distribution $G_{[a,b]}$ solves
Consider now the maximization problem over distributions supported on $[\hat{a},\hat{b}]$
that satisfy  the  constraints  derived above (after replacing the strict inequality in \eqref{eq:GabStrict} with a weak inequality)
%are supported on $[\hat{a},\hat{b}]$ 
and maximize the original objective:
\begin{equation}
\begin{aligned}\label{eq:linear-problem-no-feasibility}
        \max_{H}& \int_\Omega v(s) d H(s)  \\
        \text{ subject to }& \int_\Omega \tilde{u}_i(s) d H(s) \geq 0 &&\text{ for } i \in \{1,\ldots,n\}\\
         & \int_z^{\hat{b}}  H(s) ds \geq M(z) &&\text{ for } z \in [\hat{a},\hat{b}]\\
         & \int_{[\hat{a},\hat{b}]} d H(s) = 1 \,. %\label{eq:constraints-set-general}
\end{aligned}
\end{equation}
By \eqref{eq:uTilde2} and \eqref{eq:GabStrict} the conditional CDF $G_{[\hat{a},\hat{b}]}$ is feasible in the  problem  above.
We claim that it is also optimal.
Suppose, towards a contradiction, that there exist a CDF $H$ that is feasible in \eqref{eq:linear-problem-no-feasibility} and achieves a strictly higher value than $G_{[\hat{a},\hat{b}]}$.  Consider the CDF
\[
    K(z) = \begin{cases}
    G(z) &\text{ if } z \in [0,1] \setminus [\hat{a},\hat{b}] \\
    G(\hat a_-) + H(z) (G(\hat{b})-G(\hat{a}_-)) &\text{ if } z \in [{\hat{a},\hat{b}}],
    \end{cases}
\]
which equals $G$ outside the interval $[\hat{a},\hat{b}]$ and $H$ conditional on being in $[\hat{a},\hat{b}]$.
Using \eqref{eq:uTilde}, the
definition of $M(z)$, and the feasibility of $H$ in \eqref{eq:linear-problem-no-feasibility},   it can be readily verified that this CDF is feasible in the original problem \eqref{eq:linear-problem}. Moreover, it  achieves a higher value than $G$,
 since $H$ achieves strictly higher value than $G_{[\hat{a},\hat{b}]}$ in
 \eqref{eq:linear-problem-no-feasibility}. However, this leads  to a contradiction to the optimality of $G$ in \eqref{eq:linear-problem}, thereby implying that 
 $G_{[\hat{a},\hat{b}]}$ is optimal in \eqref{eq:linear-problem-no-feasibility}.

%would achieve a higher value in the original problem \eqref{eq:linear-problem}.

Next, we establish that  there cannot exist an optimal solution $H$ to the problem \eqref{eq:linear-problem-no-feasibility} where for some $z \in (\hat{a},\hat{b})$ 
\begin{equation} \label{eq:bindingProperty}
    \int_z^{\hat b} H(s) ds = M(z).
\end{equation}
Suppose such an optimal solution exists. Then,  $K$ would be an optimal solution to the original problem satisfying $z \in B_K \supset B_G$,
where
$B_K$ is defined as in   \eqref{eq:mpc-proof-main-result} (after replacing $G$ with $K$) and  is the set of points where the mean preserving contraction constraint binds.
However, this contradicts that
%. Hence a strictly larger set where the mean preserving contraction constraint defined in \eqref{eq:mpc-proof-main-result} binds, i.e. $B_G \subset B_K$ which contradicts that 
$G$ is a  solution to the original problem that is maximal (in terms of the set where the MPC constraints bind). 

We next consider a relaxed version of the optimization problem \eqref{eq:linear-problem-no-feasibility} where 
we replace the second constraint of  \eqref{eq:linear-problem-no-feasibility} with a constraint that ensures that $H$
has the same mean as~$G_{[\hat a,\hat b]}$:
%one optimizes over all distribution $H$ supported on $[\hat{a},\hat{b}]$ where we preserve the mean of $G$
\begin{equation*}
\begin{aligned}%\label{eq:linear-problem-no-feasibility}
        \max_{H}& \int_\Omega v(s) d H(s)  \\
        \text{ subject to }& \int_\Omega \tilde{u}_i(s) d H(s) \geq 0 &&\text{ for } i \in \{1,\ldots,n\}\\
         & \int_{\hat{a}}^{\hat{b}}  H(s) ds = \int_{\hat{a}}^{\hat{b}}  G_{[\hat{a},\hat{b}]}(s) ds \\
         & \int_{[\hat{a},\hat{b}]} d H(s) = 1 \,. %\label{eq:constraints-set-general}
\end{aligned}
\end{equation*}
By Lemma~\ref{lem:local-problem} there exists a solution $J$ to this relaxed problem that is the sum of $n+2$ mass points.
%This problem is relaxed as instead of imposing the mean preserving contraction constraint at every point in $[a_x,b_x]$ it is only imposed at $a_x$.
%We next argue that nevertheless $H^\star$ satisfies the MPC constraint.
Since $G_{[\hat a,\hat b]}$ is feasible in this problem, it readily follows that
\begin{equation} 
\label{eq:JSolutionBetter}
    \int_\Omega v(s) dJ(s) \geq \int_\Omega v(s) dG_{[\hat a,\hat b]}(s).
\end{equation}
Suppose, towards a contradiction,  that there exists  $z \in [\hat{a},\hat{b}]$ such that 
\begin{equation} \label{eq:jConstraint}
    \int_z^{\hat{b}} J(s) ds < M(z) \,.
\end{equation}
Then, by Lemma~\ref{lem:construct-feasible-sol}, there exists some $\lambda \in (0,1)$ such that $(1-\lambda) G_{[\hat a,\hat b]} + \lambda J$ satisfies 
\begin{equation}\label{eq:mpc-aux}
    \int_r^{\hat{b}} (1-\lambda) G_{[\hat a,\hat b]} (s) + \lambda J(s) ds \geq M(r) \,, 
\end{equation}
for all $r \in [\hat a,\hat b]$, and the inequality holds
with equality for some $r \in [\hat a,\hat b]$.
This implies that
  $(1-\lambda) G_{[\hat a,\hat b]} + \lambda J$ is feasible 
   for the problem \eqref{eq:linear-problem-no-feasibility}. Furthermore, 
   by the linearity of the objective,  \eqref{eq:JSolutionBetter},
  and the optimality of $G_{[\hat a,\hat b]}$ in \eqref{eq:linear-problem-no-feasibility}, 
  it follows that  $(1-\lambda) G_{[\hat a,\hat b]} + \lambda J$ is also optimal in  \eqref{eq:linear-problem-no-feasibility}. 
  However, this leads to a contradiction to the fact that  \eqref{eq:linear-problem-no-feasibility} does not admit an optimal solution where the equality in  
\eqref{eq:bindingProperty} holds for some $z\in [\hat a,  \hat b] \subset [a,b]$.
%  But, this is a contradiction as \eqref{eq:mpc-aux} holds for some $z \in [a,b]$ with equality which as we argued contradicts that $G$ is a solution to the original problem \eqref{eq:linear-problem} where the set $B$ defined in \eqref{eq:mpc-proof-main-result} is maximal.

Consequently, the inequality \eqref{eq:jConstraint} cannot hold, and 
$J$ must be feasible in problem \eqref{eq:linear-problem-no-feasibility}.
Together with \eqref{eq:JSolutionBetter} this implies that $J$ is an optimal solution to \eqref{eq:linear-problem-no-feasibility} 
that assigns mass to only $n+2$ points in the interval $[\hat{a},\hat{b}]$.
This implies that the CDF
\begin{equation} \label{eq:constructNewOpt}
    \begin{cases}
    G(z) &\text{ if } z \in [0,1] \setminus [\hat{a},\hat{b}] \\
    G(\hat{a}_-) + J(z) (G(\hat{b})-G(\hat{a}_-)) &\text{ if } z \in [\hat{a},\hat{b}]
    \end{cases}
\end{equation}
is a solution of the original problem that assigns mass to only $n+2$ points in the interval $[\hat{a},\hat{b}]$.
By setting $\hat{a} = a + \frac{1}{r}$ and $\hat{b} = b - \frac{1}{r}$ we can thus find a sequence of solutions $(H^r)$ to \eqref{eq:linear-problem} that each have at most $n+2$ mass points in the interval $[a + \frac{1}{r}, b - \frac{1}{r}]$. 
As the set of feasible distributions is closed and the objective function is upper semicontinuous this sequence admits a limit point $H^\infty$ which itself is optimal in \eqref{eq:linear-problem}.
This limit distribution consists of at most $n+2$ mass points in the interval $(a,b)$.
Furthermore, by definition of $a,b$ and our construction in \eqref{eq:constructNewOpt}
each solution $H^r$ and hence $H^\infty$ satisfies the MPC constraint with equality at $\{a,b\}$. 
Thus, Lemma \ref{lem:mass-preservation} implies that $H^\infty$ is continuous at these points, and  $H^\infty(a)=F(a)$ and $H^\infty(b)=F(b)$.
%$\int_{[a,b]} dH^\infty = \int_{(a,b)} dH^\infty$.

Hence, we have established that for every solution $G$ for which $B_G$ is maximal, either $x \in B_G$ which by Lemma~\ref{lem:mass-preservation} implies that $G(x)=F(x)$. Or $x \notin B_G$ and then one can find a new  solution $\tilde{G}$ such that (i) $\tilde{G}$ has at most $n+2$ mass points in the interval $(a,b)$ with $a = \max\{ z \leq x \colon z \in B_G \}$ and $b = \min\{ z \geq x \colon z \in B_G \}$,
  (ii)  $\tilde{G}(a)=F(a)$ and $\tilde{G}(b)=F(b)$ which implies that the mass inside the interval $[a,b]$ is preserved, and (iii) $\tilde{G}$ matches $G$ outside $(a,b)$.
Since every interval contains a rational number there can be at most countably many such intervals. Proceeding inductively, the claim follows.
%
%Applying this argument successively to each maximal interval $I \subseteq [0,1]$ such that $I \cup B_G =\emptyset$ establishes the result.
%then the distribution which equals the original distribution $G$ outside $(a_x,b_x)$ and conditional on being in $(a_x,b_x)$ equals $(1-\lambda^\star) H + \lambda^\star H^\star$ solves the original problem. However, for this distribution the mean-preserving contraction constraint \eqref{eq:mpc-proof-main-result} binds for a point in $(a_x,b_x)$.
%This contradicts that $G$ was a solution for which the set of points where the constraint binds is maximal.
%Thus $H^\star$ must satisfy 
%\[
%    \int_x^b H^\star(z) dz \geq \int_x^b F(z) dz \,.
%\]
%This implies that the distribution which equals the original distribution $G$ outside $(a_x,b_x)$ and conditional on being in $(a_x,b_x)$ equals $H^\star$ solves the original problem, and by definition admits $n+2$ mass points in $[a_x,b_x]$.
%We have thus shown that in every interval where \eqref{eq:mpc-proof-main-result} does not bind we can find a solution that consists of $n+2$ mass points in that interval. 
%By Lemma~\ref{lem:mass-preservation} we have that $F=G$ at all points where \eqref{eq:mpc-proof-main-result} binds.
%This completes the proof.
\end{proof}

To establish Proposition~\ref{prop:laminar}, we make use of the partition lemma, stated next:
	\begin{lemma}[Partition Lemma]\label{lem:construct}
		Suppose that distributions $F,  G$ are such that
		$\int_x^1 G(t) dt \geq \int_x^1 F(t)dt$ for $x\in I=[a,b]$, where the inequality holds with equality only for the end points of $I$.
		Suppose further that 
		$G(a)=F(a)$,
		$G(x) = G(a) + \sum_{r=1}^{K} p_r \mathbf{1}_{x\leq m_{r}}$ 
		for $x\in I$
		where $\sum_{r=1}^{K} p_r = F(b)-F(a)$,
		$(m_r)$ is a (weakly) increasing in $r$,
		and  $m_r\in I$ for $r\in [K]\equiv\{1,\dots,K\}$.

		There exists a collection of intervals
		  $\{J_r\}_{r\in [K]}$ such that 
		$\{P_k\} = \{J_k \setminus \cup_{\ell \in \mathcal{A}| \ell > k  } J_{\ell}\}$ is a laminar partition, which satisfies:
		\begin{itemize}
			\item[(a)] 
			$J_{1} = I$, and 
			if $K>1$, then
			$F(\inf J_1)<F(\inf J_K) < F(\sup J_K)<F(\sup J_1)$;
			\item[(b)] $\int_{P_k}  dF(x)=p_k$  for all ${k\in[K]}$;
			\item[(c)] $\int_{P_k} x dF(x) =  p_k m_k$ for all ${k\in [K]}$.
		\end{itemize}
		
	\end{lemma}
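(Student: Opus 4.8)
The plan is to induct on the number $K$ of atoms of $G$ inside $I$. For $K=1$, take $J_1=I$, so the induced partition is the singleton $\{I\}$, which is vacuously laminar; condition (a) is immediate, (b) holds because $\sum_r p_r=F(b)-F(a)$, and (c) follows since subtracting the two endpoint equalities in the hypothesis gives $\int_a^b F(t)\,dt=\int_a^b G(t)\,dt$, so integration by parts (using $F(a)=G(a)$ and $F(b)=G(b)$, the latter a consequence of $\sum_r p_r=F(b)-F(a)$) yields $\int_I x\,dF=\int_I x\,dG=p_1 m_1$.

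For $K\ge 2$ the first step is to locate the interval $J_K=[c,d]\subseteq I$ attached to the largest atom: $F(d)-F(c)=p_K$ and $\int_c^d x\,dF(x)=p_K m_K$. I would parametrize the sub-intervals of $I$ carrying $F$-mass exactly $p_K$ by their left endpoint $c$, with $d=d(c)$, and consider the conditional mean $\psi(c)=\tfrac1{p_K}\int_c^{d(c)}x\,dF$, which is continuous and nondecreasing. At the leftmost window $\psi$ equals the mean of the bottom $p_K$ of the mass of $F$, which is at most the mean of $F$ on $I$ and hence at most $m_K$ (the overall mean is a convex combination of $m_1,\dots,m_K$); at the rightmost window $\psi\ge m_K$, which follows from the mean-preserving-contraction inequality $\int_I\phi\,dG\le\int_I\phi\,dF$ applied to the convex function $\phi(\omega)=\max(\omega-c_{\max},0)$, where $c_{\max}$ is the left endpoint of the rightmost window. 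An intermediate-value argument then supplies $c$ with $\psi(c)=m_K$. Both inequalities are strict — giving $F(a)<F(c)<F(d)<F(b)$, i.e.\ condition (a) — precisely because the majorization comparison $\int_x^1 G\ge\int_x^1 F$ is assumed tight only at the endpoints of $I$: equality of $\psi$ at an extreme window with $m_K$ would force tightness at an interior point.

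The second step peels $J_K$ off and recurses. Let $F'$ be $F$ with the mass on $[c,d]$ deleted (its CDF equals $F$ on $[0,c]$, is flat on $[c,d]$, and equals $F-p_K$ on $[d,1]$) and let $G'$ be $G$ with the atom at $m_K$ removed; these are (sub-)distributions of equal mass on $I$ with $G'(a)=F'(a)$, $G'(b)=F'(b)$, and $G'$ atomic with atoms $m_1,\dots,m_{K-1}$. One computes that $\int_x^1 G'-\int_x^1 F'$ equals $g(x)+R(x)$, where $g(x)=\int_x^1(G-F)\ge 0$ is the original slack and $R$ is a correction that vanishes for $x\notin(c,d)$ — the identity $R(a)=R(b)=0$ is exactly where $\int_c^d x\,dF=p_K m_K$ enters. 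On $(c,d)$ the function $g+R$ is concave (its slope is constant between consecutive atoms, does not change at $m_K$, and only decreases at the other $m_r$) and positive at both $c$ and $d$, hence positive throughout $(c,d)$, while off $[c,d]$ it equals $g>0$. Thus $(F',G')$ meets the hypotheses of the lemma for $K-1$ on $I$; the induction hypothesis yields $J'_1=I,J'_2,\dots,J'_{K-1}$ whose induced partition $\{P'_k\}$ satisfies (a)--(c). Adjoining $J_K$ and replacing each $J'_k$, if necessary, by the convex hull of the $F$-mass it carries — which, since $F'$ has no mass on $J_K$, is either $\supseteq J_K$ or disjoint from its interior — produces a laminar family; and because each $P'_k$ contains either all of $J_K$ or none of it, deleting $J_K$ from the $P'_k$ converts their $F'$-mass and $F'$-barycenter identities into the required $F$-versions, with $P_K=J_K$ handled directly.

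The step I expect to be the main obstacle is closing this recursion: checking that the reduced pair $(F',G')$ again satisfies all three hypotheses — in particular the "equality only at the endpoints" condition, which is the point of the concavity-and-positivity argument on $[c,d]$ and which must be handled even when some $m_r$ with $r<K$ falls inside $J_K$ (so that the corresponding $P_r$ straddles $J_K$ and $J'_r\supseteq J_K$) — together with verifying that adjoining $J_K$ to the recursively produced intervals really yields a laminar family rather than a collection of interlocking intervals. A secondary point is that locating $J_K$ and the base-case mass preservation use that $F$ has no atoms; $F'$ inherits this, so an induction carrying continuity along is clean, and atoms of the original $F$ (absent in the intended application) would require a separate perturbation argument.
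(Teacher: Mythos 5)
Your proof is correct and follows the same inductive scheme as the paper's: induct on the number of atoms $K$, with $K=1$ trivial, and for $K\ge 2$ locate an interval $J_K$ carrying $F$-mass $p_K$ with $F$-barycenter $m_K$, delete that mass from $F$ and the atom $m_K$ from $G$, verify the reduced pair satisfies the hypotheses, recurse, and splice $J_K$ back in. Within that shared skeleton, your execution differs in three places, all of which I think are improvements or at worst neutral. First, you locate $J_K$ directly by a single-parameter sliding-window IVT, showing the conditional mean $\psi(c)$ is continuous and monotone with $\psi(a)<m_K$ (by the convex-combination bound on the overall $F$-mean of $I$) and $\psi(c_{\max})>m_K$ (by testing $\phi(\omega)=(\omega-c_{\max})^+$ against the strict-at-interior majorization); the paper instead bundles the low atoms into a two-atom proxy $\hat G$ and invokes its own $K=2$ case, which constructs the same interval via a two-step IVT on the pair $(\hat t_1,\hat t_2)$. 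Second, you work with unnormalized sub-distributions $F',G'$, whereas the paper rescales by $1/(1-p_K)$ so the reduced objects are again proper CDFs fitting the lemma's hypotheses verbatim; your version needs the (harmless) remark that the lemma holds for sub-probability measures of equal mass. Third, and most notably, your verification that the reduced pair still satisfies $\int_x^1 G'\ge\int_x^1 F'$ with equality only at $a,b$ — writing the slack as $g+R$ with $R$ vanishing off $(c,d)$, and observing $g+R$ is concave on $(c,d)$ with positive boundary values — replaces the paper's four-case computation (cases $[a,x_0]$, $[x_0,\hat m_2]$, $[\hat m_2,x_1]$, $[x_1,b]$) with a cleaner structural argument; it handles atoms $m_r$ with $r<K$ lying inside $(c,d)$ without extra work, which you correctly identify as the place one might otherwise worry. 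Both you and the paper leave the final "adjoin $J_K$" step somewhat informal — the paper sets $J_k = J'_k\setminus\hat J_2$ when $\hat J_2\not\subset J'_k$, you replace each $J'_k$ by the hull of its $F'$-mass — and in both cases the real content is that since $F'$ is null on $(c,d)$, no $J'_k$ needs to cross the boundary of $J_K$ once trimmed to its mass, so adjoining $J_K$ preserves laminarity and deleting $J_K$ from the element of $\{P'_k\}$ that contains it leaves all $F$-masses and $F$-barycenters unchanged. Your explicit flag that this is where rigor is thinnest is accurate, and applies equally to the paper's own write-up; both are fine modulo $F$-null boundary points, which is the right level of precision given $F$ has a density.
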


\begin{proof}[Proof of Lemma \ref{lem:construct}]
We prove the claim by induction on $K$. 
Note that when $K=1$ we have $J_1={P}_1=I$,
which readily implies properties (a) and (b).
In addition,  the
definition of $p_1, m_1$
implies that
\begin{equation}
    \begin{aligned}
        G(b) b -G(a)a -p_1 m_1 &= G(a) (b-a) + p_1 (b-m_1) = \int_a^b G(t) dt =\int_a^b F(t) dt \\ 
        &= F(b)b-F(a)a-\int_{I} t dF(t) = G(b) b -G(a)a- \int_{{P}_1} t dF(t).        
    \end{aligned}
\end{equation}
    Hence, property (c) also follows.
    
We proceed by considering two cases: $K=2$, $K>2$.

\underline{$K=2$}:
Let $t_1,t_2\in I$ be such that
$F(t_1)-F(a)=F(b)-F(t_2)=p_1$.
%Since $F(b)-F(a)=m_1+m_2$, it follows that $t_1<t_2$.
Observe that since 		$\int_x^1 G(t) dt \geq \int_x^1 F(t)dt$  $x\in I$ and this inequality holds with equality only at the end points of $I$, 
we have (i) $\int_{a}^{t_1} F(x) dx > \int_{a}^{t_1} G(x) dx $ and  (ii)
$\int_{t_2}^{b} F(x) dx < \int_{t_2}^{b} G(x) dx $. Using the first inequality and the definition of $G$ we obtain:
\begin{equation}
    \begin{aligned}
p_1&(t_1-m_1)^+ +
    G(a) (t_1-a)\leq  \int_{a}^{t_1} G(x) dx<
    \int_{a}^{t_1} F(x) dx\\
    &=
    F(t_1)t_1-F(a)a-\int_{a}^{t_1} x dF(x) 
    = (G(a) +p_1)t_1 - G(a) a  -\int_{a}^{t_1} x dF(x).
    \end{aligned}
    \end{equation}
    Rearranging the terms, this yields 
    \begin{equation} \label{eq:leftSide}
     p_1 m_1 \geq p_1 t_1 - p_1(t_1-m_1)^+ > \int_{a}^{t_1} x dF(x).
    \end{equation}
Similarly, using (ii) and the definition of $G$ we obtain:
\begin{equation}
    \begin{aligned}
    G(b)& (b-t_2)  - p_1(m_1 - t_2)^+
    \geq \int_{t_2}^{b} G(x) dx>
    \int_{t_2}^{b} F(x) dx\\
    &=
    F(b)b-F(t_2)t_2-\int_{t_2}^{b} x dF(x) 
    = G(b) b - (G(b) - p_1) t_2-\int_{t_2}^{b} x dF(x).
    \end{aligned}
\end{equation}
Rearranging the terms, this yields 
\begin{equation} \label{eq:rightSide}
    p_1 m_1 \leq p_1 t_2
    %+m_1(y_2 - t_2)^+
    + p_1(m_1 - t_2)^+ < \int^{b}_{t_2} x dF(x).
\end{equation}
Combining \eqref{eq:leftSide} and \eqref{eq:rightSide}, and the fact that 
$F(t_1)-F(a)=F(b)-F(t_2)=p_1$
%$F([a,t_1])=F([t_2,b])= m_1$ 
implies that
there exist $\hat{t}_1,\hat{t}_2 \in \mathrm{int}(I)$ 
satisfying 
%$\hat{t}_1<\hat{t}_2$
$F(a)<F(\hat t_1)<F(\hat t_2)<F(b)$
such that
$F(\hat{t}_1) - F(a) + F(b) - F(\hat{t}_2) = p_1$ and
%$F([a,\hat{t}_1] \cup [\hat{t}_2,b] ) = m_1$ and
\begin{equation} \label{eq:condExp1}
	 \int_{a}^{\hat t_1} x dF(x) + \int_{\hat t_2}^b x dF(x) = p_1 m_1.
\end{equation}
Note that
\begin{equation*}
    \begin{aligned}
(b-a)G(a)& +  (b-m_1)p_1 + (b-m_2) p_2
= \int_{a}^{b} G(x) dx
 = \int_{a}^{b} F(x) dx \\
 &= bF(b) -aF(a)- \int_{a}^{b} x dF(x)    
 = bG(b) -aG(a)- \int_{a}^{b} x dF(x).
    \end{aligned}
\end{equation*}
Since 
$p_1+p_2 = G(b)-G(a)$, this in turn implies that $\int_{a}^{b} x dF(x)= p_1 m_1 + p_2 m_2.$
Combining this observation with \eqref{eq:condExp1}, we conclude that
\begin{equation}\label{eq:condExp2}
    \begin{aligned}
    \int_{\hat{t}_1}^{\hat{t}_2} 
    x dF(x)= p_2 m_2.
    \end{aligned}
\end{equation}
Let $J_2=[\hat{t}_1,\hat{t}_2]$, and $J_1=I$, and define $P_1, P_2$ as in the statement of the lemma.
Observe that this construction immediately satisfies (a) and (b). Moreover, (c) also follows from \eqref{eq:condExp1} and \eqref{eq:condExp2}.
Thus, the claim holds when $K=2$.

\underline{$K>2$}:
Suppose that $K>2$, and that the induction hypothesis holds for any $K'\leq K-1$.
Let $\hat{p}_2 = p_K$,   $\hat{m}_2= m_K $;
%Let  $\hat{z}_2=  z_m$, 
%$\hat{p}_2= p_m$, 
and
$\hat{p}_1= \sum_{k\in [K-1]} p_k$,  
$\hat{m}_1= \frac{1}{\hat{p}_1} \sum_{k\in[K-1]} p_k m_k $.
%	$\hat{p}_1= \sum_{k<m} p_k$, 
%	and $\hat{z}_2=z_m$, $\hat{p}_2=p_m$.
Define a distribution $\hat{G}$ such that
$\hat{G}(x)=G(x)$ for $x\notin I$,
$\hat{G}(a) = F(a)$, and $\hat{G}(x) = \hat{G}(a) + \sum_{r=1}^{2} \hat p_r \mathbf{1}_{x\leq \hat m_{r}}$.
This construction ensures that $\hat{p}_1+\hat{p}_2 =F(b) - F(a)$ and $\hat{m}_2 > \hat{m}_1$.
Moreover, $\hat G$ is a mean preserving contraction of ${G}$, and hence $\int_x^1 \hat{G}(t) dt \geq 
\int_x^1 {G}(t) dt$.
Since $\hat{G}(x)=G(x)$ for $x\notin I$, 
this in turn implies that $\int_x^1 \hat{G}(t) dt \geq \int_x^1 F(t) dt$ for $x\in I$ where the inequality holds with equality only for the end points of $I$.
Thus, the assumptions of the lemma hold for $\hat G$ and $F$, and using the induction hypothesis for $K'=2$,
we conclude that there exists intervals $\hat{J}_1$, $\hat{J}_2$ and sets $P_2 = \hat{J}_2$,
$P_1 = \hat{J}_1 \setminus \hat{J}_2$,
such that 
\begin{itemize}
	\item[($\hat a$)]
	$ I=\hat{J}_1 \supset \hat{J}_2$, and
$F(\inf \hat J_1)<F(\inf \hat J_2) < F(\sup \hat J_2)<F(\sup \hat J_1)$;
%	the end points of $F(\hat I_{2})$ are strictly in between those of
%	$F(\hat I_{1})$;
				\item[($\hat b$)] $\int_{P_k}  dF(x)= \hat p_k$  for  ${k\in \{1,2\}}$;
			\item[($\hat c$)] $\int_{P_k} x dF(x) =  \hat p_k \hat m_k$ for all ${k\in \{1,2\}}$.
\end{itemize}
Note that $(\hat b)$ and $(\hat c)$ imply that
$\hat{m}_2 \in \hat{J}_2$.

Denote by ${x}_0, {x}_1$ the end points of $\hat{J}_2$
and let $q_0=F(x_0)>F(a)$, $q_1=F(x_1)<F(b)$.
Define a cumulative distribution function $ F'(\cdot)$, such that 
\begin{equation} \label{eq:newCdf}
	{F}'(x)= 
	\begin{cases}
		F(x)/(1-\hat p_2) & \mbox{for $x\leq x_0$}, \\
		F(x_0)/(1-\hat p_2) & \mbox{for $x_0< x <x_1$}, \\
		(F(x) - \hat p_2) /(1 - \hat p_2) & \mbox{for $x_1\leq x$}. \\
	\end{cases}
\end{equation}
% It can be readily checked that $\hat F$ is
% the  cumulative distribution function of an
% absolutely continuous
%   random variable $\hat T$ such that  $\mathbb{P}(\hat T \in (x_0,x_1))=0$,  and
% \begin{equation}\label{eq:TprobThat}
% 	\mathbb{P}(\hat T \in \mathcal{\tilde T})= 
% 	\mathbb{P}(T \in \mathcal{ \tilde T})/(1-\hat m_2)
% \end{equation}
% for
% $\mathcal{  \tilde T}$ such that 
% $\mathcal{ \tilde T} \cap (x_0,x_1)=\emptyset$.
% $x\leq x_0$ and 
%$\mathbb{P}(\hat T>x)= 
%\mathbb{P}(T>x)/(1-p_m)$ 
%for $x \geq x_1$.
%%%%%%%%%%%%%%%%%%%%%%%%%%%%%%%%%%%%%%%%%%%%%%%%%%%%%%%%%%%%%%%%%%%%%%%%%%%%%%%%%%%%%%%%%%%%%%%%%%%%%%%%%%%%%%%%%%%%%%%%%%%%%%%%%%%%%%%%%%%%%%%%
% The inverse of $\hat F(\cdot)$ can be characterized as follows:
% \begin{equation} \label{eq:newInvCdf}
% 	\hat{F}^{-1}(q)= 
% 	\begin{cases}
% 		F^{-1}(q (1-\hat m_2) ) & \mbox{for $ q \leq F(x_0)/(1-\hat m_2)$}, \\
% 		%F^{-1}(q (1-p_m) 
% 		F^{-1}(q (1 - \hat m_2)  + \hat m_2)		& \mbox{for $F(x_0)/(1-\hat m_2) <q$}. \\
% 		%	(F(x) -p_m) /(1 - p_m) & \mbox{for $x_1\leq x$}. \\
% 	\end{cases}
% \end{equation}
%%%%%%%%%%%%%%%%%%%%%%%%%%%%%%%%%%%%%%%%%%%%%%%%%%%%%%%%%%%%%%%%%%%%%%%%%%%%%%%%%%%%%%%%%%%%%%%%%%%%%%%%%%%%%%%%%%%%%%%%%%%%%%%%%%%%%%%%%%%%%%%%
% Since $F(a) < \hat q_0=F(x_0)$ and $F(b) > \hat{q}_1=\hat q_0+ \hat m_2$, 
% this characterization  yields
% $\hat{F}^{-1}(q'_0)=F^{-1}(q_0)$ and $\hat{F}^{-1}(q'_1)=F^{-1}(q_1)$.
Set $p'_k= p_k /(1-\hat{p}_2)$ and ${m}'_k = m_k$ for $k\in [K-1]$. 
Let distribution $G'$ be such that
$G' (x) = G(x) / (1-\hat{p}_2) $ for $x\notin I$,
and $G'(x)= {G}'(a) + \sum_{r\in [K-1]}  p'_r \mathbf{1}_{x\leq  m'_{r}}$.
Observe that by construction ${G}'(a) = {F}'(a)$,
$\sum_{r\in [K-1]} p'_r = {F}'(b)-{F}'(a)$, and $\{m_r'\}$ is  increasing in $r$, where $m_r'\in I$,
$m_{r}'\leq\hat{m}_2$
for $r\in [K-1]$.
The following lemma implies that $G'$ and $F'$ also satisfy the MPC constraints over $I$:
\begin{lemma}
 $\int_{x}^1 G'(t) dt \geq \int_x^1 F'(t) dt$ for $x\in I$, where the inequality holds with equality only for the end points of $I$.
\end{lemma}
\begin{proof}
The definition of $G'$ implies that it can alternatively be expressed as follows:
\begin{equation} \label{eq:newCdfMu}
	{G}'(x)= 
	\begin{cases}
		G(x)/(1-\hat p_2) & \mbox{for $x< \hat {m}_{2}$}, \\
		(G(x) - \hat p_2) /(1 - \hat p_2) & \mbox{for $x\geq \hat {m}_{2}$}. \\
	\end{cases}
\end{equation}
Since $\int_{b}^1 G(t) dt  = \int_b^1 F(t) dt$,
%and  $\int_{a}^1 G(t) dt  = \int_a^1 F(t) dt$,
 \eqref{eq:newCdf} and \eqref{eq:newCdfMu} 
readily imply
that $\int_{b}^1 G'(t) dt  = \int_b^1 F'(t) dt$.
Similarly, using these observations and \eqref{eq:newCdf} we have
\begin{equation} \label{eq:FPrimeEq}
    \begin{aligned}
        (1-\hat{p}_2) &\int_a^1 F'(t)  dt= \int_a^1 F(t) dt -\int_{x_0}^{x_1} F(t) dt + F(x_0) (x_1-x_0) - \hat{p}_2 (1-x_1) \\ 
        &= \int_a^1 F(t) dt -
        F(x_1) x_1 +F(x_0)x_0 + \hat p_2 \hat{m}_2
         + F(x_0) (x_1-x_0) - \hat{p}_2 (1-x_1) \\
         &= \int_a^1 G(t) dt - \hat p_2 (1-\hat{m}_2)
    \end{aligned}
\end{equation}
Here, the second line rewrites $\int_{x_0}^{x_1} F(t) dt$ 
using integration by parts,  and leverages ($\hat c$).
The third line uses the fact that $\hat p_2 = F(x_1)-F(x_0)$
and $\int_{a}^1 G(t) dt  = \int_a^1 F(t) dt$.
On the other hand,  \eqref{eq:newCdfMu}  readily implies that:
\begin{equation}
    \begin{aligned}
        (1-\hat{p}_2) \int_a^1 G'(t)  dt&= \int_a^1 G(t) dt - \hat{p}_2 (1-\hat{m}_2) \\ 
    \end{aligned}
\end{equation}
Together with \eqref{eq:FPrimeEq}, this equation implies that  $\int_{a}^1 G'(t) dt  = \int_a^1 F'(t) dt$.
Thus, the inequality in the claim holds with equality for the end points of $I$.

Recall that
$\hat{m}_2 \in \hat{I}_2 $ and hence
$a<x_0 \leq  \hat{m}_2 = m_K  \leq x_1 <b$. We complete the proof by focusing on the value $x$ takes in the following cases: (i) $a< x \leq x_0$, (ii) $x_0 \leq x \leq \hat{m}_2  $, (iii) $\hat{m}_2 \leq x \leq x_1$,
(iv) $x_1 \leq x < b$.

\noindent{\bf Case (i):}
Using the observations
$
 \int_x^1 G(t) dt   >   \int_x^1 F(t) dt 
$
and $\int_a^1 G(t) dt   =   \int_a^1 F(t) dt $
together with   \eqref{eq:newCdf} and \eqref{eq:newCdfMu} yields
\begin{equation}
    \int_a^x G'(t) dt  = \frac{1}{1-\hat{p}_2} \int_a^x G(t) dt   < \frac{1}{1-\hat{p}_2} \int_a^x F(t) dt = \int_a^x F'(t) dt. 
\end{equation}
Together with $\int_{a}^1 G'(t) dt  = \int_a^1 F'(t) dt$ this implies that
$\int_x^1 G'(t) dt >\int_x^1 F'(t) dt  $ in case (i).

\noindent \textbf{Case (ii):} 
Using   \eqref{eq:newCdf} and \eqref{eq:newCdfMu}  we obtain:
\begin{equation*}
\begin{aligned}
({1-\hat{p}_2})\!\!\int_x^{1} G'(t)-F'(t) dt  &= \!\!\int_x^{1} \!\!G(t) dt - (1-\hat m_2) \hat{p}_2 \! -\! \int_{x_1}^1 F(t)dt- \int_x^{x_1} F(x_0) dt+ (1-x_1) \hat{p}_2 \, . \\
% &=
% \int_x^{x_1} G(t)-F(x_0) dt +
% \int_{x_1}^{1} G(t) dt - (x_1-\hat y_2) \hat{m}_2  - \int_{x_1}^1 F(t)dt\\
\end{aligned}
\end{equation*}
Since $G$ is an increasing function, it can be seen that the right hand side is a concave function of $x$. Thus, for $x\in [x_0,\hat{m}_2]$ this expression is minimized for $x=x_0$ or $x=\hat{m}_2$.
For $x=x_0$, case (i) implies that the expression is non-negative.  We next argue that for $x=\hat{m}_2$ the expression remains non-negative. This in turn implies that $\int_x^{1} G'(t)-F'(t) dt  \geq 0$ for $x\in [x_0,\hat{m}_2]$, as claimed.

Setting $x=\hat{m}_2$, 
recalling that
$\int_{b}^1 G(t) dt  = \int_b^1 F(t) dt$, 
and observing that $G(t) =G(b)=F(b)$ for $t\in [\hat{m}_2,b]$
the right hand side of the previous equation reduces to:
\begin{equation}
\begin{aligned}
R:&= (b-\hat m_2)F(b) -(1-\hat{m}_2)  \hat{p}_2  -\int_{x_1}^b F(t) dt - (x_1-\hat{m}_2) F(x_0) +(1-x_1) \hat{p}_2\\
&=(b-\hat m_2)F(b)    -\int_{x_1}^b F(t) dt - (x_1-\hat{m}_2) F(x_0) -(x_1-\hat{m}_2) \hat{p}_2\\
&=(b-x_1)F(b)-\int_{x_1}^b F(t) dt  + (x_1-\hat m_2)(F(b)-  F(x_0) -  \hat{p}_2).
\end{aligned}
\end{equation}
Since $F(b)\geq F(x_1) = \hat{p}_2 + F(x_0)$, we conclude:
\begin{equation}
    \begin{aligned}
        R &\geq  
        (b-x_1)F(b)    -\int_{x_1}^b F(t) dt \geq 0,
    \end{aligned}
\end{equation}
where the last inequality applies since $F$ is weakly increasing.
Thus, we conclude that $\int_{\hat{m}_2}^{1} G'(t)-F'(t) dt \geq 0$, and the claim follows.

\noindent{\bf Case (iii):} First observe that \eqref{eq:newCdf} and \eqref{eq:newCdfMu} imply that
\begin{equation*}
\begin{aligned}
({1-\hat{p}_2})\int_x^{1} G'(t)-F'(t) dt  &= \int_x^{1} G(t) dt - (1-x) \hat{p}_2  - \int_{x_1}^1 F(t)dt- \int_x^{x_1} F(x_0) dt+ (1-x_1) \hat{p}_2 . \\
% &=
% \int_x^{x_1} G(t)-F(x_0) dt +
% \int_{x_1}^{1} G(t) dt - (x_1  \hat{m}_2  - \int_{x_1}^1 F(t)dt\\
\end{aligned}
\end{equation*}
Similar to case (ii), the right hand side is a concave function of $x$. Thus,  for $x\in [\hat{m}_2,x_1]$ this expression is minimized for $x=\hat{m}_2$ or $x={x}_1$.
When $x=\hat{m}_2$, case (ii) implies that $\int_x^{1} G'(t)-F'(t) dt \geq 0$.
Similarly, when $x=x_1$, case (iv) implies that $\int_x^{1} G'(t)-F'(t) dt \geq 0$. Thus, it follows that  $\int_x^{1} G'(t)-F'(t) dt  \geq 0 $  for all $x\in [\hat{m}_2,x_1]$.

\noindent{\bf Case (iv):} In this case, \eqref{eq:newCdf} and \eqref{eq:newCdfMu} readily imply that
\begin{equation*}
\begin{aligned}
({1-\hat{p}_2})\int_x^{1} G'(t)-F'(t) dt  &=  \int_x^{1} G(t)-F(t) dt >0,
\end{aligned}
\end{equation*}
where the inequality follows from our assumptions on $F$ and $G$. 
\end{proof}
 
Summarizing, we have established that
the distribution $G'$ and $F'$ satisfy the conditions of the lemma.  By the induction hypothesis, we have that 
there exists  intervals   $\{J_k'\}_{k\in  [K-1]}$ 
and sets
$P_k' = J_k' \setminus \cup_{ \ell \in [K-1] | \ell>k } J_{\ell}'$ 
for all $k\in \mathcal{A}'$ such that:
\begin{itemize}
		\item[(a')] 
			$J_{1}' = I$, and 
$F(\inf  J_1')<F(\inf  J_{K-1}') < F(\sup J_{K-1}')<F(\sup  J_1')$;
%			the end points of the interval 	 $F(I_{K-1})$ are strictly in between those of $F(I_{1})$;
			%		the sets
			%		$\{\mathcal{T}_k\}_{k\in \mathcal{A}}$ are pairwise disjoint and  $\cup_{k\in \mathcal{A}} \mathcal{T}_k=[F^{-1}(q_0), F^{-1}(q_1)]$,
			\item[(b')] $\int_{P_k'}  dF'(x)=p_k'$  for all ${k\in[K-1]}$;
			\item[(c')] $\int_{P_k'} x dF'(x) =  p_k' m_k'$ for all ${k\in [K-1]}$.
\end{itemize}

%there exists a collection $\{I_k'\}_{k\in {[m-1]}}$   satisfying these conditions  such that  for each $k\in [m-1]$ either $\hat I_2 \cap I_k =\emptyset$, or    $\hat I_2 \subset I_k$ .

Let $J_k=J'_k \setminus \hat J_2$ for $k\in [K-1]$
such that $\hat J_2 \not\subseteq J_k'$,  and
$J_k=J'_k$ for the remaining $k\in [K-1]$.
Define $J_K=\hat{J}_2 = [x_0,x_1]$.
%Note that the collection $\{I_k\}_{k\in [K]}$  is also a laminar interval family.
For $k\in [K]$, 
let
$P_k = J_k \setminus \cup_{\ell \in [K] |\ell>k } J_{\ell}$.
Note that
the definition of 
the collection $\{P_k\}_{k\in [K]}$ implies that it  constitutes a laminar partition of $I$.
Observe that the construction of $\{J_k\}_{k\in {[K]}}$ and ($\hat a$), ($a'$) imply that
these intervals also satisfy condition (a) of the lemma.
%the end points of $F(I_K)=F(\hat I_2)$ are strictly in between those of $F(I_1) = F(I) $ (by ($\hat a$), ($a'$), and the construction of $\{I_k\}_{k\in {[K]}}, \{\mathcal{T}_k'\}_{k\in [K-1]}$).
%The definition of $\{P_k\}_{k\in [K]}$ (together with  ($\hat a$) and ($a'$)) also implies that these sets are pairwise disjoint and  $\cup_{k\in [K]} \mathcal{T}_k= I $.
Note that by construction we have
\begin{equation}\label{eq:TkProperties}
	P_k \subseteq  P_k' \subseteq P_k \cup J_K \quad \mbox{and} \quad   P_k \cap J_K=\emptyset \quad \mbox{ for $k\in [K-1]$.}
\end{equation}
%Together with \eqref{eq:TprobThat}, 
Since 
$\int_{J_K} dF'(t) = 0$ by \eqref{eq:newCdf}
this observation implies that
$\int_{P_k'} dF'(t) =\int_{P_k} dF'(t)$ for $k\in [K-1]$.

Using  \eqref{eq:newCdf},
($b'$), 
and \eqref{eq:TkProperties}, this observation 
implies that
%the construction of $\mathcal{T}_k$, and \eqref{eq:newCdf} implies that
\[
\int_{P_k} dF(t)=
\int_{P_k} dF'(t) (1-\hat{p}_2) 
=
\int_{P_k'} dF'(t) (1-\hat{p}_2) = p_k' (1-\hat{p}_2)= p_k,
\]
for $k\in [K-1]$.
Similarly, by ($\hat b$) we have
$\int_{P_K} dF(t)=  \int_{\hat P_2} dF(t)= \hat{p}_2 = p_K $.

%\eqref{eq:TprobThat}
%$\{\mathcal{T}_k\}_{k\in [m]}$ are pairwise disjoint and  $\cup_{k\in [m]} \mathcal{T}_k=[F^{-1}(q_0),  F^{-1}(q_1)] $, we obtain
%\[\mathbb{P}(T\in \mathcal{T}_m)=(q_1-q_0)- \sum_{k\in [m-1]}\mathbb{P}(T\in \mathcal{T}_k) =   \sum_{k\in[m] } p_k- \sum_{k\in[m-1] } p_k=p_m,\]
%where in the last step we used condition (i) of the lemma. 

Finally, observe that by ($\hat c$) we have
$\int_{ P_{K}} t  dF(t)= 
\int_{ \hat P_2} t dF(t)= \hat{p}_2 \hat{m}_2 = p_K m_K $.
Similarly, 
\eqref{eq:newCdf}  and \eqref{eq:TkProperties}
imply that for $k\in[K-1]$, we have
$$\int_{ P_{k}} t  dF(t)= (1-\hat{p}_2) \int_{ P_{k}} t  dF'(t)= (1-\hat{p}_2) \int_{ P_{k}'} t  dF'(t)=  (1-\hat{p}_2) p_k' m_k' = p_k m_k.$$

These observations imply that the constructed $\{J_k\}_{k\in [K]}$ and $\{P_k\}_{k\in [K]}$ satisfy the induction hypotheses (a)--(c) for $K$. Thus, the claim follows by induction.
\end{proof}

\begin{proof}[Proof of Proposition \ref{prop:laminar}]
%The proof of the lemma is given after the proof of the proposition.
	By definition, the interval $I_j$  in the statement of Proposition \ref{prop:laminar} satisfies the conditions of Lemma \ref{lem:construct},
	(after setting $a=a_j$, $b=b_j$).  The lemma 
	defines auxiliary intervals $\{J_r\}$ and 
	explicitly constructs a laminar partition that satisfies  conditions (a)-(c). Here, conditions (b) and (c) readily imply that the constructed laminar partition satisfies the claim in Proposition \ref{prop:laminar}, concluding the proof.
\end{proof}	

%
%
%
%\begin{proof}[Proof of Theorem~\ref{thm:optimal-signals}]
%The set of vectors of distributions $G = (G^\theta)_\theta$ satisfying \eqref{eq:prog0-IC} and %\eqref{eq:prog0-majorization} is compact by the same argument  given in Lemma~\ref{lem:compactness}. As $G \mapsto %\sum_{\theta \in \Theta} \phi(\theta) \int_\Omega \bar{v}(s, \theta)  dG^\theta(s)$ is upper-hemi continuous it follows %that an optimal solution exists.
%For this optimal solution $G^*$ we can define $e_\theta,d_\theta$ as in \eqref{eq:ctheta} and \eqref{eq:dtheta}. By %Lemma~\ref{lem:decoupled} given these constants the optimal policy needs to solve \eqref{eq:prog-obj}-\eqref{eq:prog-IC-2}. %The result then follows directly from Propositions~\ref{prop:general-result} and \ref{prop:laminar} which analyze the %solution to  optimization problems of this form.
%\end{proof}
%
%
\begin{proof}[Proof of Theorem~\ref{thm:optimal-signals}]

The existence of an optimal mechanism follows from standard compactness arguments and is proven in Lemma~\ref{lem:existence} in the Online Appendix.
Consider now an arbitrary optimal SGR mechanism.
Fix a type profile $\theta$.
By combining Lemma~\ref{lem:decoupled-n1} and \ref{lem:decoupled} we can replace $G^\theta$ by another solution to the respective optimization problem under MPC and linear side constraints and obtain a new SGR mechansim.
By Propositions~\ref{prop:general-result} and \ref{prop:laminar} there always exists a solution to this optimization problem under MPC and linear side constraints that can be implemented by a laminar partitional signal.
Iterating this process over type profiles we get that there exists an optimal SGR mechanism in which each distribution $G^\theta$ can be implemented by a laminar partitional signal.
Thus, we constructed an optimal laminar partitional mechanism.
%Notably, the optimal action distributions $q^\theta(\cdot|m)$ in this SGR mechanism might still be random.
%We can not 
%we can define $e_\theta,d_\theta$ as in \eqref{eq:ctheta} and \eqref{eq:dtheta}. By Lemma~\ref{lem:decoupled} given these constants the optimal policy needs to solve \eqref{eq:prog-obj}-\eqref{eq:prog-IC-2}. The result then follows directly from Propositions~\ref{prop:general-result} and \ref{prop:laminar} which analyze the solution to  optimization problems of this form.
\end{proof}

\begin{proof}[Proof of Proposition \ref{prop:Thm2ndPart}] Part (i) of the proposition follows from Proposition~\ref{prop:laminar} by counting the number of linear side constraints in the optimization problem stated in Lemma~\ref{lem:decoupled-n1}. 
Similarly, by counting the side constraints in the optimization problem stated in Lemma~\ref{lem:decoupled} it follows that for every optimal SGR mechanism $(G^\ast,q^\ast)$ and every type profile $\theta$ there exists a laminar partitional signal that generates $G^{\ast,\theta}$.
This proves the result if the  action profile distribution 
 $q^{\ast,\theta}(\cdot|m)$
is degenerate
and deterministically recommends an
  action profile for each posterior mean $m$ and type profile $\theta$.
  We refer to  SGR mechanisms associated with such action profile distributions as non-randomized SGR mechanisms (since for a given type profile and posterior mean, their recommendation is deterministic).
In the multi-agent case, in contrast to the single agent case, 
non-randomized SGR mechanisms need not be optimal
and the designer may need to use
non-degenerate
distributions of recommended action profiles.

To show the result 
when optimal SGR mechanisms require
non-degenerate action profile distributions, we  consider the parametrization of SGR mechanisms in terms of the mean $m_{a,\theta}$ and the \emph{unconditional} probability of each action profile $q^\theta(a)$ (see Section~\ref{se:sgrm}).
In this parametrization the set of non-randomized SGR mechanisms is dense in the space of all SGR mechanisms (as  for each randomized mechanism an arbitrarily small perturbation of all the means
induces a  non-randomized mechanism).
The designer's payoff as well as the IC constraints are continuous in this parametrization (see the proof of Lemma~\ref{lem:existence}), and hence  there exists a sequence of non-randomized SGR mechanisms such that the payoff of the designer converges to the value of the optimal (randomized) SGR mechanism along the sequence.
By the argument of Lemma~\ref{se:decoupling} for each of these non-randomized SGR mechanisms there exists a laminar partitional mechanism 
with weakly larger payoff and  partition  depth of  at most $\sum_{i \in N} |A_i|^2 |\Theta_i|+2$. Since there are finitely many partial orders defining laminar partitions, there exist one that appears infinitely often along the sequence.
Since for a given partial order the laminar partition is defined in terms of the end points of the convex hulls of the partition elements which belong to $[0,1]^{|A|}$, there is a subsequence associated with this partial order that converges to a laminar partition consistent with the same partial order -- which still has depth bounded by $\sum_{i \in N} |A_i|^2 |\Theta_i|+2$.
Moreover, the designer's payoff is continuous in the end points of the aforementioned intervals  (since the distribution of the state is continuous).
Thus, this limit point defines a new laminar partitional mechanism which achieves the optimal objective and has at most the depth stated in the claim.
\end{proof}

\begin{proof}[Proof of Proposition \ref{prop:public-bound}]
The first claim is immediate and follows as explained in the text. Here, we focus on the following example and use it to prove the second part of the claim.

\begin{example}\label{ex:public-private}
There is a single agent,  all types are equally likely, i.e., $\phi(\theta) \equiv 1/|\Theta|$ for all $\theta\in \Theta=\{1,\dots,n\}$, and the state is uniformly distributed in $[0,1]$.
For $k \in \{ -2n , \ldots , 2n \}$ we define intervals $B_{L,k} = [b_{L,k-1},b_{L,k}], B_{R,k} = [b_{R,k-1},b_{R,k}]$.
Here, for any integer $k$ we let
\begin{equation*}
    b_{L,k} = \frac{1}{4} + \frac{1}{8} \text{sgn}(k) \sqrt{\frac{|k|}{2 n }} \hspace{2cm}
    b_{R,k} = \frac{3}{4} + \frac{1}{8} \text{sgn}(k) \sqrt{\frac{|k|}{2 n }} \,.
\end{equation*}
All types of the agent share the same indirect utility function $\bar{u}$, such that $\bar{u}(m,\theta) = (m-\frac{1}{2})^2$ for all $m \in \{b_{L,k},b_{R,k} \}$, and linearly interpolated in each $B_{L,k}$ and $B_{R,k}$ (in our construction the payoffs outside these intervals will be immaterial).
The indirect utility of the designer is %given by
\begin{equation} \label{eq:constructV}
    \bar{v}(m,\theta) = \begin{cases}
    1 &\text{ if }  m \in B_{L,2\theta} \cup B_{L,- 2\theta } \cup B_{R,2 n+ 2- 2\theta} \cup B_{R, -2n -2+2 \theta } \\ %m \in [b_{L,2 \theta-1},b_{L,2 \theta }] \cup [b_{L, - 2 \theta},b_{L, - 2 \theta + 1}], \\
    %1 &\text{ if } m \in [b_{R, 2n + 1 - 2 \theta},b_{R,  2n +2 - 2 \theta }] \cup [b_{R, 2 \theta - 2n - 2},b_{R, 2 \theta - 2n - 1}],\\
    0 &\text{ otherwise.} 
    \end{cases} 
\end{equation}
The agent's indirect utility functions can be generated by taking the set of actions to be $\{a_{L,k},a_{R,k}\}$ for $k \in \{-2n-1,\ldots,+2n+1\}$ and the utilities as a function of the action to be
\[
    {u}(a_{\cdot,k}, \omega , \theta  ) = c_{\cdot ,k-1}^2 + \frac{\omega - b_{\cdot,k-1}}{b_{\cdot,k} - b_{\cdot,k-1}} (c_{\cdot,k}^2 - c_{\cdot,k-1}^2),
\]
 where $c_{\cdot ,k}=b_{\cdot ,k}-\frac{1}{2}$.
 %, and by convention $c_{L,2n+1}=c_{R,-2n}$. %$c_{R,2n+1}=1/2$. %and
%  $b_{L,-2K-1}=0$, 
%  $b_{L,2K+1}=1/2=b_{L,-2K-1}$, 
%  $b_{L,2K+1}=1$ by convention.
Similarly, we let $v( a, \omega , \theta )=1$ for  actions $a_{L,2\theta}, a_{L,-2\theta}, a_{R, 2n+2 - 2\theta}, a_{R, -2n-2 + 2\theta}$ and zero otherwise.
\end{example}

We begin by establishing that in the setting of Example~\ref{ex:public-private} no public mechanism achieves more than $1/|\Theta|$. 
Note that
by our construction in \eqref{eq:constructV},
for any posterior mean $m$ the indirect utility of the designer equals $1$ for at most a single type, i.e.,
$\sum_{\theta \in \Theta} \bar{v}(m,\theta) \leq 1$.
As $\phi(\theta) = 1/|\Theta|$ this immediately implies that for any \emph{type independent} distribution of the posterior mean $G$ the designer can  achieve a payoff of at most $1/|\Theta|$.
% \[
%     \sum_{\theta \in \Theta} \phi(\theta) \int \bar{v}(m,\theta) dG(m) \leq \frac{1}{n} \,.
% \]

Next consider the following private mechanism:
The distribution $G^\theta$ for type $\theta \in \Theta$ consists of 4 equally likely mass points at 
$    b_{L,2 \theta},  b_{L,-2 \theta},  b_{R,2n + 2 - 2 \theta},  b_{R, - 2n - 2 + 2 \theta}$.
% \[
%     b_{L,2 \theta}, \,\,\,\, b_{L,-2 \theta}, \,\,\,\, b_{R,2n + 2 - 2 \theta}, \,\,\,\, b_{R, - 2n - 2 + 2 \theta} \,\,\,.
% \]
% It can be readily  verified 
% from Lemma \ref{lem:feasibility}
% that such a posterior mean distribution can be induced with a signal.\footnote{In fact, 
It is straightforward to see that the signal based on the partition  $(\Pi_k)_{k=1}^4$ with
$\Pi_1= (b_{L,2\theta}-1/8, b_{L,2\theta}+1/8)$, $\Pi_2 = [0,1/2] \setminus \Pi_1$, $\Pi_3=[ b_{R,2n + 2 - 2 \theta}-1/8, b_{R,2n + 2 - 2 \theta}+1/8]$, $\Pi_4=(1/2,1] \setminus \Pi_3$ induces the desired posterior mean distribution.
At each of these beliefs the agent's indirect utility is given by $\bar{u}(m,\theta) = (m-\frac{1}{2})^2$.
Thus, the benefit  the agent of type $\theta'$ derives from observing the signal meant for type $\theta$ (relative to observing no signal) equals the variance of $G^\theta$. % plus a constant (which is equal to $1/4$ -- square of the mean).

Note that the variance conditional on the posterior being less than $1/2$  equals $\nicefrac{1}{2} (b_{L,2 \theta}-\nicefrac{1}{4})^2 + \nicefrac{1}{2} (b_{L,-2 \theta} - \nicefrac{1}{4} )^2 = \frac{\theta}{64 \cdot n }$ %\frac{n + 1 - \theta}{2 \cdot 8^2 \cdot n }
and the variance conditional on the posterior  being greater than $\nicefrac{1}{2}$ equals $\nicefrac{1}{2} (b_{R,2n + 2 - 2 \theta}-\nicefrac{3}{4})^2 + \nicefrac{1}{2} (b_{R,- 2n - 2 + 2 \theta} - \nicefrac{3}{4} )^2 = \frac{n + 1 - \theta}{64 \cdot n } \,.$
By the law of the total variance the variance of $G^\theta$ thus equals $\frac{1}{2} \frac{\theta}{64 \cdot n } + \frac{1}{2} \frac{n + 1 - \theta}{64 \cdot n } + \frac{1}{2} \frac{1}{4}^2 + \frac{1}{2} \frac{1}{4}^2  = \frac{9n+1}{128\cdot n}$.
Since, this quantity does not depend on $\theta$, we conclude that each type derives equal utility from any signal and the mechanism is incentive compatible.
Each mean in the support $G^\theta$ persuades the agent to take an action that yields a payoff of $1$ to the designer. Hence, this mechanism with private signals yields a payoff of~$1$.
%$ \sum_{\theta \in \Theta} \phi(\theta)\bar{v} (m,\theta) d G^\theta(m) = 1$.
%and the claim follows.
% $1$
% \[
%     \sum_{\theta \in \Theta} \bar{v} (m,\theta) d G^\theta(m) = 1 \,.
% \]
%Thus, the claim follows.
\end{proof}

{
%\setstretch{1.0} %this saves half a page
\setlength{\bibsep}{0pt plus 0.ex}
\bibliographystyle{aer}
\bibliography{persuasion}
}

\newpage
\title{Online Appendix}
\maketitle

%\section{Additional proofs}

\section{Optimal Mechanisms with Large Laminar Depth}  \label{app:ZSexample}

\begin{figure}[p]
\centering
\begin{subfigure}[t]{\textwidth}
  \centering
  % include second image
  \caption{}
 \raisebox{.8cm}{\includegraphics[width=.4\linewidth]{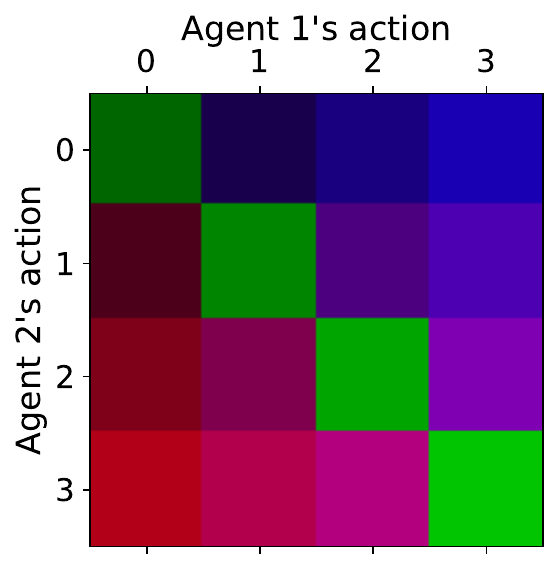}}
  
  \label{fig:sub-second}
\end{subfigure}
\newline
\vspace{.3cm}

\caption*{(b)}
\begin{tabular}{c c}
    \raisebox{.6cm}{ $\theta=(0,0)$:} & \includegraphics[width=13cm]{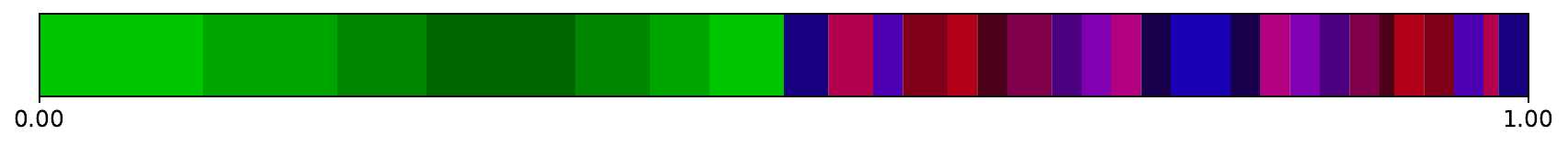}\\
\raisebox{.6cm}{ $\theta=(0,1)$:} & \includegraphics[width=13cm]{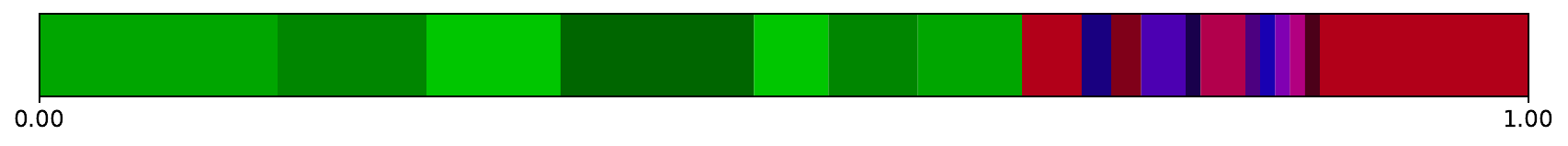}\\
\raisebox{.6cm}{ $\theta=(1,0)$:} & \includegraphics[width=13cm]{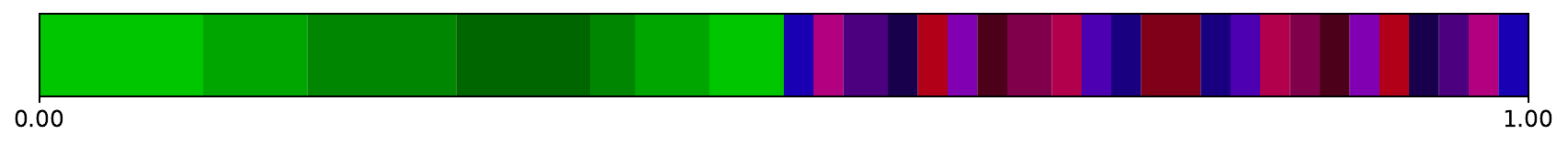}\\
\raisebox{.6cm}{ $\theta=(1,1)$:} & \includegraphics[width=13cm]{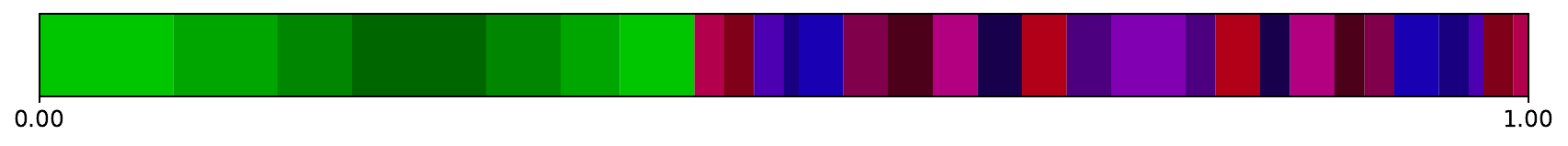}\\
\end{tabular}
    \label{fig:sub-third}
   \caption{
   Optimal mechanism.
   (a) shows the colors associated with each strategy profile.  (b) is the laminar partitional signals that constitute an optimal mechanism. Different shades of green encode the strategy profiles where the designer achieves nonzero payoff. For all type profiles, such strategy profiles are associated with smaller states. The laminar partitional signals in this example have depth 12.
    }
    \label{fig:ZS}
    \end{figure}

In this section we provide an example where all optimal laminar partitional mechanisms have depth exceeding $|\Theta| + 2$ (i.e., the depth in the single-agent case, see Proposition \ref{prop:Thm2ndPart}~(i)). 
There are two players $N=\{1,2\}$ with two possible types $\Theta_1=\Theta_2=\{0,1\}$ and four possible actions each $A_1=A_2=\{0,1,2,3\}$.
%We denote the types and actions  by
%$\theta_1,\theta_2\in \Theta=\{0,1\}$ 
%and 
%$a_1,a_2\in A=\{0,1,2,3\}$. 
For convenience, we order type profiles
and define a function $\delta:\Theta_1\times \Theta_2\rightarrow \{0,1,2,3\}$ such that
$\delta(0,0)=0$, $\delta(0,1)=1$,
$\delta(1,0)=2$, $\delta(1,1)=3$.
The players play a zero-sum game. 
The payoff matrix of the row player for the type profile $\theta = (\theta_1,\theta_2)$ is:
$\omega(I+P_{\delta(\theta)})$, where $P_k$ is the permutation matrix whose $(\ell_1,\ell_2)$-th entry is one if $\ell_2-\ell_1=k  \mathrm{~mod(4)}$.
The state  $\omega$ is distributed uniformly on $[0,1]$.
The type profile distribution is such that $\phi(0,0)=0.1$, $\phi(0,1)=0.2$, $\phi(1,0)=0.3$, $\phi(1,1)=0.4$.
The state and the types are distributed  independently. The designer's payoff is $1$ if $a_1=a_2$ and $0$  otherwise.

An optimal mechanism is   given in Figure \ref{fig:ZS}.
As can be seen from this figure  the depth of the laminar family supporting the optimal information structure is larger than
 $|\Theta| + 2=6$.
We numerically verified that any other laminar partitional mechanism mechanism that is optimal also has depth greater than $6$.
%Proposition \ref{prop:Thm2ndPart}(i).
%what is needed in optimal \emph{public} mean mechanisms.
Furthermore, 
when the number of actions is smaller (and the type space is the same) for any payoff structure laminar families of smaller depth suffice.
Conversely, when the number of actions is larger,  even with the same type space it is possible to obtain even deeper laminar families 
at the optimal mechanism
for variants of this example.

\section{A Finite-Dimensional Formulation for the Multi-Agent Case} \label{app:optFormMultiAgent}
In the single-agent case, when the agent has finitely many actions Section \ref{subse:finiteAction} established that it is possible to obtain the optimal mechanism by solving a finite-dimensional convex program. This simplification was partly driven by two factors: 
(i)  the agent can perfectly infer the posterior mean from the action recommendation
(ii) the posterior mean levels that induce a given action can be characterized explicitly given the agent's payoff function.
These factors allowed us to remove the recommended action from the problem and express it purely in terms of posterior means.
As these factors are not present in the multi-agent case, it is unclear whether one can obtain the optimal mechanism through a solution of a finite-dimensional optimization problem. We next argue that indeed through the solution of finite-dimensional programs it is possible to obtain an optimal mechanism, for as long as the agents have finitely many actions. 
%For our discussion, we focus on a parameterization of the payoffs similar to the one in Section \ref{subse:finiteAction}. 

Consider the formulation in \eqref{eq:optNewForm}. Note that for any given profile $\theta$ the distribution $q^\theta$ over type profiles determines the action profiles recommended at different posterior mean levels.
Since no action profile is recommended with positive probability at two different posterior mean levels, it means that action profiles are ordered according to the posterior mean levels that induce them. Denote by $\delta^\theta$ this order: $\delta^\theta(a)\geq \delta^\theta(a')$ if posterior mean that induces  $a$ is larger than that associated with $a'$ when the type profile is $\theta$.

Following an approach similar to the one in Section \ref{subse:finiteAction}, we can now express the designer's problem as follows:
\begin{equation*}% \label{eq:optPrivateMulti} 
	\begin{aligned}
\max_{\{\delta^\theta\}_\theta}	&	\max_{\substack{p  \in (\Delta^{|A|})^\Theta\\z \in \RR_+^{ |A| \times |\Theta| }\\ y_i \in \RR^{  |A_i| \times |\Theta|^2 } }}  \quad  \sum_{\theta \in \Theta} \phi(\theta) \, \sum_{a \in A} p_{a,\theta} v(a,\theta) \\
		s.t.\, & \sum_{a | \delta^\theta(a) \geq \delta^\theta(\ell) } z_{a,\theta} \leq 
		\int_{1-  \sum_{a | \delta^\theta(a) \geq \delta^\theta(\ell)  }p_{a,\theta} }^{1} F^{-1}(x) dx
		&&  \forall\,\theta\in \Theta,\ell > 1 , \\
&\sum_{{a\in A} } z_{a,\theta} = \int_\Omega  F^{-1}(x) dx &&  \forall\,\theta\in \Theta, \\
%        &\begin{aligned}
%        &\sum_{a' \in A}h_{\theta,a} p_{\theta',a'} + c_{\theta,a} \left({z_{\theta',a'}}-b_{\theta,a}{p_{\theta',a'}} \right)\\
%        &\hspace{2cm}\leq \sum_{a'\in A} h_{\theta,a'}{p_{\theta,a'}} + c_{\theta,a'} \left({z_{\theta,a'}}-b_{\theta,a'}{p_{\theta,a'}} \right)
%        \end{aligned}&&\forall\, \theta,\theta' \in \Theta, a,a' \in A\\
		&	\sum_{\theta_{-i}} \phi(\theta) \sum_{a_{-i}}	(u_{i1}(a,\theta) z_{a,\theta}+u_{i2}(a,\theta) p_{a,\theta}) 
		\\
		&\qquad \geq \sum_{\theta_{-i}} \phi(\theta) \sum_{a_{-i}}	(u_{i1}(a_i',a_{-i},\theta) z_{a,\theta}+u_{i2}(a_i',a_{-i},\theta) p_{a,\theta}) && \forall i, \theta_i, a_i, a_i'\\ 
	& y_{i,\theta_i,\theta_i',a_i}\geq	\sum_{\theta_{-i}} \phi(\theta) \sum_{a_{-i}}	(u_{i1}(a_i',a_{-i},\theta) z_{a,\theta_i',\theta_{-i}}+u_{i2}(a_i',a_{-i},\theta) p_{a,\theta_{i}',\theta_{-i}} ) && \forall i,\theta_i,\theta_i',a_i,a_i'\\
&\sum_{a_i} \sum_{\theta_{-i}} \phi(\theta) \sum_{a_{-i}}	(u_{i1}(a,\theta) z_{a,\theta}+u_{i2}(a,\theta) p_{a,\theta})  \geq 
\sum_{a_{i}} y_{i,\theta_i,\theta_i',a_i} && \forall i, \theta_i,\theta_i'\\
& {z_{a,\theta}}{p_{a',\theta}} \geq {z_{a',\theta}}{p_{a,\theta}}
&& \forall \theta,  \delta^\theta(a) \geq \delta^\theta(a')\\
& {z_{a,\theta}}\leq {p_{a,\theta}} && \forall a,\theta.
	\end{aligned}
\end{equation*}
In this optimization problem, $p_{a,\theta}$ denotes the probability with which strategy profile $a$ is induced when the type profile is $\theta$, and
$m_{\theta,a}=z_{a,\theta}/p_{a,\theta}$ 
is the corresponding posterior mean level. 
Note that $\{p_{a,\theta},z_{a,\theta}\}_a$ tuple constitutes a reparameterization of $G^\theta$.
For a given order $\delta^\theta$ on posterior mean levels,  the first two constraints amount to a restatement of the MPC constraint  $G^\theta \succeq F$. 
Note that if agents report their types truthfully and follow the action recommendations, the payoff of agent $i$ for type profile $\theta$ and action recommendation profile $a$ is given by $u_{i1}(a_i',a_{-i},\theta) z_{a,\theta_i',\theta_{-i}}/p_{a,\theta_{i}',\theta_{-i}}+u_{i2}(a_i',a_{-i},\theta) $. This implies that his expected payoff\footnote{As explained in the main text, this quantity is actually equal to the expected payoff times $\sum_{\theta_{-i}} \phi(\theta)$. With some abuse of terminology, throughout the online appendix we ignore this normalization and refer to such quantities as payoffs.} is given as in the left hand side of the third constraint. Similarly, the right hand side is the payoff from taking action $a_i'$. Thus, the third constraint ensures that if agents report their type truthfully and agent $i$ gets the action recommendation $a_i$, any deviation reduces his payoff.
Suppose that agent $i$ is of type $\theta_i$ but he misreported his type as $\theta_i'$, and received action recommendation $a_i$. Assuming all agents still truthfully report their types and follow action recommendations, what is $i$'s payoff from taking action $a_i'$? 
The right hand side of the fourth constraint captures this quantity.  At the optimal solution, the left hand side, $y_{i,\theta_i,\theta_i',a_i}$ equals the maximization of this quantity over $a_i'$, which is the best payoff $i$ can guarantee after the 
 type report $\theta_i'$ and action recommendation $a_i$.
Aggregating these terms  over all $i$ 
yields 
the right hand side of the fifth constraint, which is
the expected payoff of agent $i$ from misreporting his type as $\theta_i'$. The left hand side is the payoff from truthful reporting and following action recommendations. Thus, the fifth constraint ensures that agent $i$ has no incentive to misreport his type. The  sixth constraint
can be equivalently written as
$m_{\theta,a}=z_{a,\theta}/p_{a,\theta} \geq m_{\theta,a'}=z_{a',\theta}/p_{a',\theta}$. This ensures
  that
the $\{p_{a,\theta},z_{a,\theta}\}_a$ tuple and the associated distribution $G^\theta$ is consistent with $\delta^\theta$ in terms of the ranking of the posterior means of strategy profiles. Finally, the last constraint (together with the nonnegativity of $p_{a,\theta},z_{a,\theta}$) ensures that the posterior means are between $0$ and $1$.

To solve this problem, we can first fix  $\delta^\theta$ in the outer problem and solve the associated inner problem. Then, we can search over the orders $\delta^\theta$ (of which there are finitely many) in the outer problem.
There are two challenges with this approach. First, the number of orders to consider in the outer problem can be large. Second,   unlike the formulation in Section \ref{subse:finiteAction}, due to the sixth constraint the inner  problem is not convex.

It turns out that it is possible to overcome both challenges. Let us start with the second challenge. Despite the fact that the inner problem is nonconvex, a locally optimal solution can be obtained using, e.g., gradient ascent. If at a locally optimal solution, the nonconvex constraints are not binding then, it follows that the solution is locally optimal in the problem where these constraints are relaxed. However, the latter problem is convex and local optimality implies global optimality. Thus, the aforementioned solution is a globally optimal solution to the  inner problem. In all our numerical experiments (including the Cournot example discussed in Section \ref{se:motivatingExample}) this was the case, i.e., when we obtained a locally optimal solution using a solver, we observed that the nonconvex constraints did not bind and verified global optimality of the said solution.

The first challenge is problem specific, but the search can be drastically reduced in some cases. For instance, observe that in the Cournot example of Section \ref{subse:finiteAction}, there are $9$ strategy profiles, and naively there are $9!$ orders to consider. However, due to the symmetry in the problem it can be readily seen that the posterior means associated with strategy profiles $(a_i,a_j)$ and $(a_j,a_i)$ are identical. Furthermore, intuitively, posterior means associated with larger aggregate supply levels will be larger. That is $m_{\theta,a}>m_{\theta',a'}$ if $a_i+a_j>a_i'+a_j'$. Once this restriction imposed, together with symmetry the number of orders to consider reduces only to two (one where strategy profiles $(0,2), (2,0)$ are associated with higher posterior mean levels than $(1,1)$, and one with lower). 
Thus,
solving the inner problem for these two orders, and picking the solution
that results  in a higher payoff delivers  the optimal mechanism.
This is in fact how we obtain the optimal mechanisms in Section \ref{se:motivatingExample} (where we also numerically verify that imposing the aforementioned condition is without loss). Notably this approach allows for constructing the optimal mechanisms without discretization of the state space.
Using the approach described here the optimal solution to the optimization problem in Section \ref{se:motivatingExample} is obtained in  $\sim 20$ ms for most 
weighted combinations of $CS$ and $FP$ (using off-the-shelf interior point methods of the Knitro solver).% Moreover, our approach does not necessitate discretization of the state space.

\section{Additional Details for the Example in Section \ref{se:buyerExample}} \label{app:detailsForBuyerExample}

Here we revisit the example in Section \ref{se:buyerExample}. 
The indirect utility $\bar{u}(m,\theta)$ of the buyer 
in this example
is given in Figure~\ref{fig:indirect}.
When the expected quality $m$ of the good is low, all types find it optimal to purchase zero units, yielding a payoff of zero. As the expected quality improves, the purchase quantity increases.
In Figure~\ref{fig:indirect}, the curve for each type is piecewise linear, and the kink-points of each curve correspond to the posterior mean levels where the agent increases his purchase quantity.
Since the state and hence the posterior mean belong to $[0,1]$, the purchase quantity of each type is at most $2$ units, and each curve in the figure has at most two kink-points. {This is easily seen as the utility any buyer type derives from consuming the third unit of the good is bounded by $(\theta + \omega) \max\{ 5 -3 , 0\}  \leq (0.6 + 1) \cdot 2 = 3.2$ which is less than the price of $10/3$.}
\begin{figure}
        \centering
        \includegraphics[width=10cm]{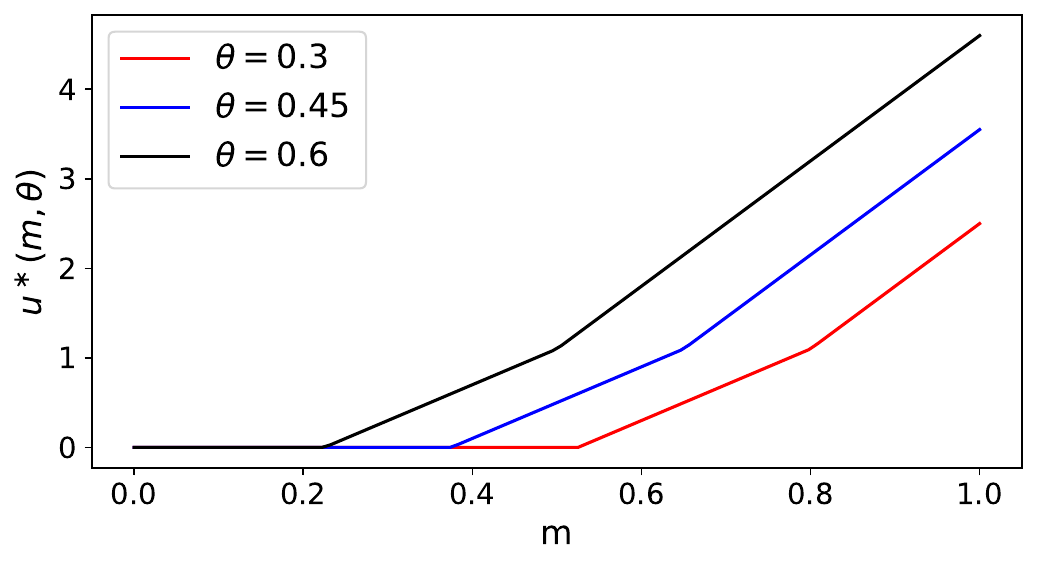}
        \caption{The indirect utility of the agent}
        \label{fig:indirect}
\end{figure}
These observations imply that in  this problem, the agent effectively considers finitely many actions, namely the quantities in $0,1$ and $2$.

The effect of the incentive compatibility constraints on the optimal mechanism are easily seen from  Figure \ref{fig:newEx}. For instance, the high type's payoff from a truthful type report is strictly positive. If this were the only relevant type, the designer could choose a strictly smaller threshold than $0.06$ and
still ensure purchase of two units
whenever state realization is above this threshold,  thereby
increasing the expected purchase amount of the high type. However, when the other types  are  also present, such a change in the signal of the high type incentivizes this type to deviate and misreport his type as low or medium. Changing the signals of the remaining types to recover incentive compatibility,  reduces the payoff the designer derives from them. The mechanism in Figure \ref{fig:newEx} maximizes the designer's payoff while carefully satisfying such incentive compatibility constraints.

As discussed in Section \ref{se:buyerExample}, in the binary action case it is without loss to focus on public  mechanisms (which do not elicit the agent's type).
In this case, one way to obtain an optimal public mechanism is to first solve for the optimal mechanism without restriction to public ones, and then reveal to each type the signals associated with all types.
By contrast, the mechanism illustrated in Figure \ref{fig:newEx} does not admit such a payoff-equivalent public implementation. For instance, under this mechanism the high type purchases two units whenever the state realization is higher than $0.06$. Suppose that this type of agent had access to the signals of, for instance, the low type as well. Then, he could infer whether the state is in $[0.06,0.16]\cup [0.94,1]$. Conditional on the state being in this set, his expectation of the state would  be approximately $0.43$. This implies that the expected payoff of the high type from purchasing the second unit is $(0.43+0.6)\times 3 -{10}/{3}<0$. Thus, for  state realizations that belong to the aforementioned set, the high type finds it optimal to strictly reduce his consumption (relative to the one in Figure \ref{fig:newEx}).
In other words, observing the additional signal reduces the expected purchase of the high type (and the other types). Hence, such a public implementation is strictly suboptimal.
As a side note, the optimal public implementation can be obtained by replacing different types with a single ``representative type'' and using the framework of Section \ref{se:analysis}.
More precisely, we can replace the designer's indirect utility with $\bar{v}(m)=\sum_\theta \phi(\theta)  \max_{a\in A(m,\theta)} v(a,m,\theta)$ 
and maximize $\int \bar{v}(m) dG(m)$ by choosing a distribution $G \succeq F$ (without any side constraints -- since  with public signals the designer does not screen the agent, and hence the IC constraints become irrelevant).
% (which 
% in the optimization formulation of Section \ref{se:analysis}.
We numerically conducted this exercise and also verified that  restricting attention to public mechanisms yields a strictly lower expected payoff to the designer.
\end{document}